\newcommand{\vs}[1]{^{(#1)}}
\newcommand{\blckdiag}{D}
\newcommand{\clustprod}{K\vs{1} \times \dots \times K\vs{V}}
\newcommand{\sumoverclusters}{ \sum_{k\vs{1}=1}^{K\vs{1}} \dots  \sum_{k\vs{V}=1}^{K\vs{V}}}
\newtheorem{definition}{Definition}[section]
\newtheorem{remark}{Remark}[section]
\newtheorem{theorem}{Theorem}[section]
\newtheorem{lemma}{Lemma}[section]
\newtheorem{corollary}{Corollary}[section]
\newtheorem{proposition}{Proposition}[section]
\newtheorem{assumption}{Assumption}[section]
\title{Learning Sparsity and Block Diagonal Structure in Multi-View Mixture Models}
\author{Iain Carmichael\footnote{Department of Statistics, University of Washington, Seattle, WA}}
\date{\today}
\begin{document}

\maketitle

\begin{abstract}
Scientific studies increasingly collect multiple modalities of data to investigate a phenomenon from several perspectives.
In integrative data analysis it is important to understand how information is heterogeneously spread across these different data sources.
To this end, we consider a parametric clustering model for the subjects in a \textit{multi-view} data set (i.e. multiple sources of data from the same set of subjects) where each view marginally follows a mixture model.
In the case of two views, the dependence between them is captured by a \textit{cluster membership matrix} parameter and we aim to learn the structure of this matrix (e.g. the zero pattern).
First, we develop a penalized likelihood approach to estimate the sparsity pattern of the cluster membership matrix.
For the specific case of block diagonal structures, we develop a constrained likelihood formulation where this matrix is constrained to be \textit{block diagonal up to permutations} of the rows and columns.
To enforce block diagonal constraints we propose a novel optimization approach based on the symmetric graph Laplacian.
We demonstrate the performance of these methods through both simulations and applications to data sets from cancer genetics and neuroscience.
Both methods naturally extend to multiple views.
\end{abstract}

\textbf{Keywords:} 
Multi-view data, integrative clustering, graph Laplacian, structured sparsity, EM-algorithm, model-based clustering, TCGA, neuron cell type


\section{Introduction} \label{s:intro}

Scientific studies often investigate a phenomenon from several perspectives by collecting multiple modalities of data.
For example, modern cancer studies collect data from several genomic platforms such as RNA expression, microRNA, DNA methylation and copy number variations \citep{cancer2012comprehensive, hoadley2018cell}.
Neuroscientists investigate neurons using transcriptomic, electrophysiological and morphological measurements \citep{tasic2018shared, gouwens2019classification, gouwens2020toward}.
These integrative studies require methods to analyze \textit{multi-view} data: a fixed set of observations with several disjoint sets of variables (views). 

Classical multi-view methods such as \textit{canonical correlation analysis} for dimensionality reduction estimate joint information shared by all views \citep{hotelling1936relations}.
Similarly, many multi-view clustering methods assume there is one \textit{consensus} clustering (see Figure \ref{fig:toy_data_joint} below) that is present in each data-view \citep{shen2009integrative, kumar2011co, kirk2012bayesian, lock2013bayesian, gabasova2017clusternomics, wang2019integrative}. 
A singular focus on joint signals ignores the possibility that information is heterogeneously spread across the views.
For example, environmental factors might show up in a clinical data-view, but not in a genomic data-view.
Contemporary multi-view methods examine how information is shared (or not shared) by different views.
Recent work in dimensionality reduction looks for \textit{partially shared} latent signals \citep{lock2013joint, klami2014group, zhao2016bayesian, gaynanova2017structural, feng2018angle}. 
Similarly, recent multi-view clustering methods investigate how clustering information is spread across multiple views \citep{hellton2016integrative, gao2019clusterings, gao2019testing}.

A motivating example comes from breast cancer pathology where investigators study tumors using both genomic and  histological\footnote{Meaning a doctor or algorithm visually examines an image of a tumor biopsy.} information \citep{carmichael2019joint}.
Breast cancer tumor subtypes can be defined using either genomic information (e.g. the PAM50 molecular subtypes \citealt{parker2009supervised}) or histological information (e.g. high, medium or low grade \citealt{hoda2020rosen}).
Some cluster information may be jointly shared by both data views e.g. if histological subtype 1 corresponds to exactly genomic subtype 1. 
Other information may be contained in one view but not another view e.g. if histological subtype 2 correspond to genomic subtypes 2 and 3.
See Figure \ref{fig:motiv_ex_pi} below.

We develop an approach to learn how information is spread across views in a multi-view mixture model (MVMM) \citep{bickel2004multi}.
This model, detailed in Section \ref{s:mvmm}, makes two assumptions for a $V \ge 2$ view data set:
\begin{enumerate}
\item Marginally, each view follows a mixture model i.e. there are $V$ sets of \textit{view-specific clusters}.
\item The views are independent given the marginal view cluster memberships.
\end{enumerate}
We further assume there may be some kind of ``interesting relationship" between clusters in different views.
For example, in a two-view data set every observation has two (hidden) cluster labels $(y\vs{1}, y\vs{2}) \in [K\vs{1}] \times [K\vs{2}]$ where $K\vs{v}$ is the number of clusters in the $v$th view and $[K] := \{1, \dots, K\}$.
The joint distribution of the cluster labels is described by the \textit{cluster membership probability matrix} $\pi \in \mathbb{R}^{K\vs{1} \times K\vs{2}}_+$ where
\begin{equation*}
\pi_{k\vs{1}, k\vs{2}} = P(y\vs{1}=k\vs{1}, y\vs{2}=k\vs{2}), \text{ for } k\vs{1} \in [K\vs{1}] \text{ and } k\vs{2} \in [K\vs{2}].
\end{equation*}

The structure of this matrix captures how information is shared between the two views.
Figure \ref{fig:motiv_ex_pi} shows a hypothetical $\pi$ matrix.
Many of the entries are zero, meaning, for instance, an observation cannot be simultaneously in cluster 1 in the first view and cluster 2 in the second view.
In this example, cluster 1 in the first view is exactly the same as cluster 1 in the second view; this information is shared jointly by both views.
On the other hand, cluster 3 in the second view breaks up into clusters 3, 4 and 5 in the first view; here there is information in the first view that is not contained in the second view.
In general $\pi$ may be anywhere from rank 1 (i.e. the views are independent thus share no information) to diagonal (the consensus clustering case where the views contain the same information).
The goal of this paper is to learn the structure of $\pi$ while simultaneously learning the cluster parameters (e.g. cluster means).

\begin{figure}[H]
\centering
\begin{subfigure}[t]{0.49\textwidth}
\centering
\includegraphics[width=.8\linewidth, height=.8\linewidth]{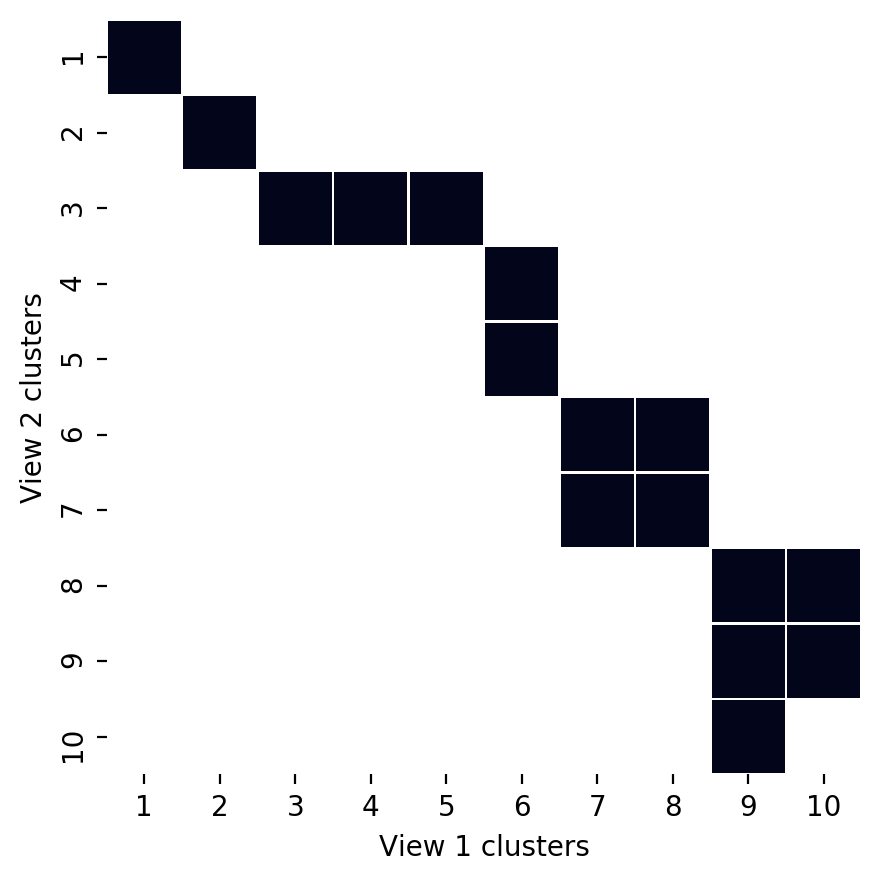}
\caption{
An example block diagonal $\pi$ matrix.
}
\label{fig:motiv_ex_pi}
\end{subfigure}
\begin{subfigure}[t]{0.49\textwidth}
\centering
\includegraphics[width=.48\linewidth, height=.8\linewidth]{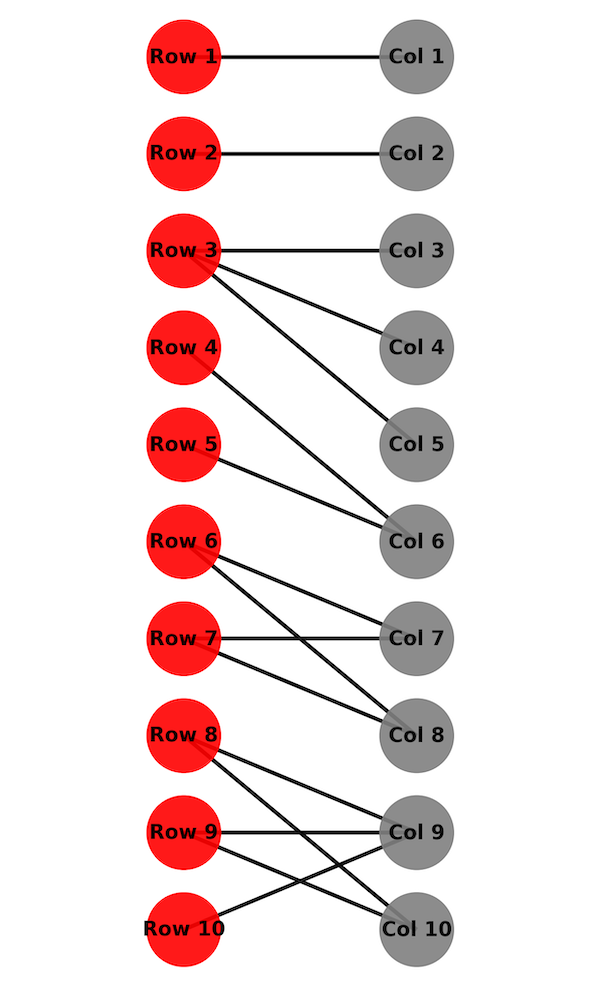}
\caption{
The bipartite graph whose node sets are the rows and columns of $\pi$ and whose edges the entries of $\pi$.
}
\label{fig:motiv_ex_pi_bipt_graph}
\end{subfigure}
\caption{
The matrix $\pi$ (Figure \ref{fig:motiv_ex_pi}) captures the between-view dependence.
$\pi$ can equivalently be thought of as a bipartite graph (Figure \ref{fig:motiv_ex_pi_bipt_graph}).
The connected components of this graph represent ``clusters of clusters" that are related to each other.
Note $\pi$ can be rectangular in general.
}
\label{fig:motiv_ex_pi_mats}
\end{figure}

Section \ref{s:mvmm} formalizes the multi-view mixture model outlined above.
Section \ref{s:sparsity_log_pen} presents a penalized likelihood approach making use of the concave $\log(\cdot + \delta)$  penalty to estimate the zero-pattern of $\pi$.
Section \ref{s:block_diag} considers the case when $\pi$ has block diagonal structure and formulates a block diagonally constrained maximum likelihood version of the MVMM.
This section develops an alternating minimization approach for imposing block diagonal matrix constraints in general via the symmetric Laplacian.
A detailed discussion of this alternating algorithm and convergence results are provided in Section \ref{s:wnn}.
An extension of this approach to block diagonal multi-arrays is sketched in Section \ref{s:multiarray}.
Section  \ref{s:simulations} presents a simulation study of the methods developed in this paper.
Section \ref{s:real_data} applies these methods to the TCGA breast cancer data set and an excitatory mouse neuron data set. 
The main algorithmic ideas are presented in the body of the paper and detailed discussions are provided in the appendix. 
Proofs and additional simulations are also provided in the appendix.

The methods developed in this paper are implemented in a publicly available python package \url{www.github.com/idc9/mvmm}.
Code to reproduce the simulations as well as supplementary data and figures can be found at \url{www.github.com/idc9/mvmm_sim}.
The code makes use of the following python packages: \cite{hunter2007matplotlib, mckinney2010data, walt2011numpy, pedregosa2011scikit, diamond2016cvxpy, waskom2017seaborn, cameron2020lifelines, scipy2020virtanen}.

\subsection{Summary of contributions and related work}

We develop two novel methods that explore how information is shared between views in the parametric multi-view mixture model of \cite{bickel2004multi}.
Both methods impose interpretable structure  --- sparsity (Section \ref{s:sparsity_log_pen}) or block diagonal constraints (Section  \ref{ss:mvmm_block_diag}) --- on the cluster membership matrix.
They also lead to challenging optimization issues.
Our approaches to address these challenges are of interest in applications beyond this paper.

Many existing multi-view clustering methods focus on the consensus clustering case (see reference above).
While the consensus clustering case is a special case of the MVMM when $\pi$ is diagonal, our method allows for more flexible relations among the clusters in each view.
The work of \cite{gao2019clusterings, gao2019testing} takes an important step beyond consensus clustering by developing a test for independence between the views in a two-view MVMM.

The \textit{joint and individual clustering} (JIC) method developed by \cite{hellton2016integrative} is a multi-view clustering algorithm based on dimensionality reduction using JIVE \citep{lock2013joint}.
JIC identifies information that is either shared by all views (joint clusters) or is only contained in one view (individual clusters).
An immediate difference between JIC and our methods is that we work with parametric mixture models while JIC is based on dimensionality reduction.
Moreover, our methods take a different perspective on how information can be shared among views (see Footnote \ref{fn:jic}).

The $\log(\cdot + \delta)$ penalized likelihood approach adopted in Section \ref{s:sparsity_log_pen} was developed in \cite{huang2017model} for (single view) mixture-model model selection.
To fit mixture models with this penalty \cite{huang2017model} suggests an EM algorithm where the M-step is approximated with a soft-thresholding operation.
This soft-thresholding approximation --- based on a heuristic argument --- is used by a number of other papers \citep{yao2018robust,  yu2019new, bugdary2019online} and similar approximations appear elsewhere \citep{gu2019learning}.
We provide rigorous justification for this soft-thresholding approximation and show the algorithm is insensitive to the choice of $\delta$ for small values of $\delta$ (Theorem \ref{thm:approx_soft}).

The task of learning model parameters with block diagonal structure arises in a variety of contexts including: graphical models \citep{marlin2009sparse, tan2015cluster, devijver2018block, kumar2019unified}, co-clustering \citep{han2017bilateral, nie2017learning}, subspace clustering \citep{feng2014robust, lu2018subspace}, principal components analysis \citep{asteris2015sparse}, and community detection \citep{nie2016constrained}.
Learning parameter values and block diagonal structure simultaneously is a combinatorial problem that is generally intractable except in certain special cases \citep{asteris2015sparse}.
Block diagonal constraints are often enforced with continuous optimization approaches using the unnormalized graph Laplacian \citep{nie2016constrained, nie2017learning}.

Sections \ref{s:block_diag} and \ref{s:wnn} develop an approach to impose block diagonal constraints via the symmetric graph Laplacian.
This approach avoids the strong modeling assumptions --- that the row and column sums are known ahead of time --- required by the unnormalized Laplacian \citep{nie2016constrained, nie2017learning}.
By making use of an extremal characterization of generalized eigenvalues we provide an alternating algorithm for the penalized symmetric Laplacian Problem \eqref{prob:bd_eval_pen} that is no more computationally burdensome than the analogous problem with the unnormalized Laplacian (see Section \ref{s:un_lap_bad}).

\subsection{Notation} 

A multi-view random vector $x \in \mathbb{R}^{\sum_{v=1}^V d\vs{v}}$ is a random vector where the variables have been partitioned into $V$ mutually exclusive sets of sizes $d\vs{1}, \dots, d\vs{V}$.
We write $x\vs{v} \in \mathbb{R}^{d\vs{v}}$ for the $v$th view i.e. $x$ is the concatenation of the $x\vs{1}, \dots, x\vs{V}$. 
We use superscript parenthesis, e.g. $x\vs{v}$, to reference quantities related to a particular view.

For a matrix $V \in \mathbb{R}^{R \times C}$ let $V(r, :) \in \mathbb{R}^C$ denote the $r$th row and let $V(:, c) = V_c \in \mathbb{R}^R$ denote the $c$th column.
For $v \in \mathbb{R}^n$, let $\text{diag}(v) \in \mathbb{R}^{n \times n}$ be the diagonal matrix whose diagonal elements are given by $v$.
Let $\mathbf{1}_n \in \mathbb{R}^n$ be the vector of ones.
For a set $A \subseteq [n]$ let $\mathbf{1}_A \in \{0, 1\}^n$ denote the vector with 1s in the entries corresponding to elements of $A$ and 0s elsewhere.
The indicator function, $I: \mathbb{R}^n \to \{0\} \cup \infty$, of a set $\mathcal{C} \subseteq \mathbb{R}^n$ is defined by  $I(x) = 0$ if $x \in \mathcal{C}$ and $I(x) = \infty$ if $x \not\in \mathcal{C}$.
For vectors $a, b$ let $a \odot b$ denote the Haadamard (element-wise) product.

For a symmetric matrix $A \in \mathbb{R}^{n \times n}$ we write $\lambda_1(A) \ge \lambda_2(A) \ge \dots$ for the eigenvalues sorted in decreasing order and  $\lambda_{(1)}(A) \le \lambda_{(2)}(A) \le \dots$ for the eigenvalues sorted in increasing order.
For two symmetric matrices $A, B \in \mathbb{R}^{n \times n}$ we write $\lambda_{1}(A, B) \ge \lambda_{2}(A, B) \ge \dots$ for the \textit{generalized eigenvalues} (i.e. numbers $\lambda$ where there exists a $v \in \mathbb{R}^{n}$ such that $A v =  \lambda B v$ with the normalization $v^T B v = 1$).


\section{Multi-view mixture model specification} \label{s:mvmm} 

This section describes a multi-view mixture model for $V \ge 2$ views \citep{bickel2004multi, gao2019clusterings}.
This model assumes that marginally, each view follows a mixture model and that the views are conditionally independent given the view cluster memberships.

In detail, let $x\vs{v} \in \mathbb{R}^{d\vs{v}}$ denote the random vector for the $v$th view. 
In the $v$th view there are $K\vs{v}$ view-specific clusters and let $y\vs{v} \in [K\vs{v}]$ denote latent, view specific membership assignment for the $v$th view.
Let
\begin{equation*}
f(x\vs{v} | y\vs{v} =k) = \phi\vs{v}(x | \Theta\vs{v}_k) \text{ for }  k \in [K\vs{v}], v \in [V],
\end{equation*}
be the conditional distribution of the $k$th cluster in the $v$th view where $ \phi\vs{v}(\cdot | \theta)$ is a density function with parameter $\theta$ (e.g. cluster means).
Also let
\begin{equation*}
P\left(y = (k\vs{1}, \dots, k\vs{V}) \right) = \pi_{k\vs{1}, \dots, k\vs{V}}  \text{ for } k\vs{v} \in [K\vs{v}], v \in [V],
\end{equation*}
be the joint distribution of the view specific labels where $y = (y\vs{1}, \dots, y\vs{V})\in \mathbb{Z}^V_+$ is the latent cluster membership vector and $\pi \in \mathbb{R}^{\clustprod}$ is the cluster membership probability multi-array (non-negative entries summing to 1). 
Then the probability density function of the joint distribution is
\begin{equation}\label{eq:mvmm_joint_pdf}
f(x, y = (k\vs{1}, \dots, k\vs{V}) | \Theta, \pi ) =   \pi_{k\vs{1}, \dots, k\vs{V}}  \prod_{v=1}^V  \phi\vs{v}(x\vs{v} | \Theta\vs{v}_{k\vs{v}}),
\end{equation}
where $\Theta:= \{ \{\Theta\vs{v}_{k} \}_{k=1}^{K\vs{v}}  \}_{v=1}^V$ is the collection of view specific cluster parameters.
We further assume that the marginal view-specific cluster probabilities are strictly positive, i.e.
\begin{equation}
0 <  \pi\vs{v}_{\textbf{k}} := P(y\vs{v} = \textbf{k}) = \sum_{j\vs{1}=1}^{K\vs{1}} \dots  \sum_{j\vs{v-1}=1}^{K\vs{v-1}} \sum_{j\vs{v+1}=1}^{K\vs{v+1}}  \dots \sum_{j\vs{V}=1}^{K\vs{V}}  \pi_{j\vs{1}, \dots, j\vs{v-1}, \textbf{k}, j\vs{v+1} \dots j\vs{V}}
\end{equation}
for each $k \in [K\vs{v}], \text{ and } v \in [V]$. 

The marginal distribution of the $v$th view, $x\vs{v}$, is a mixture model with $K\vs{v}$ view-specific clusters (Figure \ref{fig:toy_data_view_marginal}).
The joint distribution, $x$, is a mixture model with $|\text{supp}(\pi)| \in [\min_{v \in [V]} (K\vs{v}), \prod_{v=1}^VK\vs{v}]$ \textit{overall clusters} (Figures \ref{fig:toy_data_indep}-\ref{fig:toy_data_joint}).
In other words, looking at the joint distribution there is one set\footnote{\label{fn:jic} In the JIC model the view joint distribution has $V + 1$ sets of clusters for a $V$-view data set --- one set of joint clusters and $V$ sets of view-individual clusters. For details see \citep{hellton2016integrative}.} of $|\text{supp}(\pi)|$ clusters, but the clusters share parameters.

\begin{remark}
This model promotes parameter sharing; if $\pi$ is dense, the number of overall clusters scales multiplicatively (e.g. like $O(K^V)$) in the number of view  marginal clusters while the number of cluster parameters (e.g. cluster means) scaled additively (e.g. like $O(VK)$).
\end{remark}

Figure \ref{fig:toy_data} shows three scenarios for a $V=2$ view data set. 
Both views are one dimensional and marginally follow a \textit{Gaussian mixture model} (GMM) with $K\vs{1} = K\vs{2}=10$ clusters (Figure \ref{fig:toy_data_view_marginal}).
In the first scenario (Figure  \ref{fig:toy_data_indep}) there is no information shared between the two views; $\pi$ is a rank 1 matrix.
In the third scenario (Figure \ref{fig:toy_data_joint}) the two views capture the same information i.e. the clusters in the first view are the same clusters as the clusters in the second view.
Here $\pi$ is a diagonal matrix (after appropriately permuting the cluster labels).
In the second scenario (Figure  \ref{fig:toy_data_partial}) the two views have partially overlapping information.
In this scenario $\pi$ is the block diagonal matrix shown in Figure \ref{fig:motiv_ex_pi} above.

\begin{figure}[H]
 \centering
 \begin{subfigure}[t]{0.24\textwidth}
\includegraphics[width=\linewidth, height=\linewidth]{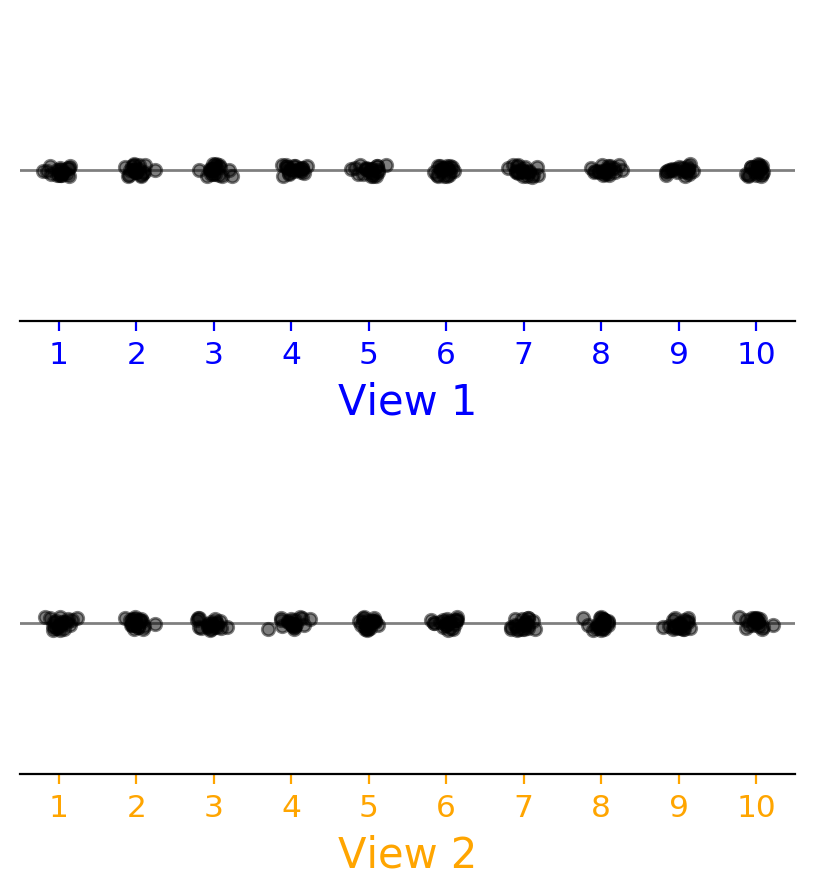}
\caption{
Marginally, each view follows a mixture model with $10$ clusters.
}
\label{fig:toy_data_view_marginal}
\end{subfigure}
\hfill
\begin{subfigure}[t]{0.24\textwidth}
\includegraphics[width=\linewidth, height=\linewidth]{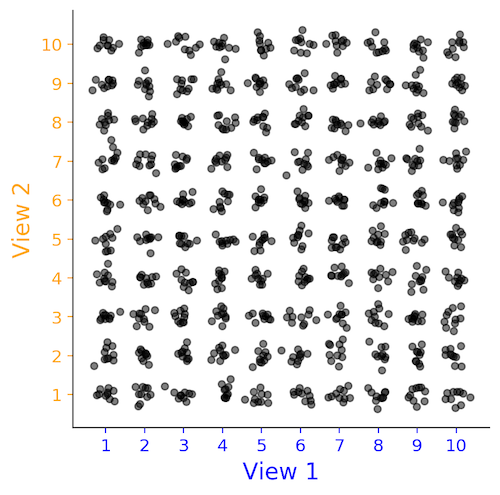}
\caption{
Scenario 1: Independent views; $100$ overall clusters.
}
\label{fig:toy_data_indep}
\end{subfigure}
\hfill
\begin{subfigure}[t]{0.24\textwidth}
\includegraphics[width=\linewidth, height=\linewidth]{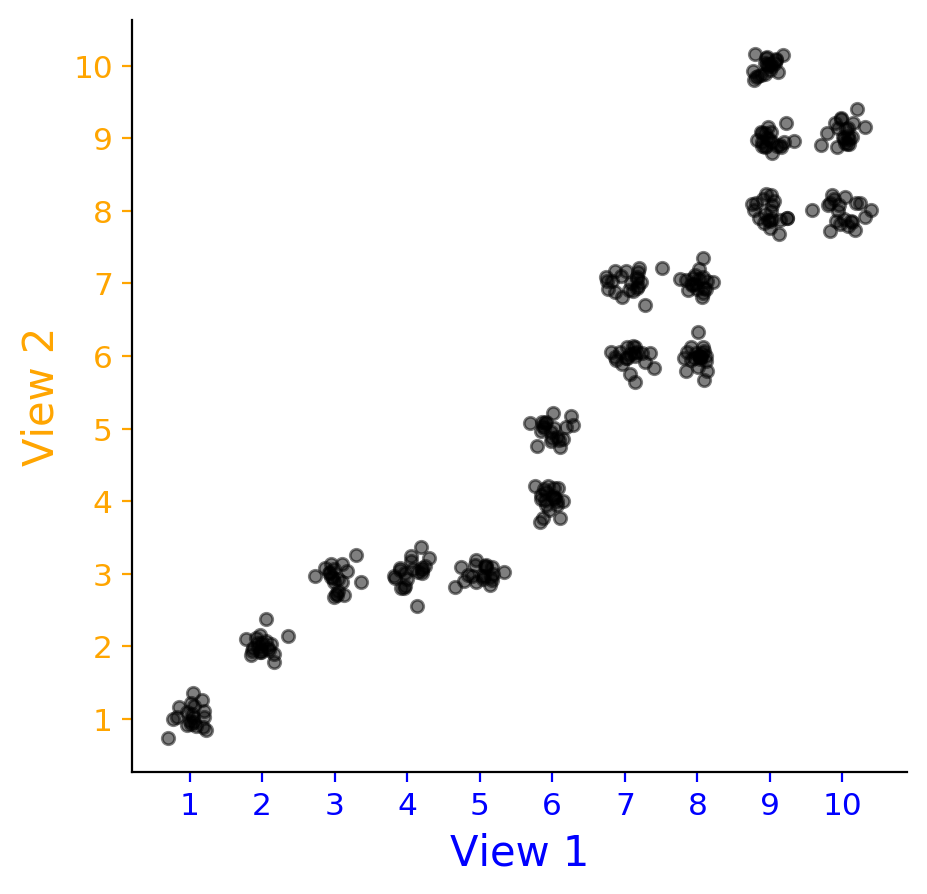}
\caption{
Scenario 2: Partial dependence between views; $16$ overall clusters.
}
\label{fig:toy_data_partial}
\end{subfigure}
\hfill
\begin{subfigure}[t]{0.24\textwidth}
\includegraphics[width=\linewidth, height=\linewidth]{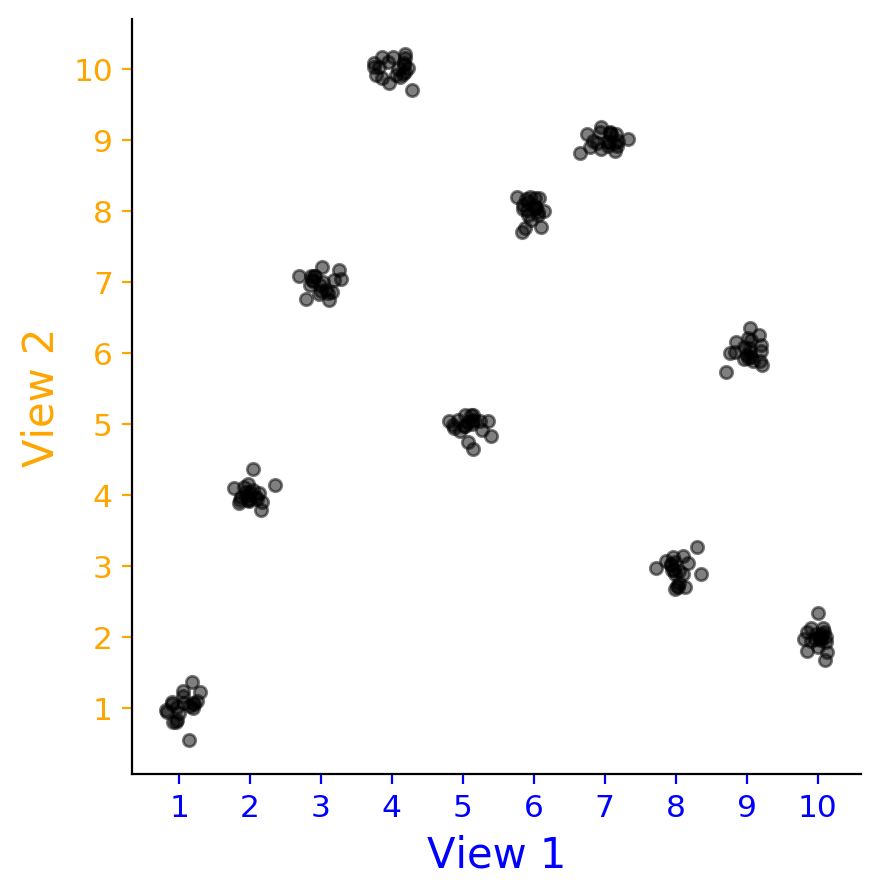}
\caption{
Scenario 3: Strong dependence between views; $10$ overall clusters.
This is the consensus clustering case. 
}
\label{fig:toy_data_joint}
\end{subfigure}
\caption{
Three scenarios for the joint distribution of two views.
The marginal distribution of each view is a one dimensional mixture model (Figure \ref{fig:toy_data_view_marginal}).
In Figure \ref{fig:toy_data_indep} all possible combinations of view 1 and view 2 clusters occur with equal probability.
In Figure \ref{fig:toy_data_partial} only some combinations of view 1 and view 2 clusters are possible.
In Figure \ref{fig:toy_data_joint} each cluster in the first view is matched with exactly one cluster in the second view.
}
\label{fig:toy_data}
\end{figure}

Suppose we are given $n$ samples $\{x_i \}_{i=1}^n$ with $x_i \in \mathbb{R}^{\sum_{v=1}^V d\vs{v}}$ from a $V$-view data set and have specified the number of view-specific clusters $K\vs{1}, \dots, K\vs{V}$.
If no additional assumptions are placed on $\pi$, we fit the model by maximizing the log likelihood of the observed data 
 \begin{equation}\label{eq:data_log_lik}
 \ell( \{x_i \}_{i=1}^n| \Theta, \pi) := \sum_{i=1}^n \log f(x_i | \Theta, \pi)
 \end{equation}
using an EM algorithm \citep{dempster1977maximum} that is detailed in Section \ref{ss:mvmm_em}, where
 \begin{equation}\label{eq:obs_data_pdf}
f(x | \Theta, \pi) :=  \sumoverclusters \pi_{k\vs{1}, \dots, k\vs{V}}  \prod_{v=1}^V  \phi\vs{v}(x\vs{v} | \Theta\vs{v}_{k\vs{v}})
\end{equation}
is the probability density function of the observed data.
The remainder of this paper focuses on simultaneously estimating the model parameters, $ \Theta, \pi$, as well as the sparsity structure of $\pi$.


\section{Sparsity inducing log penalty} \label{s:sparsity_log_pen}

This section develops a penalized likelihood approach to estimate the sparsity structure of $\pi$ that avoids the exponential search space of naive enumeration.
We assume  the number of view specific clusters, $K\vs{1}, \dots, K\vs{V}$, have been specified.

Consider fitting a standard, single-view mixture model with a sparsity inducing penalty $p(\cdot)$ (e.g. Lasso or SCAD) on the entries of the cluster membership probability vector, $\pi \in \mathbb{R}^{K}_+$. 
This raises several issues.
First, the Lasso penalty is constant since $\pi$ lives on the unit simplex.
Second, exact zeros in $\pi$ give a negative infinity the complete data log likelihood \eqref{eq:mvmm_joint_pdf}, though this issue does not arise in the observed data log-likelihood \eqref{eq:data_log_lik}.
If we use an EM algorithm to maximize the observed data log-likelihood the M-step involves the following optimization problem 
\begin{equation} \label{eq:single_view_m_step}
\begin{aligned}
& \underset{\pi \in \mathbb{R}^K}{\text{minimize}} & & - \sum_{k=1}^K a_k  \log(\pi_k) + \lambda \sum_{k=1}^K p(\pi_k) \\
& \text{subject to } & &  \pi \ge 0 \text{ and } \pi^T \mathbf{1}_K = 1,
\end{aligned}
\end{equation}
where $a \in \mathbb{R}^K_+$ is the output of the E-step (i.e. the expected cluster assignments).
The log in the first term of the objective function acts as a barrier function that prevents the solution from having zeros.

\citet{huang2017model} provides theoretical justification for using the penalty $p(\cdot) = \log(\delta + \cdot)$ for some small $\delta > 0$. 
The following theorem further justifies the use of this penalty for small $\delta$ by showing that we can approximate the solution with a quantity that has exact zeros.
This theorem also leads to a computationally efficient approximation for the M-step and suggests that the penalty is insensitive to the choice of $\delta$ for small values of $\delta$.

\begin{theorem} \label{thm:approx_soft}
Let $a_1, \dots, a_K \ge 0$, $\sum_{k=1}^K a_k =1$, and $0 < \lambda < \frac{1}{K}$.
Let  $z^{\delta} \in \mathbb{R}_{+}^K$ be a solution of the following problem for fixed $\delta > 0$,
\begin{equation}\label{eq:log_sparsity_problem_with_epsilon}
\begin{aligned}
& \underset{z}{\textup{minimize}}  & & - \sum_{k=1}^K a_k  \log(z_k)  + \lambda \sum_{k=1}^K  \log(\delta + z_k)  \\
& \textup{subject to} & & z \ge 0 \text{ and } z^T \mathbf{1}_K = 1.\\
\end{aligned}
\end{equation}
Then $\lim_{\delta \to 0} z^{\delta} = z^0  \in \mathbb{R}^{K}$ where
\begin{equation} \label{eq:log_sparsity_problem_pi}
z_k^{0} := \frac{(a_k - \lambda)_+}{\sum_{j=1}^K (a_j - \lambda)_+}  \text{ for each } k \in [K].
\end{equation}

\end{theorem}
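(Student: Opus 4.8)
The plan is to attack the problem through its Karush--Kuhn--Tucker (KKT) conditions, derive a closed form for each coordinate of $z^\delta$ as a function of the single Lagrange multiplier for the simplex constraint, and then pass to the limit $\delta \to 0$. First I would establish that a minimizer exists for every fixed $\delta > 0$: writing $F_\delta$ for the objective in \eqref{eq:log_sparsity_problem_with_epsilon}, it is lower semicontinuous on the compact simplex $\{z \ge 0,\ z^T \mathbf{1}_K = 1\}$ (finite and smooth in the relative interior, and $+\infty$ only at boundary points where some $z_k = 0$ with $a_k > 0$), so it attains its infimum. Because solutions need not be unique and the theorem asserts convergence of \emph{any} choice of solution, I would argue along subsequences: take $\delta_n \to 0$ and minimizers $z^{\delta_n}$, extract a convergent subsequence $z^{\delta_n} \to z^\star$ by compactness, and show that every such limit equals $z^0$.

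The constructive heart is the stationarity condition. Since $-a_k\log z_k$ acts as a barrier, any minimizer has $z_k^\delta > 0$ whenever $a_k > 0$; introducing a multiplier $\mu$ for the equality constraint, each such coordinate satisfies $\frac{a_k}{z_k} = \frac{\lambda}{\delta + z_k} + \mu$. Clearing denominators turns this into the quadratic $\mu z_k^2 + (\lambda + \mu\delta - a_k)z_k - a_k\delta = 0$, whose unique nonnegative root is
\[
z_k^\delta = \frac{(a_k - \lambda - \mu\delta) + \sqrt{(a_k - \lambda - \mu\delta)^2 + 4\mu a_k \delta}}{2\mu}.
\]
Letting $\delta \to 0$ with $\mu \to \mu^\star$ collapses the radical to $|a_k - \lambda|$, so $z_k^\delta \to (a_k - \lambda)_+/\mu^\star$. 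The multiplier is then pinned down by feasibility: summing these limits and imposing $\sum_k z_k = 1$ forces $\mu^\star = \sum_{j=1}^K (a_j - \lambda)_+$, which is strictly positive precisely because $\lambda < 1/K$ and $\sum_k a_k = 1$ guarantee $\max_k a_k \ge 1/K > \lambda$. Substituting $\mu^\star$ recovers exactly \eqref{eq:log_sparsity_problem_pi}.

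The main obstacle is that the objective is \emph{not convex} (the penalty $\lambda\sum_k\log(\delta + z_k)$ is concave), so the KKT conditions are only necessary and I must both rule out the global minimizer drifting toward a spurious stationary configuration and control the two sources of boundary behavior. Concretely I would (i) confine the multiplier $\mu^{\delta_n}$ to a compact subinterval of $(0,\infty)$: the quantity $\sum_k z_k^\delta(\mu)$ from the closed form is decreasing in $\mu$ and must equal $1$, which bounds $\mu$ above (otherwise every coordinate is too small) and away from $0$ (otherwise they are too large), so $\mu^{\delta_n} \to \mu^\star > 0$ along the subsequence; and (ii) show that the sub-threshold coordinates vanish — coordinates with $0 < a_k \le \lambda$ obey the closed form and hence tend to $0$, while coordinates with $a_k = 0$ (which carry no barrier) can be forced to $z_k^\delta = 0$ at the optimum by a perturbation argument, since moving their mass onto an above-threshold coordinate strictly lowers the objective for small $\delta$. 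The cleanest way to close the non-convexity gap is a value comparison: bound $F_\delta(z^\delta)$ above by $F_\delta(w^\delta)$ for a smooth competitor $w^\delta$ that places mass $O(\delta)$ on the would-be-zero coordinates of $z^0$, and combine this with the stationary form to force $z^\star = z^0$. I expect the most delicate point to be verifying that the global minimizer sits on the \emph{locally convex} branch of the per-coordinate stationarity rather than a concave one; the monotonicity analysis of $z \mapsto \frac{a_k}{z} - \frac{\lambda}{\delta + z}$ is what makes this go through.
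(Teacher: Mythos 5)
Your proposal follows essentially the same route as the paper's proof: existence of a minimizer by compactness, KKT stationarity yielding the identical per-coordinate quadratic $\mu z_k^2 + (\lambda + \mu\delta - a_k)z_k - a_k\delta = 0$ and its nonnegative root, a uniform-in-$\delta$ bound on the multiplier, and passage to the limit identifying $\mu^\star = \sum_{j=1}^K (a_j - \lambda)_+$ from the simplex constraint. The differences are only in technical bookkeeping — you bound the multiplier via monotonicity of $\mu \mapsto \sum_k z_k(\mu)$ and add explicit care for the $a_k = 0$ coordinates and the non-convexity (which the paper handles implicitly, since every global minimizer is a KKT point and all KKT points share the closed form), whereas the paper bounds $\eta$ directly by an inequality derived from the constraint equation — so this is the same argument, not a genuinely different one.
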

This theorem says that for small $\delta$ the global minimizer of \eqref{eq:log_sparsity_problem_with_epsilon} is close to the normalized soft-thresholding operation \eqref{eq:log_sparsity_problem_pi}.
The condition $\lambda < \frac{1}{K}$ guarantees the denominator of \eqref{eq:log_sparsity_problem_pi} is non-zero. 
The soft-thresholding approximation presented in this theorem is proposed by \cite{huang2017model} and used as a heuristic \citep{yao2018robust,  yu2019new, bugdary2019online}; we prove Theorem \ref{thm:approx_soft} in Section \ref{ss:proof_log_pen}.

Returning to the MVMM, we consider the following penalized likelihood problem
\begin{equation} \label{eq:log_pen_max_log_lik}
\begin{aligned}
&\underset{\Theta, \pi}{\text{maximize}}& &  \ell( \{x_i\}_{i=1}^n|  \Theta, \pi)- \lambda \left(\sumoverclusters \log(\delta + \pi_{k\vs{1}, \dots, k\vs{V}}) \right),
\end{aligned}
\end{equation}
where $\ell$ is the observed data log likelihood \eqref{eq:data_log_lik} and $\delta >0$ is a small value.
This problem can be solved with an EM algorithm similar to the one derived for the unpenalized model.
The M-step of this EM algorithm solves a problem in the form of \eqref{eq:log_sparsity_problem_with_epsilon}. Based on Theorem \ref{thm:approx_soft} we approximate the M-step using the normalized soft-thresholding operation.
Details for this algorithm can be found in Section \ref{ss:log_pen_em}.


\section{Enforcing block diagonal constraints}\label{s:block_diag}

This section presents a constrained maximum likelihood approach to estimate $\pi$ under the restriction that $\pi$ has a block diagonal structure.
Sections \ref{ss:sym_lap} and \ref{ss:block_diag_opt} discuss optimization with block diagonal constraints in a general setting.
Section \ref{ss:mvmm_block_diag} presents the particular case of the multi-view mixture model.

For a fixed matrix we have to be careful about what ``block diagonal" means i.e. one could argue that the matrix $\text{diag}([1, 1, 0])$ has either 1, 2, or 3 blocks.
We take the convention that blocks must have at least one non-zero entry and anything that can be a block is a block; thus $\text{diag}([1, 1, 0])$ has 2 blocks.
For a matrix $X$ whose rows/columns are allowed to be permuted we say ``$X$ is block diagonal with $NB(X)$ blocks up to permutations" where
\begin{equation} \label{eq:num_blocks}
\text{NB}(X) :=\max \{B | \text{the rows/columns of } X \text{ can be permuted to create a } B \text{ block, block diagonal matrix}\}
\end{equation}
Any permutation of the rows/columns of $X$ which achieves the above maximum is called a \textit{maximally block diagonal permutation}.

\subsection{Spectral characterization of block diagonal matrices} \label{ss:sym_lap}

This section gives a spectral characterization of block diagonal matrices up to permutations.
Let $A \in \mathbb{R}^{n \times n}_+$ be the adjacency matrix of a weighted, undirected graph with no self loops.
The \textit{unnormalized Laplacian} is
\begin{equation}
L_{\text{un}}(A): = \text{diag}(\text{deg}(A)) - A
\end{equation}
where $\text{deg}(A) := A\mathbf{1}_n \in \mathbb{R}^n_+$ is the vector of the vertex degrees  \citep{von2007tutorial}.
The \textit{symmetric, normalized Laplacian} is
\begin{equation}\label{eq:sym_lap_def}
 L_{\text{sym}}(A) := I -  \text{diag}(\text{deg}(A) )^{-1/2} A  \text{ diag}(\text{deg}(A))^{-1/2}.  
\end{equation}
When $\text{deg}(A)$ has zeros, the inverse is taken to be the Moore-Penrose psueo-inverse thus the diagonal elements of $ L_{\text{sym}}(\cdot)$ are always equal to 1 even when there are degree zero (isolated) vertices.\footnote{This convention is not always followed \citep{von2007tutorial}, as discussed in Section \ref{ss:proof_sym_lap_spect_bd}.}
The eigenvalues of the symmetric Laplacian are equal\footnote{We have to be careful when $ \text{diag}(\text{deg}(A))$ is non-invertible; this issue is addressed in Section \ref{s:extremal}.} to the generalized eigenvalues of $(L_{\text{un}}(A), \text{diag}(\text{deg}(A))$.

For $X \in \mathbb{R}^{R \times C}_+$ let
$$A_{\text{bp}}(X)  := \begin{bmatrix}  0 & X \\ X^T & 0 \end{bmatrix}\in  \mathbb{R}^{(R + C)\times (R + C)} $$
be the adjacency matrix of the weighted, bipartite graph $G(X)$ whose edge weights are given by the entries of $X$ and whose vertex sets are the rows and columns of $X$ (see Figure \ref{fig:motiv_ex_pi_bipt_graph}).
Note a row or column of zeros in $X$ corresponds to an isolated vertex in the graph. 

Proposition \ref{prop:sym_lap_block_diag} shows the connected components of this bipartite graph with at least two vertices capture the block diagonal structure of $X$ up to permutations; these connected components are in turn captured by the spectrum of the symmetric, normalized Laplacian.

\begin{proposition} \label{prop:sym_lap_block_diag}
The following are equivalent for $1\le B + Z_{\text{row}} + Z_{\text{col}} \le \min(R, C)$
\begin{enumerate}

\item $X$ is block diagonal up to permutations with $B$ blocks and has $ Z_{\text{row}}$ rows and $Z_{\text{col}}$ columns of zeros.

\item $G(X)$ has $B$ connected components with at least two vertices and $Z_{\text{row}} + Z_{\text{col}}$ isolated vertices.

\item $L_{\text{sym}}(A_{\text{bp}}(X))$ has exactly $B$ eigenvalues equal to 0.

\item $L_{\text{un}}(A_{\text{bp}}(X))$ has exactly $B + Z_{\text{row}} + Z_{\text{col}} $ eigenvalues equal to 0.

\end{enumerate}
Additionally, the number of eigenvalues equal to 1 of the symmetric Laplacian is at least $2\cdot(Z_{\text{row}} +Z_{\text{col}})$.

\end{proposition}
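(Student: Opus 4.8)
The plan is to prove the four-way equivalence by chaining (1) $\Leftrightarrow$ (2) $\Leftrightarrow$ (4) together with (2) $\Leftrightarrow$ (3), using the bipartite graph $G(X)$ as the bridge between the combinatorial block structure of $X$ and the spectra of the two Laplacians, and then to treat the ``additionally'' statement separately via the bipartite symmetry of the normalized spectrum. Throughout I would write $A := A_{\text{bp}}(X)$, $D := \text{diag}(\text{deg}(A))$, and $\tilde{X} := D_{\text{row}}^{-1/2} X D_{\text{col}}^{-1/2}$ (pseudo-inverse on zero degrees), where $D_{\text{row}}, D_{\text{col}}$ are the diagonal matrices of row and column sums of $X$, so that $D^{-1/2} A D^{-1/2} = \bigl[\begin{smallmatrix} 0 & \tilde{X} \\ \tilde{X}^T & 0 \end{smallmatrix}\bigr]$.

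For (1) $\Leftrightarrow$ (2) I would argue combinatorially: a maximally block diagonal permutation partitions the nonzero rows and columns into groups, and two of them lie in the same group exactly when the corresponding vertices of $G(X)$ are joined by a path of nonzero entries; the ``anything that can be a block is a block'' convention matches the finest such partition, i.e. the connected components. A block has at least one nonzero entry, hence at least one row and one column, so it is a component with $\ge 2$ vertices, while a zero row (resp. column) is precisely an isolated row (resp. column) vertex. This yields $B$ nontrivial components and $Z_{\text{row}} + Z_{\text{col}}$ isolated vertices. Then (2) $\Leftrightarrow$ (4) is the standard fact that $\dim \ker L_{\text{un}}(A)$ equals the number of connected components: from $v^T L_{\text{un}}(A) v = \tfrac12 \sum_{i,j} A_{ij}(v_i - v_j)^2 \ge 0$ the kernel consists of vectors constant on each component, and counting components (each isolated vertex being its own) gives nullity $B + Z_{\text{row}} + Z_{\text{col}}$.

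The step (2) $\Leftrightarrow$ (3) is the first genuinely delicate point, because the pseudo-inverse convention matters. For an isolated vertex $i$ we have $\text{deg}_i = 0$, so the $i$th row and column of $D^{-1/2} A D^{-1/2}$ vanish and $L_{\text{sym}}(A) e_i = e_i$; thus each of the $Z_{\text{row}} + Z_{\text{col}}$ isolated vertices contributes an eigenvalue equal to $1$ (not $0$) and drops out of the zero eigenspace. On each nontrivial component the restricted operator is the ordinary normalized Laplacian of a connected graph, whose zero eigenspace is one-dimensional and spanned by $D^{1/2}\mathbf{1}$ on that component; summing over components gives exactly $B$ zero eigenvalues, which is (3). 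The identification of these zero eigenvalues with the generalized eigenvalues of $(L_{\text{un}}(A), D)$ restricted to the non-isolated vertices is the bookkeeping I would defer to Section \ref{ss:proof_sym_lap_spect_bd}.

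For the ``additionally'' claim I would exploit that $G(X)$ is bipartite, so the spectrum of $L_{\text{sym}}(A)$ is symmetric about $1$: if $(u, w)$ is an eigenvector with eigenvalue $\mu$, then $(u, -w)$ is an eigenvector with eigenvalue $2 - \mu$, since flipping the sign of the column block negates the off-diagonal action of $\tilde{X}$. Consequently the eigenvalue-$1$ multiplicity equals $\dim \ker \tilde{X} + \dim \ker \tilde{X}^T$, and the $Z_{\text{col}}$ zero columns and $Z_{\text{row}}$ zero rows of $\tilde{X}$ immediately supply $Z_{\text{col}} + Z_{\text{row}}$ independent kernel vectors (the associated standard basis vectors). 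The main obstacle is upgrading this to the stated factor of $2$: the reflection $(u,w) \mapsto (u,-w)$ fixes each isolated-vertex eigenvector (its column or row block is already zero), so it produces no new vectors from them, and the remaining eigenvalue-$1$ directions must be located among the $|R - C|$ structural kernel vectors of the rectangular $\tilde{X}$ and any rank deficiency inside the blocks. Establishing that these always furnish an additional $Z_{\text{row}} + Z_{\text{col}}$ vectors amounts to controlling $\text{rank}(\tilde{X})$ through the hypothesis $B + Z_{\text{row}} + Z_{\text{col}} \le \min(R, C)$, and I would flag this rank bound as the step most in need of care, since a full-rank block (e.g. an even cycle) contributes no eigenvalue equal to $1$ at all.
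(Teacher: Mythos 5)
Your proof of the four-way equivalence is correct and is essentially the paper's own argument: the paper also routes everything through $G(X)$, proving (1) $\Leftrightarrow$ (2) by a permutation/graph-isomorphism argument, invoking the standard component count for $L_{\text{un}}$ for (4), and obtaining (3) from Proposition \ref{prop:sym_lap_spectrum_ccs}, whose proof splits off the isolated vertices under the pseudo-inverse convention exactly as you do. (Your normalization $D^{1/2}\mathbf{1}$ of the zero-eigenvectors on each component is the correct one; the paper's proof writes $\text{diag}(\text{deg}(A))\mathbf{1}_{A_i}$, dropping the square root.)

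The step you flagged in the ``additionally'' clause is a genuine obstruction, and you should not try to push through it: the factor-of-$2$ bound is false under the stated hypothesis, and the full-rank-block phenomenon you identified is precisely why. By your own identity, the multiplicity of the eigenvalue $1$ equals $\dim\ker\tilde{X} + \dim\ker\tilde{X}^T = R + C - 2\,\mathrm{rank}(X)$. Take $R=C=5$ with diagonal blocks $\bigl[\begin{smallmatrix}1&1\\1&2\end{smallmatrix}\bigr]$, $[1]$, $[1]$ and the fifth row and fifth column identically zero: then $B=3$ and $Z_{\text{row}}=Z_{\text{col}}=1$, so $B+Z_{\text{row}}+Z_{\text{col}}=5=\min(R,C)$, yet $\mathrm{rank}(X)=4$ and the spectrum of $L_{\text{sym}}(A_{\text{bp}}(X))$ is $\{0,0,0,\,5/6,\,1,1,\,7/6,\,2,2,2\}$ --- multiplicity $2$, not $4$. (A $6\times6$ variant with full-rank blocks of sizes $2\times2$, $2\times2$, $1\times1$ and one zero row and column fails with strict inequality $5<6$, so the boundary case is not the issue.) The only bound that holds in general is $Z_{\text{row}}+Z_{\text{col}}$, which you already obtained from the isolated-vertex basis vectors --- and which is also all the paper actually establishes: its proof of this proposition never revisits the ``additionally'' clause, and Proposition \ref{prop:sym_lap_spectrum_ccs} supplies exactly one eigenvalue-$1$ eigenvector per isolated vertex. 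One caveat on your parenthetical example: an even cycle is a full-rank block with no eigenvalue $1$ only when its length is $\equiv 2 \pmod 4$ (for $C_4$ the biadjacency matrix is the all-ones $2\times2$ matrix, of rank one, while $C_8$ has $1-\cos(\pi/2)=1$ in its normalized spectrum); a weighted block such as $\bigl[\begin{smallmatrix}1&1\\1&2\end{smallmatrix}\bigr]$, whose normalized spectrum is $\{0,5/6,7/6,2\}$, is a safer witness.
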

Section \ref{s:multiarray} generalizes this proposition to block diagonal multi-arrays.

Proposition \ref{prop:sym_lap_block_diag} shows that the symmetric Laplacian gives more precise control over the block diagonal structure of a matrix than the unnormalized Laplacian does.
The number of 0 eigenvalues of the symmetric Laplacian is exactly the number of blocks while the number of zero eigenvalues of the unnormalized Laplacian only upper bounds the number of blocks (see Figure  \ref{fig:lap_spect_compare_mat}).

\begin{figure}[H]
 \centering
\begin{subfigure}[t]{0.32\textwidth}
\includegraphics[width=.95\linewidth, height=\linewidth]{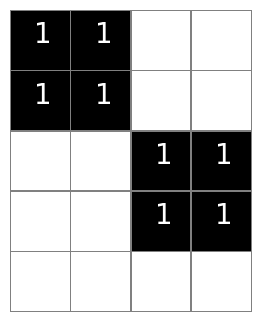}
\caption{
$X \in \{0,1\}_+^{5 \times 4}$ is block diagonal with two blocks and one row of zeros.
}
\label{fig:lap_spect_compare_mat}
\end{subfigure}
\hfill
\begin{subfigure}[t]{0.32\textwidth}
\includegraphics[width=\linewidth, height=\linewidth]{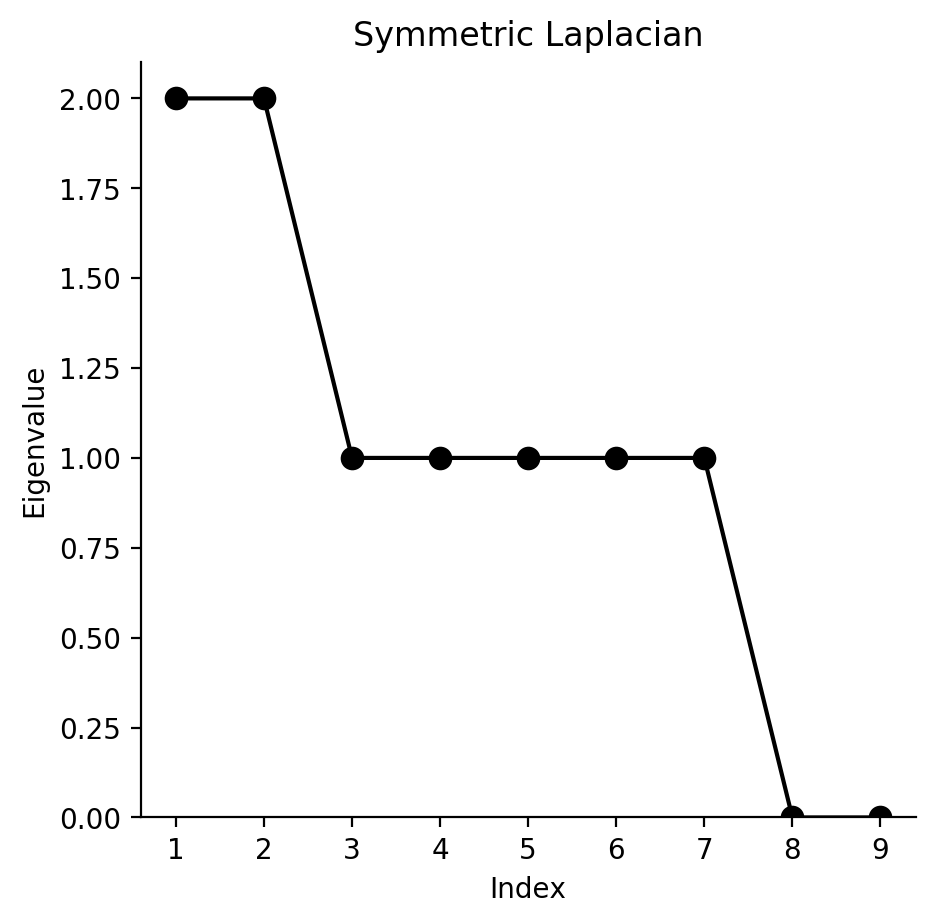}
\caption{
Spectrum of $L_{\text{sym}}(A_{\text{bp}}(X))$; two eigenvalues are equal to zero.
}
\label{fig:lap_spect_compare_mat_sym_evals}
\end{subfigure}
\hfill
\begin{subfigure}[t]{0.32\textwidth}
\includegraphics[width=\linewidth, height=\linewidth]{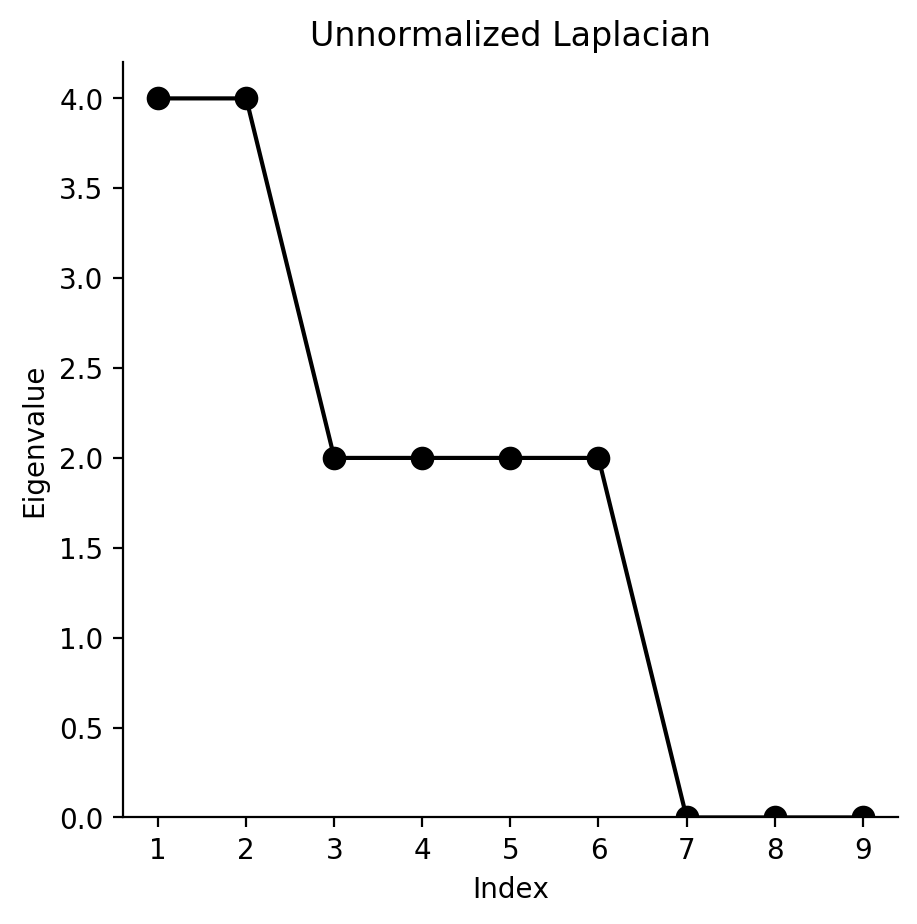}
\caption{
Spectrum of $L_{\text{un}}(A_{\text{bp}}(X))$; three eigenvalues are equal to zero.
}
\label{fig:lap_spect_compare_mat_un_evals}
\end{subfigure}
\caption{
The symmetric Laplacian's spectrum counts the blocks of a matrix up to permutations; the spectrum of the unnormalized Laplacian counts both blocks and zero rows/columns.
}
\label{fig:lap_spect_compare_mat}
\end{figure}

\subsection{Optimization with block diagonal constraints} \label{ss:block_diag_opt}

This section considers the following block diagonally constrained optimization problem 
\begin{equation} \label{prob:bd_constr}
\begin{aligned}
&\underset{X\in \mathbb{R}^{R \times C}}{\text{minimize}}   & & f(X)\\
& \text{subject to} & &  X \ge 0 \text{ and } X \text{ is block diagonal with at least } B \text{ blocks up to permutations}, \\
\end{aligned}
\end{equation}
where $f: \mathbb{R}^{R \times C} \to  \mathbb{R}$.
The naive approach to solving this problem involves iterating over all possible sparsity patterns with at least $B$ blocks up to permutations and is likely computationally infeasible.
Based on Proposition \ref{prop:sym_lap_block_diag}, we see Problem \eqref{prob:bd_constr} is equivalent to
\begin{equation} \label{prob:bd_eval_constr} 
\begin{aligned}
&\underset{X\in \mathbb{R}^{R \times C}}{\text{minimize}}   & & f(X)\\
& \text{subject to} & &  X \ge 0 \text{ and } L_{\text{sym}}(A_{\text{bp}}(X)) \text{ has at least } B \text{ eigenvalues equal to } 0. \\
\end{aligned}
\end{equation}
To impose the rank constraint, we consider the following related problem
\begin{equation}\label{prob:bd_eval_pen} 
\begin{aligned}
& \underset{X\in \mathbb{R}^{R \times C}}{\text{minimize}}   & & f(X) +   \alpha \sum_{j=1}^B  \lambda_{(j)} \left( L_{\text{sym}}(A_{\text{bp}}(X))\right)   \\
& \text{subject to} & &  X \ge 0,
\end{aligned}
\end{equation}
for a sufficiently large value of $\alpha$.
The non-linearity in $L_{\text{sym}}(\cdot)$ makes this problem computationally challenging.
We can replace this nonlinearity with linear terms using a variational characterization of generalized eigenvalues (Proposition \ref{prop:weighted_geval_extremal_representation} and Corollary \ref{cor:sym_evals_and_gevals}) to obtain,
\begin{equation} \label{prob:bd_extremal_rep}
\begin{aligned}
& \underset{X\in \mathbb{R}^{R \times C}, U \in \mathbb{R}^{(R + C) \times B}}{\text{minimize}}   & & f(X) +   \alpha \text{Tr} \left( U^T L_{\text{un}}(A_{\text{bp}}(X)) U\right)   \\
& \text{subject to} & & X \ge 0\\ 
&&&  U^T \text{diag}(\text{deg}(A_{\text{bp}}(X))) U = I_B,
\end{aligned}
\end{equation}
which typically has the same minimizers as \eqref{prob:bd_eval_pen}. 

\begin{proposition} \label{prop:soln_relations_global}
Problems \eqref{prob:bd_constr} and \eqref{prob:bd_eval_constr} are equivalent.
If $(X, U)$ is a global minimizer of  \eqref{prob:bd_extremal_rep} such that $\sum_{j=1}^B  \lambda_{(j)} \left( L_{\text{sym}}(A_{\text{bp}}(X))\right) = 0$, then $X$ is a global minimizer of \eqref{prob:bd_constr}, \eqref{prob:bd_eval_constr} and \eqref{prob:bd_eval_pen}.
\end{proposition}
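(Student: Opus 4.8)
The plan is to treat the two assertions separately: first matching the feasible sets of \eqref{prob:bd_constr} and \eqref{prob:bd_eval_constr}, then using the extremal characterization of generalized eigenvalues to transfer global optimality from the trace formulation \eqref{prob:bd_extremal_rep} back through \eqref{prob:bd_eval_pen} to the constrained problems. For the first assertion, since \eqref{prob:bd_constr} and \eqref{prob:bd_eval_constr} share the objective $f$ and the constraint $X \ge 0$, it suffices to show their structural constraints define the same set. By the equivalence of items (1) and (3) in Proposition \ref{prop:sym_lap_block_diag}, a nonnegative $X$ is block diagonal up to permutations with exactly $B'$ blocks if and only if $L_{\text{sym}}(A_{\text{bp}}(X))$ has exactly $B'$ zero eigenvalues; taking the disjunction over $B' \ge B$ turns this into ``at least $B$'' on both sides, so the feasible sets coincide and the two problems are equivalent.

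For the second assertion, fix a global minimizer $(X, U)$ of \eqref{prob:bd_extremal_rep} with $\sum_{j=1}^B \lambda_{(j)}(L_{\text{sym}}(A_{\text{bp}}(X))) = 0$. I would first reduce \eqref{prob:bd_extremal_rep} to \eqref{prob:bd_eval_pen} by partially minimizing over $U$. For each fixed nonnegative $X$, Proposition \ref{prop:weighted_geval_extremal_representation} and Corollary \ref{cor:sym_evals_and_gevals} give
\[
\min_{U^T \text{diag}(\text{deg}(A_{\text{bp}}(X))) U = I_B} \text{Tr}\!\left(U^T L_{\text{un}}(A_{\text{bp}}(X)) U\right) = \sum_{j=1}^B \lambda_{(j)}\!\left(L_{\text{sym}}(A_{\text{bp}}(X))\right),
\]
since the eigenvalues of the symmetric Laplacian are the generalized eigenvalues of $(L_{\text{un}}, \text{diag}(\text{deg}))$ and the Ky Fan-type extremal representation isolates the sum of the $B$ smallest of them. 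Thus minimizing the objective of \eqref{prob:bd_extremal_rep} first over $U$ recovers exactly the objective of \eqref{prob:bd_eval_pen}; global optimality of $(X,U)$ then forces $X$ to be a global minimizer of \eqref{prob:bd_eval_pen}.

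It remains to upgrade this to the constrained problems. Because $L_{\text{sym}}$ is positive semidefinite, the hypothesis that the $B$ smallest eigenvalues sum to zero forces each of them to vanish, so $L_{\text{sym}}(A_{\text{bp}}(X))$ has at least $B$ zero eigenvalues and $X$ is feasible for \eqref{prob:bd_eval_constr}. For any competitor $X'$ feasible for \eqref{prob:bd_eval_constr}, the penalty $\sum_{j=1}^B \lambda_{(j)}(L_{\text{sym}}(A_{\text{bp}}(X')))$ is likewise zero; since $X$ minimizes \eqref{prob:bd_eval_pen} and both penalty terms vanish, this gives $f(X) \le f(X')$. Hence $X$ is a global minimizer of \eqref{prob:bd_eval_constr}, and by the feasible-set equivalence of the first assertion, also of \eqref{prob:bd_constr}.

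The step I expect to be the main obstacle is the partial minimization over $U$, specifically justifying the generalized-eigenvalue identity when $\text{diag}(\text{deg}(A_{\text{bp}}(X)))$ is singular, i.e. when $X$ has zero rows or columns so that the constraint $U^T \text{diag}(\text{deg}(A_{\text{bp}}(X))) U = I_B$ degenerates. This is exactly the edge case signaled by the qualifier ``typically'' preceding the statement, and it is where the Moore--Penrose convention in \eqref{eq:sym_lap_def} and the careful treatment in Section \ref{s:extremal} must be invoked to keep Proposition \ref{prop:weighted_geval_extremal_representation} applicable.
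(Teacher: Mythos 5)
Your proof is correct and rests on the same two pillars as the paper's: Proposition \ref{prop:sym_lap_block_diag} for the equivalence of \eqref{prob:bd_constr} and \eqref{prob:bd_eval_constr}, and the extremal characterization (Proposition \ref{prop:weighted_geval_extremal_representation} with Corollary \ref{cor:sym_evals_and_gevals}) to relate \eqref{prob:bd_extremal_rep} to eigenvalue sums. The organization differs, though. The paper argues by contradiction inside the constrained problem: a hypothetical better feasible $Y$ for \eqref{prob:bd_constr} is lifted to the pair $(Y, U_Y)$, whose penalty vanishes because $Y$ has at least $B$ blocks, contradicting global optimality of $(X,U)$; the claim about \eqref{prob:bd_eval_pen} is never explicitly argued there. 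You instead minimize out $U$ first, identifying the partially minimized \eqref{prob:bd_extremal_rep} with \eqref{prob:bd_eval_pen}, and then specialize to zero-penalty competitors; this delivers the \eqref{prob:bd_eval_pen} claim explicitly, which is a genuine advantage over the paper's write-up. Your flagged obstacle---singular $\text{diag}(\text{deg}(A_{\text{bp}}(\cdot)))$---only bites for the \eqref{prob:bd_eval_pen} claim, where a competitor $X'$ with fewer than $B$ non-zero rows and columns in total admits no feasible $U'$ (one can patch this by perturbing such an $X'$ into a $B$-block matrix with nearly the same $f$-value, since its penalty is at least $\alpha$ and $f$ is continuous in context). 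For \eqref{prob:bd_constr} and \eqref{prob:bd_eval_constr} the issue never arises: any competitor with at least $B$ blocks has at least $B$ non-zero rows and $B$ non-zero columns, so Corollary \ref{cor:sym_evals_and_gevals} applies, and $X$ itself is non-degenerate because feasibility of $U$ forces $\text{rank}\left(\text{diag}(\text{deg}(A_{\text{bp}}(X)))\right) \ge B$. So where the two arguments overlap they are equally rigorous, and yours covers slightly more.
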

Proposition \ref{prop:soln_relations_local} gives a similar statement for local solutions.

\begin{remark} \label{rem:bd_no_soln}
Problem \eqref{prob:bd_constr} is not guaranteed to have a solution.
For example, let $f(X) = ||X - A||_F^2$ for some matrix $A \in \mathbb{R}^{B \times B}$.
If $A=0$ then \eqref{prob:bd_constr} has no solution (e.g. consider $X_n = \frac{1}{n} I_B$ then $f(X_n) \to 0$ and each $X_n$ satisfies the constraints, but $X_n \to 0$ which does not satisfy the constraints). 
On the other hand, if $NB(A) \ge B$ then $X=A$ is the global solution.
\end{remark}

Informally, if the objective function $f$ does not encourage too many rows/columns to be 0, Problem \eqref{prob:bd_constr} will have a solution.
When $f$ does not have a solution, it may indicate that block diagonal constraints are not a good modeling choice.
For example, it does not make sense to ask for the nearest block diagonal matrix to the zero matrix.

Problem \eqref{prob:bd_extremal_rep} is amenable to an alternating minimization algorithm that alternates between updating $U$ and updating $X$.
When $X$ is fixed, a global solution for $U$ is given by an eigen-decomposition.
When $U$ is fixed, the second term in the objective and the second term in the constraints of Problem \eqref{prob:bd_extremal_rep} are linear in $X$.
This alternating algorithm is detailed in Section \ref{ss:sym_lap_pen_alt_algo} and includes the case where $f$ replaced by a surrogate function at each step.
While this algorithm is similar to the BSUM algorithm \citep{razaviyayn2013unified, kumar2019unified}, its convergence properties are more challenging to study due the non-convexity of $f$ and the non-linearly coupled constraints.
Section \ref{ss:algo_convergence} studies the convergence behavior of this alternating algorithm using Zangwill's convergence theory.

Section \ref{s:un_lap_bad} contrasts our approach based on the symmetric Laplacian with similar approached based on the unnormalized Laplacian \citep{nie2016constrained, nie2017learning, lu2018subspace, kumar2019unified}.
Section \ref{s:multiarray} shows the approach discussed in this section for matrices naturally extends to enforcing block diagonal constraints on multi-arrays.

\subsection{MVMM with block diagonal constraints} \label{ss:mvmm_block_diag}

This section presents a constrained maximum likelihood problem that imposes a block diagonal structure on $\pi$ for the MVMM for $V=2$ views.
We decompose $\pi = \epsilon \mathbf{1}\mathbf{1}^T + \blckdiag$ where $\epsilon > 0$ is a small constant and $\blckdiag$ is a block diagonal matrix.
The $\epsilon$ term lets the model have ``outliers" e.g. observations that do not fall cleanly in the block diagonal structure.
It is also useful for computational reasons to avoid issues with exact zeros similar to those discussed in Section \ref{s:sparsity_log_pen}.
In particular, we consider
\begin{equation} \label{prob:mvmm_bd_constr}
\begin{aligned}
&\underset{\Theta, \blckdiag}{\text{minimize}}   & &  -  \ell( \{x_i\}_{i=1}^n| \Theta, \epsilon \mathbf{1}_{K\vs{1}}\mathbf{1}_{K\vs{2}}^T + \blckdiag)  \\
& \text{subject to} & & \blckdiag \ge 0,\langle \blckdiag, \mathbf{1}_{K\vs{1}} \mathbf{1}_{K\vs{2}}^T \rangle = 1 - K\vs{1} K\vs{2} \epsilon \\ 
&&& \blckdiag \text{ has at least } B \text{ blocks up to permutations},
\end{aligned}
\end{equation}
where $\ell$ is the observed data  log-likelihood \eqref{eq:data_log_lik} and $0 < \epsilon < \frac{1}{K\vs{1} K\vs{2}}$.
Following Section \ref{ss:block_diag_opt}, we replace the block diagonal constraint with a penalty on the smallest generalized eigenvalues of $L_{\text{sym}}(A_{\text{bp}}(\blckdiag))$.
An alternating EM algorithm for the resulting problem is presented in Section \ref{ss:mvmm_bd_constr_algo}.
Each step of this algorithm requires an eigen-decomposition and solving a convex problem.
Based on the discussion in Section \ref{s:multiarray}, it is straightforward to extend block diagonal constraints to the case of $V \ge 2$ multi-view mixture models.


\section{Simulations}\label{s:simulations}

We examine the clustering performance of the log penalized MVMM (log-MVMM) and the block diagonally constrained MVMM (bd-MVMM) on a synthetic data example.
The data in this section are sampled from a $V=2$ view Gaussian mixture model where $\pi \in \mathbb{R}^{10 \times 10}$ has five $2 \times 2$ blocks (Figure \ref{fig:beads_2_5__1__.5__Pi_true} below) with $d\vs{1} = d\vs{2}=10$ features.
Each view cluster has an identity covariance matrix.
The cluster means are sampled from isotropic Gaussians with standard deviations $\sigma_{\text{mean}}\vs{1}$ and $\sigma_{\text{mean}}\vs{2}$ for the first and second views respectively.
These parameters control the difficulty of the clustering problem e.g. if are both large then the cluster means tend to be far apart.
In this section we set $\sigma_{\text{mean}}\vs{1} = 1$  and $\sigma_{\text{mean}}\vs{2} = .5$ meaning the clusters in the first view are better separated than those in the second view.
The simulations below are repeated 20 times with different seeds and the cluster means are sampled once for each Monte-Carlo repetition.

The log-MVMM model is fit for a range of $\lambda$ values and we assume the number of view clusters $K\vs{1} = K\vs{2} = 10$ are known.
The bd-MVMM is also fit for a range of number of blocks and we set $\epsilon=  0.01 \cdot \frac{1}{K\vs{1} \cdot K\vs{2}}$.
Both the log-MVMM and bd-MVMM are initialized by fitting the basic MVMM discussed in Section \ref{s:mvmm} for 10 EM iterations. 
All models fit in this section assume diagonal covariance matrices for each cluster and use a small amount of covariance regularization to prevent clusters from collapsing on a single observation. 
As baselines for comparison we also fit the basic MVMM (MVMM) as well as a mixture model on the concatenated data (cat-MM).

\begin{figure}[H]
 \centering
\begin{subfigure}[t]{0.3\textwidth}
\includegraphics[width=\linewidth]{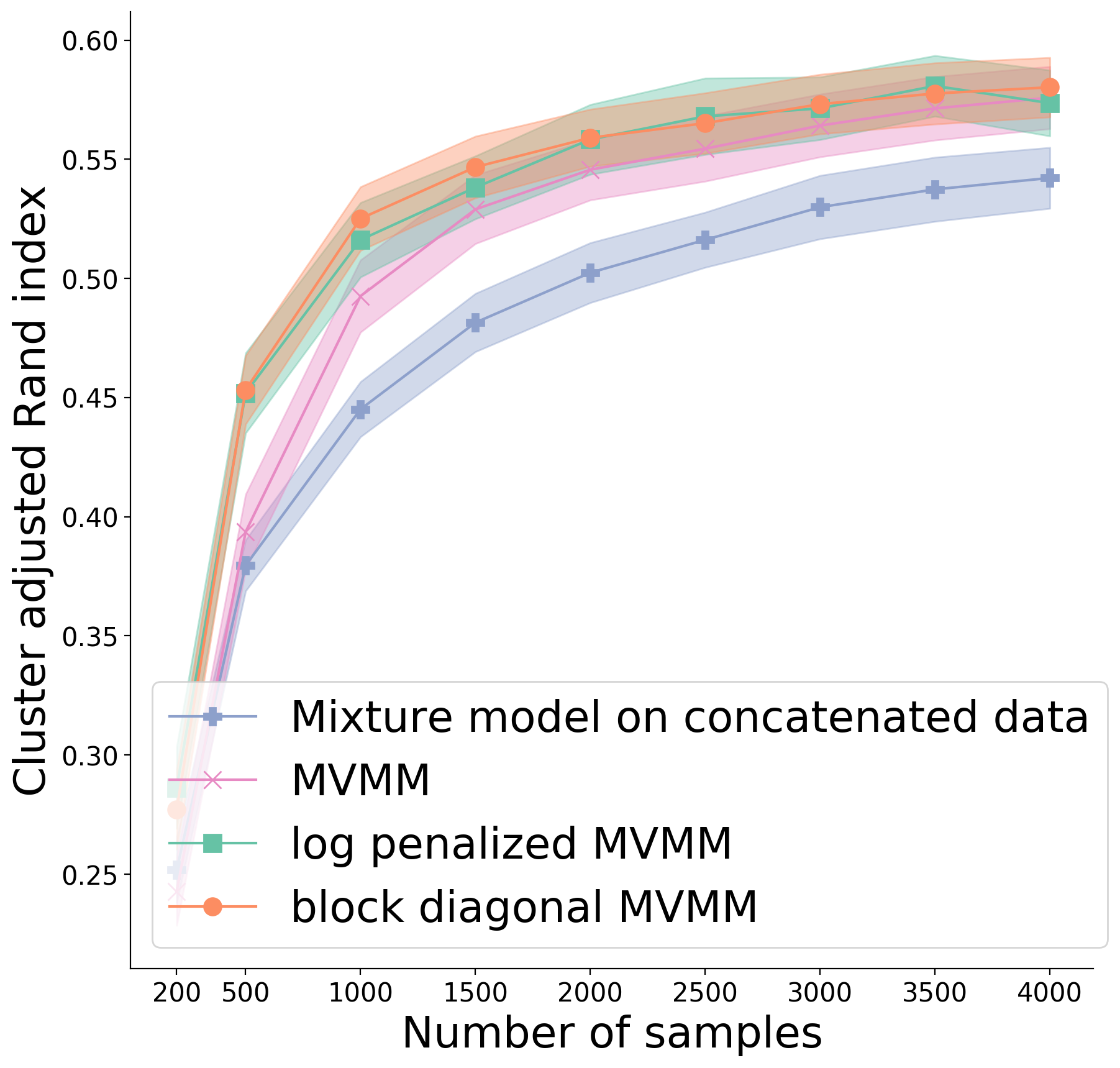}
\caption{
ARI comparing the predicted vs. true cluster labels.
}
\label{fig:beads_2_5__1__.5__n_samples_vs_test_overall_ars_at_truth}
\end{subfigure}
\hfill
\begin{subfigure}[t]{0.3\textwidth}
\includegraphics[width=\linewidth]{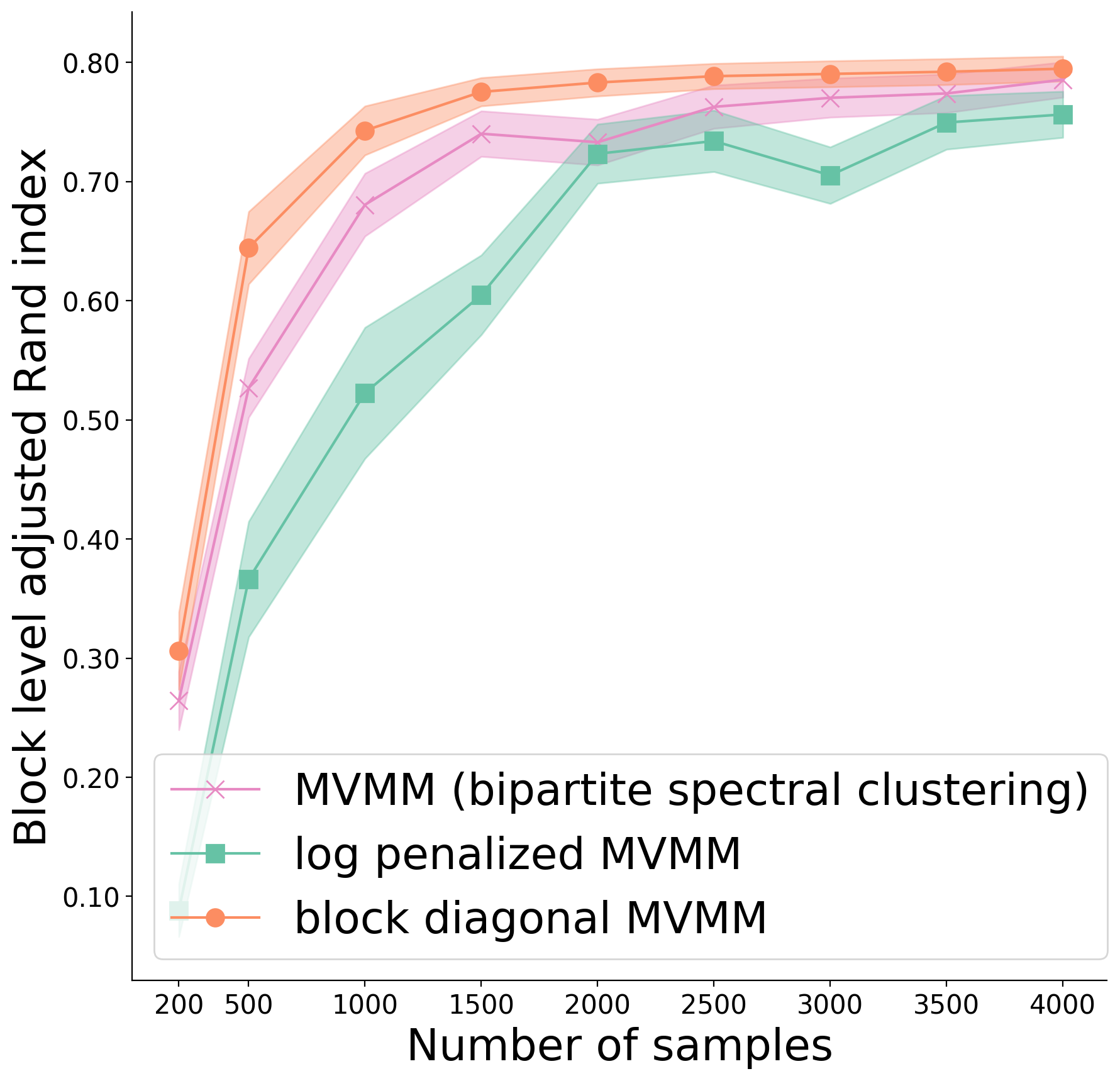}
\caption{
ARI comparing the predicted vs. true block level labels.
}
\label{fig:beads_2_5__1__.5__n_samples_vs_test_community_restr_ars_at_truth}
\end{subfigure}
\caption{
Clustering performance at the true hyper-parameter values; 20 components for log-MVMM, 20 components for cat-MM and 5 blocks for bd-MVMM.
The lines show the Monte-Carlo means; the shaded areas show $\pm \frac{1}{\sqrt{20}}$ times the Monte-Carlo standard deviation.
}
\label{fig:beads_2_5__1__.5_eval_at_truth}
\end{figure}

We first compare each model when the true parameter values are known e.g. total number of components for log-MVMM\footnote{If the true value does not show up in the tuning sequence we pick the model with the closest value.} and the true number of blocks for the bd-MVMM.
Figure \ref{fig:beads_2_5__1__.5_eval_at_truth} shows the results for a range of training sample sizes  ($n=$ 200, 500, 1000, 1500, 2000, 2500, 3000, 3500, 4000).
Recall the \textit{adjusted Rand index} (ARI) measures how well a vector of predicted cluster labels corresponds to a vector of true cluster labels where large values mean better correspondence \citep{rand1971objective}.

Figure \ref{fig:beads_2_5__1__.5__n_samples_vs_test_overall_ars_at_truth} shows the ARI of each model's predicted cluster labels compared to the true cluster labels for an independent test set (note there are $|\text{supp}(\pi)| = 20$ true clusters). 
Here the bd-MVMM and log-MVMM perform better than the two baselines (MVMM and cat-MVMM) for a range of sample sizes.
The prior information about the sparsity structure of $\pi$ helps these two models estimate the cluster parameters.
The performance gap is larger for smaller sample sizes and narrows with enough data.
Note the cat-MM catches up slowly because it does not take the view structure into account.

Figure \ref{fig:beads_2_5__1__.5__n_samples_vs_test_community_restr_ars_at_truth} evaluates the models' ability to find the block structure of the $\pi$ matrix.
Here we group clusters together that are in the same block i.e. there are 5 true block clusters.
For the log-MVMM and bd-MVMM we predict block cluster labels based on the block structure of the estimated $\widehat{\pi}$ and $\widehat{\blckdiag}$ matrices respectively.
As a baseline for comparison we apply bipartite spectral clustering \citep{dhillon2001co} to the estimated $\widehat{\pi}$ matrix from the MVMM. 
Here the bd-MVMM performs the best, which is not surprising because it was designed to target this kind of structure.
The log-MVMM struggles because small mistakes on the $\widehat{\pi}$ matrix can cause two blocks to be linked.
Once the sample size grows large enough the MVMM eventually comes close to the bd-MVMM.

Figures \ref{fig:beads_2_5__1__.5__Pi_est_n2000} and \ref{fig:beads_2_5__1__.5__Pi_est_n2500}  show the estimated $\widehat{\blckdiag}$ matrix from one Monte-Carlo repetition. 
For smaller sample sizes the block diagonal structure is almost correct (Figure  \ref{fig:beads_2_5__1__.5__Pi_est_n2000}).
With more samples the bd-MVMM finds the correct block diagonal structure (Figure  \ref{fig:beads_2_5__1__.5__Pi_est_n2500}).

\begin{figure}[H]
 \centering
\begin{subfigure}[t]{0.3\textwidth}
\includegraphics[width=.5\linewidth, height=.5\linewidth]{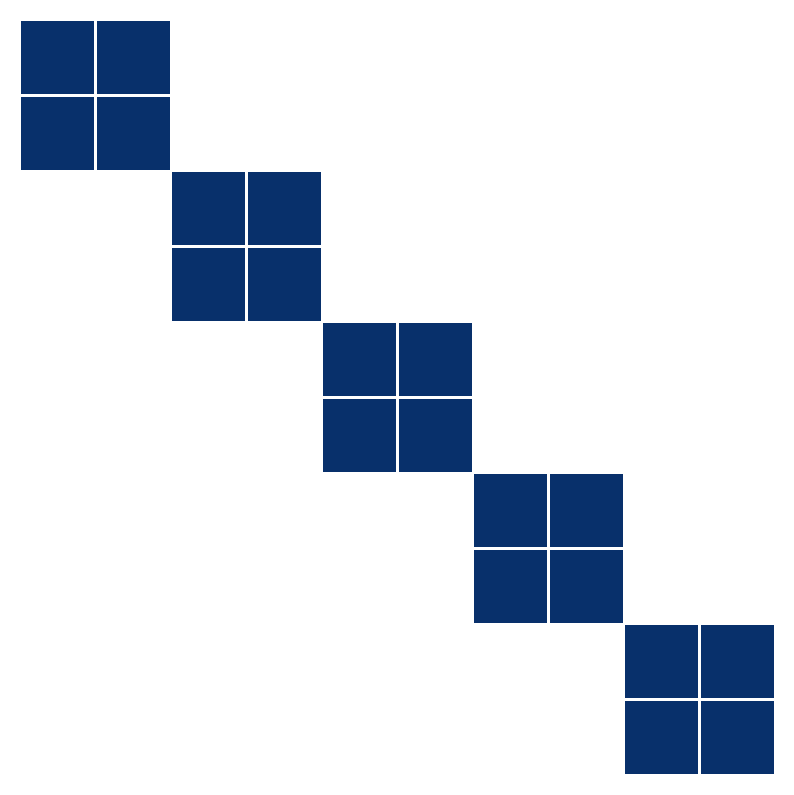}
\caption{
True $\pi \in \mathbb{R}^{10 \times 10}$.
}
\label{fig:beads_2_5__1__.5__Pi_true}
\end{subfigure}
\begin{subfigure}[t]{0.3\textwidth}
\includegraphics[width=.5\linewidth,  height=.5\linewidth]{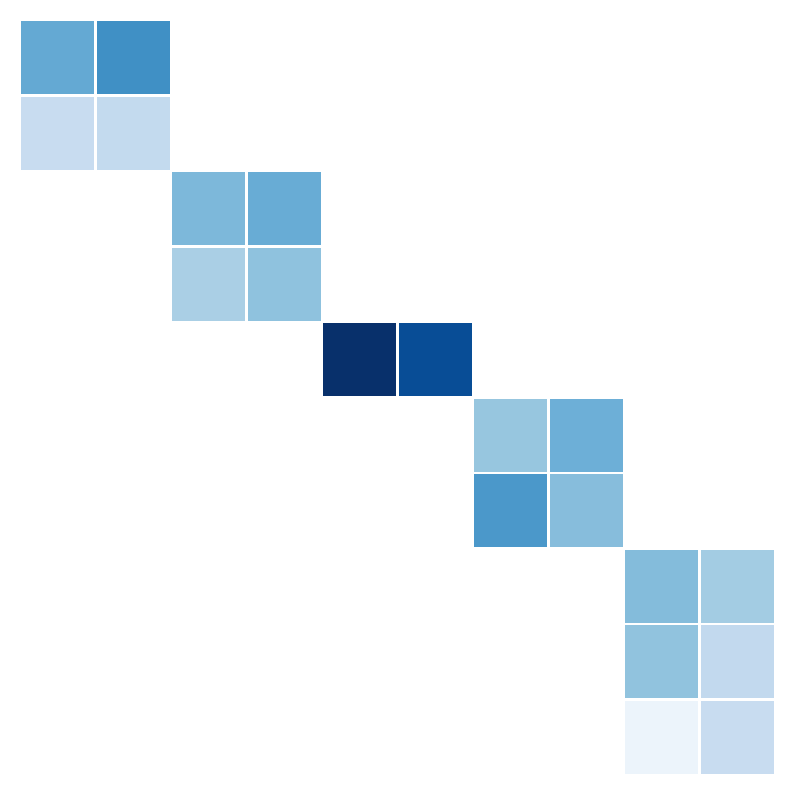}
\caption{
Estimated $\widehat{\blckdiag}$ with $n=2,000$ samples.
}
\label{fig:beads_2_5__1__.5__Pi_est_n2000}
\end{subfigure}
\begin{subfigure}[t]{0.3\textwidth}
\includegraphics[width=.5\linewidth,  height=.5\linewidth]{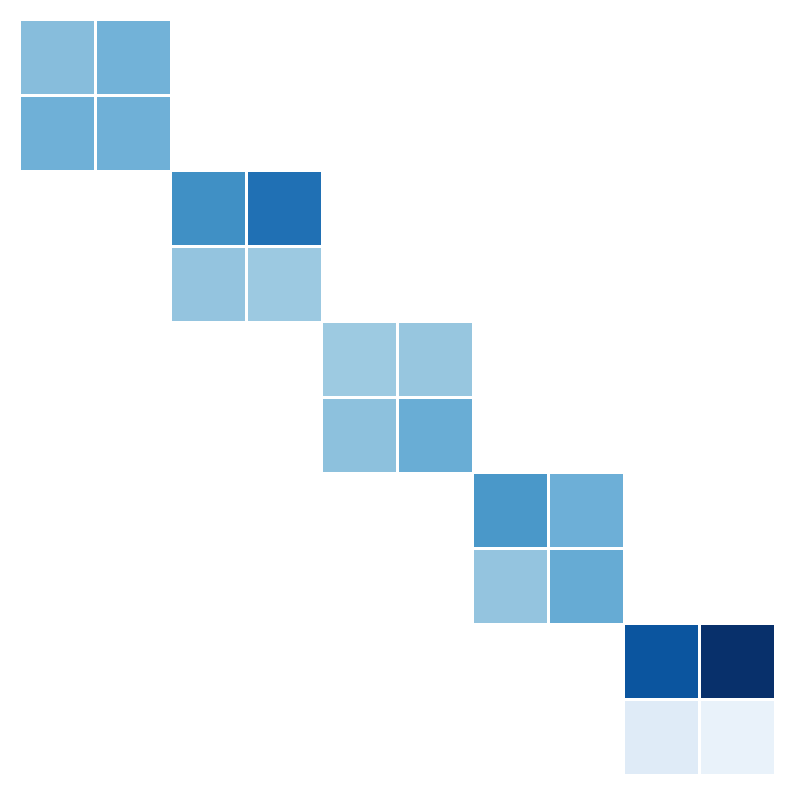}
\caption{
Estimated $\widehat{\blckdiag}$ with $n=2,500$ samples.
}
\label{fig:beads_2_5__1__.5__Pi_est_n2500}
\end{subfigure}
\caption{
True $\pi$ and estimated $\widehat{\blckdiag}$ matrices.
The cluster labels have been permuted to reveal the block diagonal structure.
}
\label{fig:beads_2_5__1__.5}
\end{figure}

Next we evaluate the models after performing model selection using a modified BIC criteria \citep{schwarz1978estimating}.
After fitting log-MVMM for a range of $\lambda$ values we select the best model using the following BIC criterion suggested by \citep{huang2017model}
\begin{equation} \label{eq:log_pen_bic}
\text{BIC} = 2 \sum_{i=1}^n \ell( x_i | \widehat{\Theta}, \widehat{\pi}) - \left(\text{dof}(\widehat{\Theta}) + |\text{supp}(\widehat{\pi})| - 1\right) \log(n),
\end{equation}
where $\widehat{\Theta}$ and $\widehat{\pi}$ are the estimated cluster parameters and $\pi$ matrices respectively and $\text{dof}(\cdot)$ is the number of degrees of freedom of the cluster parameters.
\citet{huang2017model} provides results about the consistency of this model selection procedure for single view Gaussian mixture models.
For the bd-MVMM we use a similar formula except the support of $\widehat{\pi}$ is replaced with $|\text{supp}(\widehat{\blckdiag})|$.

Figure \ref{fig:beads_2_5__1__.5__n_samples_vs_n_comp_est_at_bic_selected} shows the BIC estimated number of components for log-MVMM and bd-MVMM.
The bd-MVMM does a good job with model selection (e.g. it usually picks 5 blocks), but log-MVMM tends to select too few clusters.
Figure \ref{fig:beads_2_5__1__.5__n_samples_vs_test_overall_ars_at_bic_selected} and \ref{fig:beads_2_5__1__.5__n_samples_vs_test_community_restr_ars_at_bic_selected} are similar to Figures  \ref{fig:beads_2_5__1__.5__n_samples_vs_test_overall_ars_at_truth} and \ref{fig:beads_2_5__1__.5__n_samples_vs_test_community_restr_ars_at_truth}, but the BIC selected parameter values are used instead of the true values.
Here bd-MVMM still outperforms the MVMM, but by a smaller margin.

\begin{figure}[H]
 \centering
 \begin{subfigure}[t]{0.3\textwidth}
\includegraphics[width=\linewidth,  height=\linewidth]{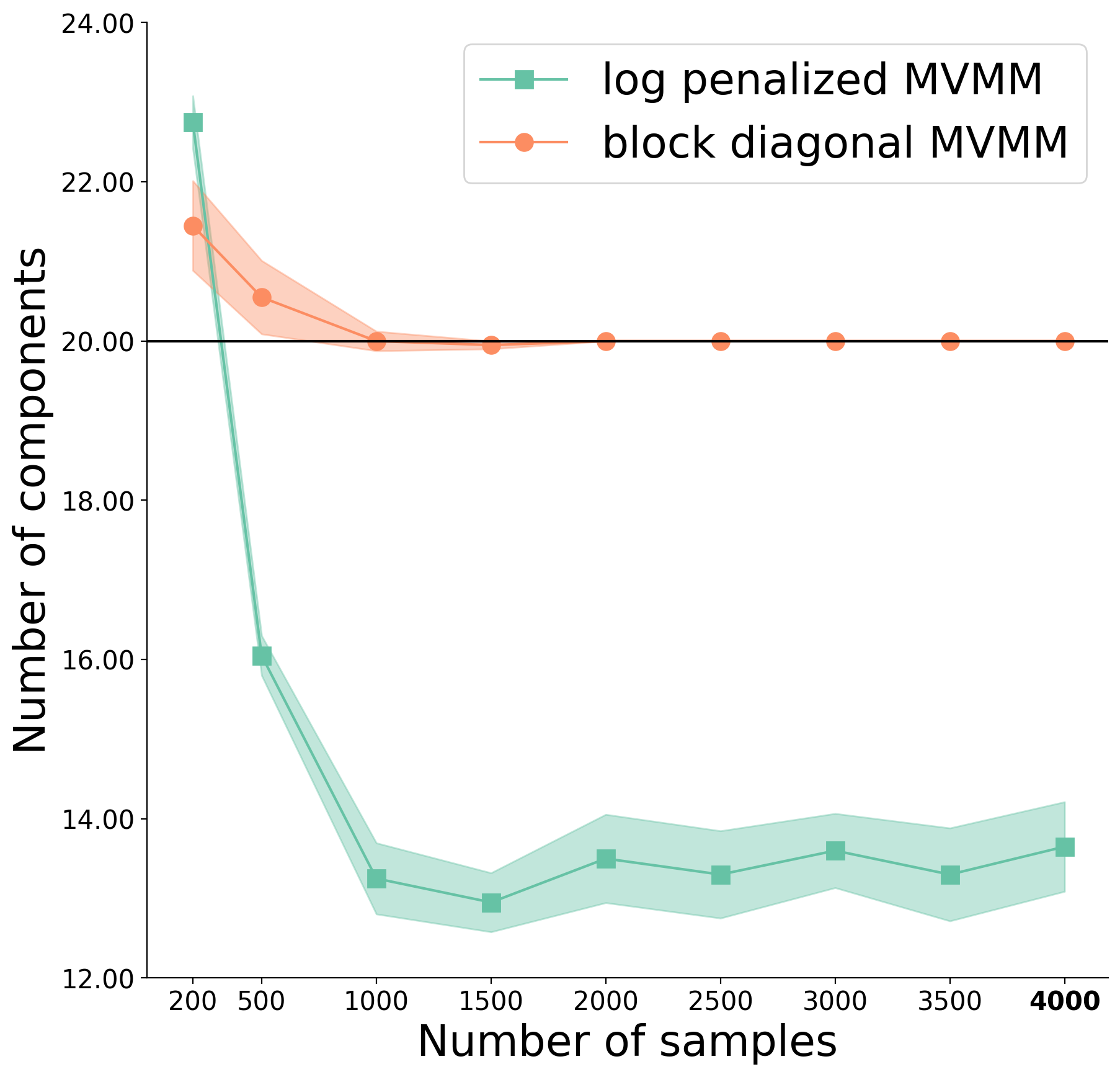}
\caption{
Estimated total number of clusters.
}
\label{fig:beads_2_5__1__.5__n_samples_vs_n_comp_est_at_bic_selected}
\end{subfigure}
\hfill
\begin{subfigure}[t]{0.3\textwidth}
\includegraphics[width=\linewidth,  height=\linewidth]{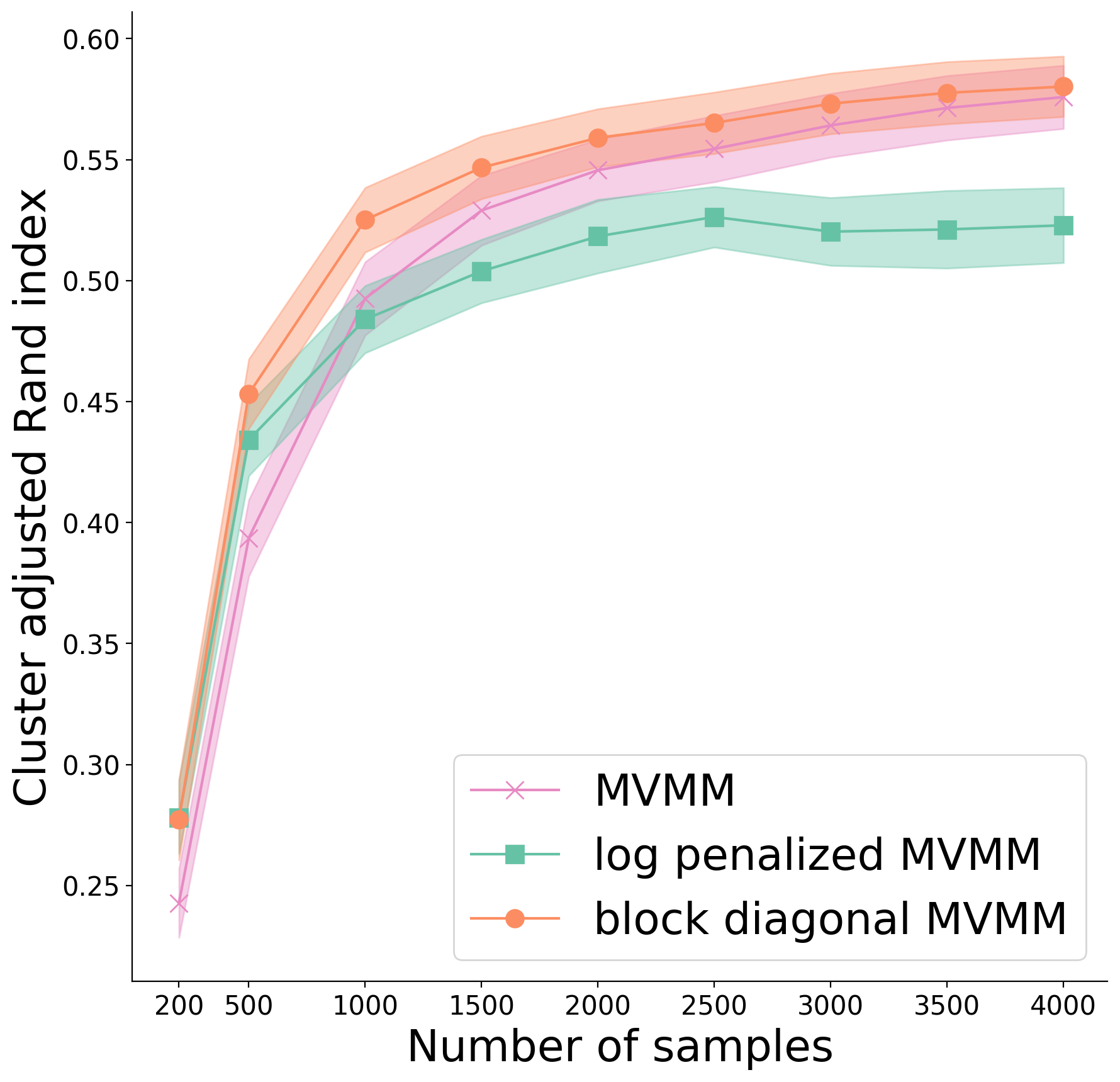}
\caption{
ARI of predicted vs. true cluster labels.
}
\label{fig:beads_2_5__1__.5__n_samples_vs_test_overall_ars_at_bic_selected}
\end{subfigure}
\hfill
\begin{subfigure}[t]{0.3\textwidth}
\includegraphics[width=\linewidth,  height=\linewidth]{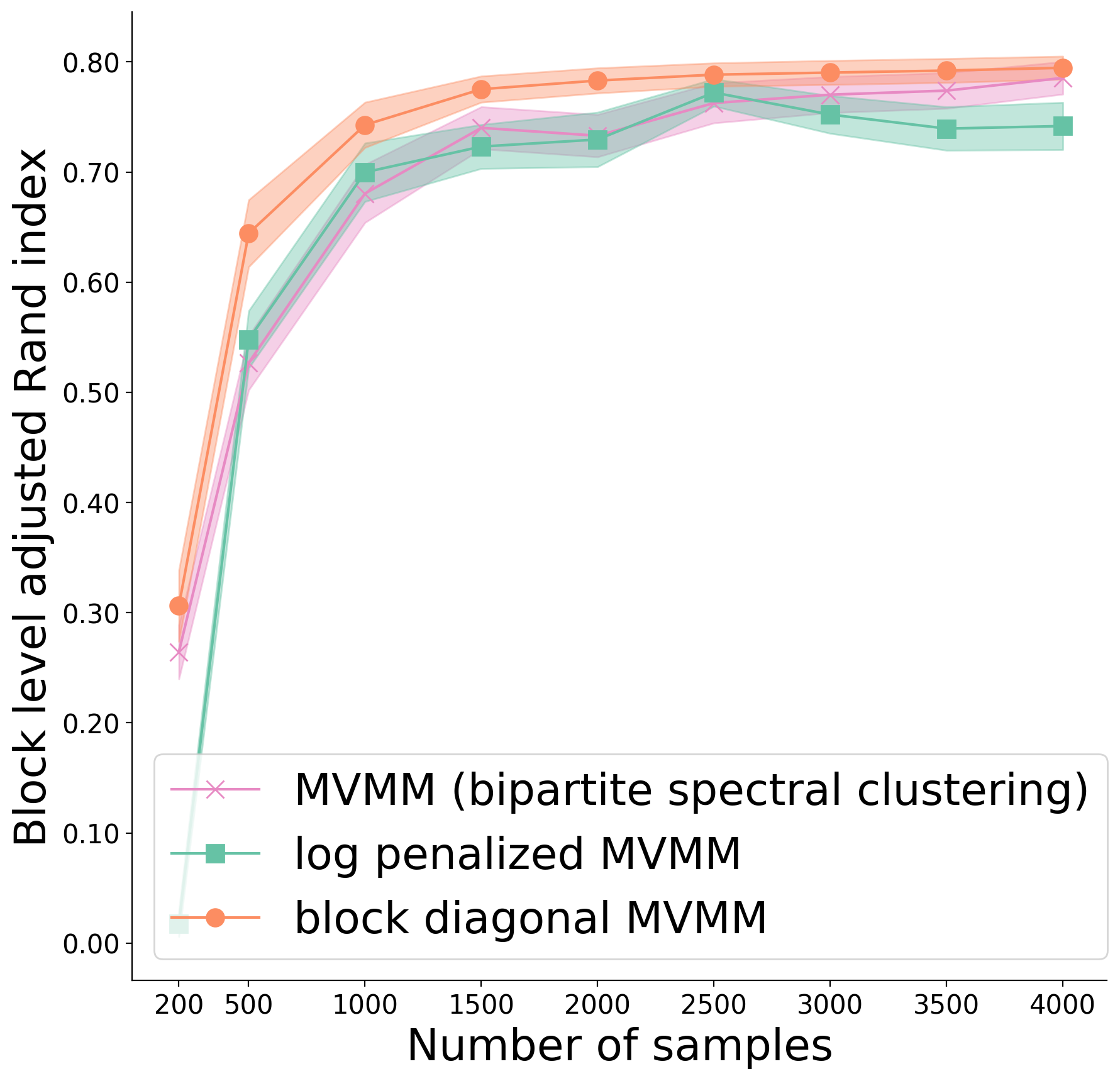}
\caption{
ARI of predicted vs. true block level labels.
}
\label{fig:beads_2_5__1__.5__n_samples_vs_test_community_restr_ars_at_bic_selected}
\end{subfigure}
\caption{
Clustering performance of the BIC selected models for log-MVMM and bd-MVMM. 
}
\label{fig:beads_2_5__1__.5_eval_at_bic}
\end{figure}

This section focuses on the case when the signal to noise level is different in each view.
Additional simulations examining different $\pi$ matrices and different noise levels are shown in Section \ref{s:add_sims}.
These additional simulations show that when the noise level is the same in each view ($\sigma_{\text{mean}}\vs{1} =\sigma_{\text{mean}}\vs{2}$) then log-MVMM and bd-MVMM perform much closer to the MVMM.


\section{Real data examples} \label{s:real_data}

This section applies the block diagonal MVMM to two different data sets.
While more detailed analysis is beyond the scope of this paper we provide additional results and figures in the online supplementary material.

\subsection{TCGA breast cancer}\label{ss:tcga}

The TCGA breast cancer study \citep{cancer2012comprehensive} collects data from 1,027 breast cancer patients on multiple genomic platforms including: RNA expression (RNA), microRNA (miRNA), DNA methylation (DNA) and copy number (CP).
We closely follow the data processing guidelines from \citet{hoadley2018cell}, leaving us with 3,217 RNA features, 383 miRNA features, 3,139 DNA features and 3,000 CP features.\footnote{We were unable to obtain the feature list for copy number so we selected the top 3,000 features with the largest variance.}
Missing values are filled in using 5 nearest neighbors imputation \citep{troyanskaya2001missing}.
We first determine the number of clusters in each view by fitting a Gaussian mixture model (with diagonal covariances) to each view marginally; BIC selects 10 RNA clusters, 11 miRNA clusters, 25 DNA clusters and 32 CP clusters.

Next we fit a $V=2$ view block diagonal MVMM to the following pairings: RNA vs. miRNA, RNA vs. DNA and RNA vs. CP.
BIC selects 1 block for RNA vs. miRNA, 1 block for RNA vs. DNA and 3 blocks for RNA vs. CP.
Figure \ref{fig:cp_bd_pi} shows the estimated $\widehat{\blckdiag}$ matrices for RNA vs. CP.
The block diagonal structure of these matrices suggests there is strong jointly defined subtypes in the RNA and CP views.
Note there is still joint information in RNA/miRNA and RNA/DNA since the estimated $\widehat{D}$ matrices are not rank one.

\begin{figure}[H]
 \centering
 \begin{subfigure}[t]{0.3\textwidth}
\includegraphics[width=\linewidth, height=1.5\linewidth]{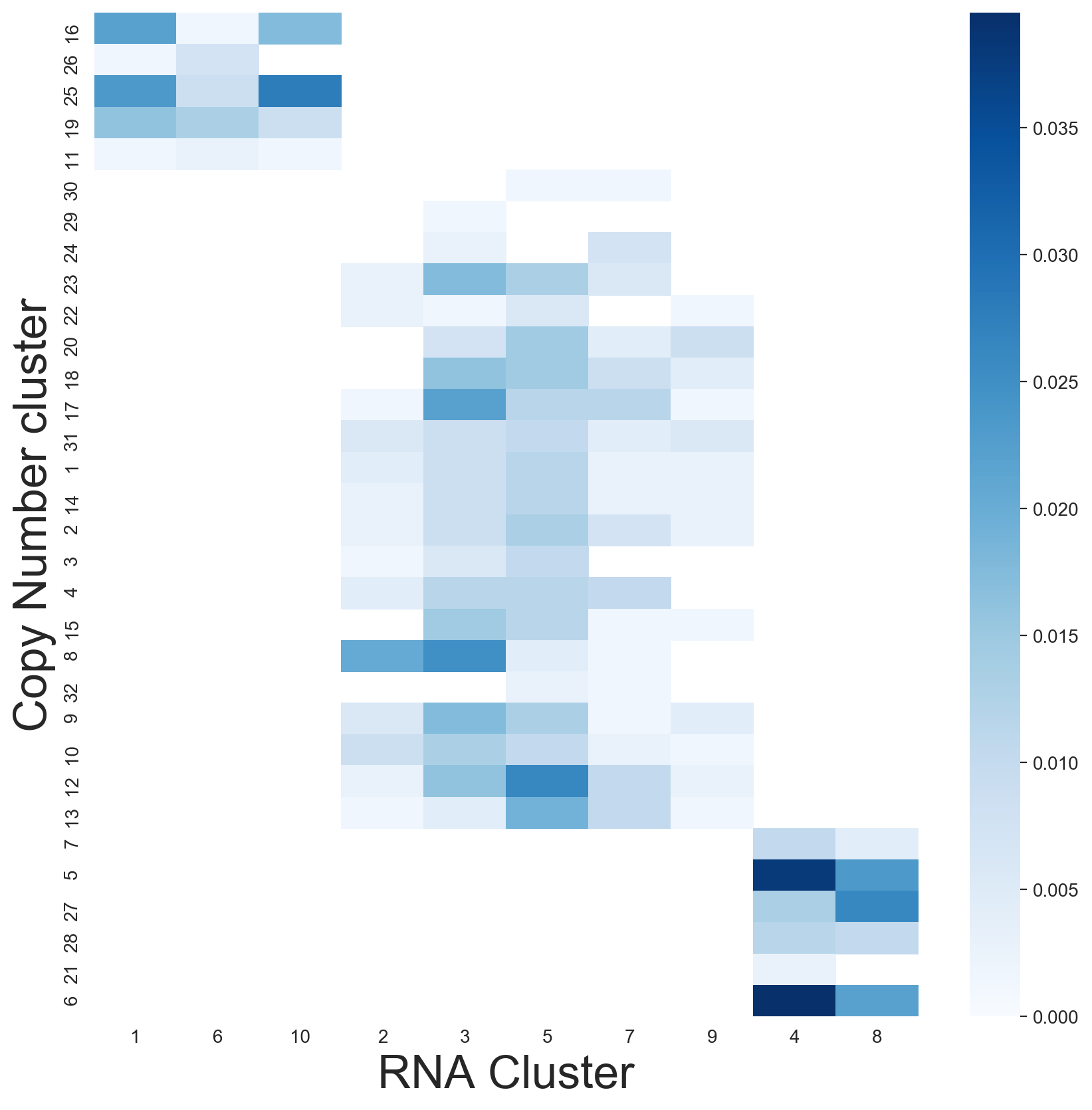}
\caption{
Estimate $\widehat{D}$ matrix for RNA vs. CP.
The rows and columns are permuted to reveal the 3 blocks.
}
\label{fig:cp_bd_pi}
\end{subfigure}
\hfill
\begin{subfigure}[t]{0.3\textwidth}
\includegraphics[width=\linewidth, height=1.5\linewidth]{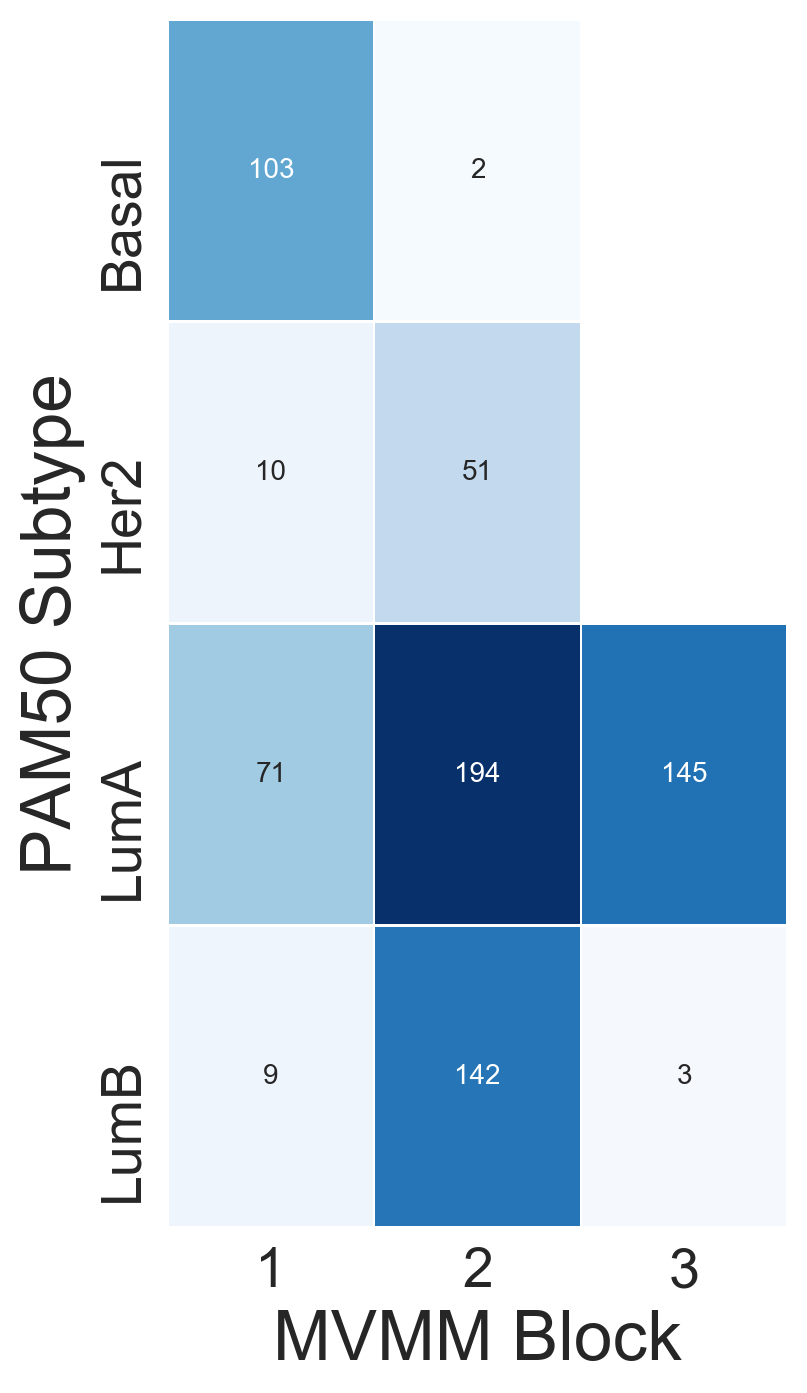}
\caption{
Contingency table for predicted block labels vs. known PAM50 subtype labels.
}
\label{fig:cp_block_vs_pam50_subtype}
\end{subfigure}
\hfill
\begin{subfigure}[t]{0.3\textwidth}
\includegraphics[width=\linewidth, height=1.5\linewidth]{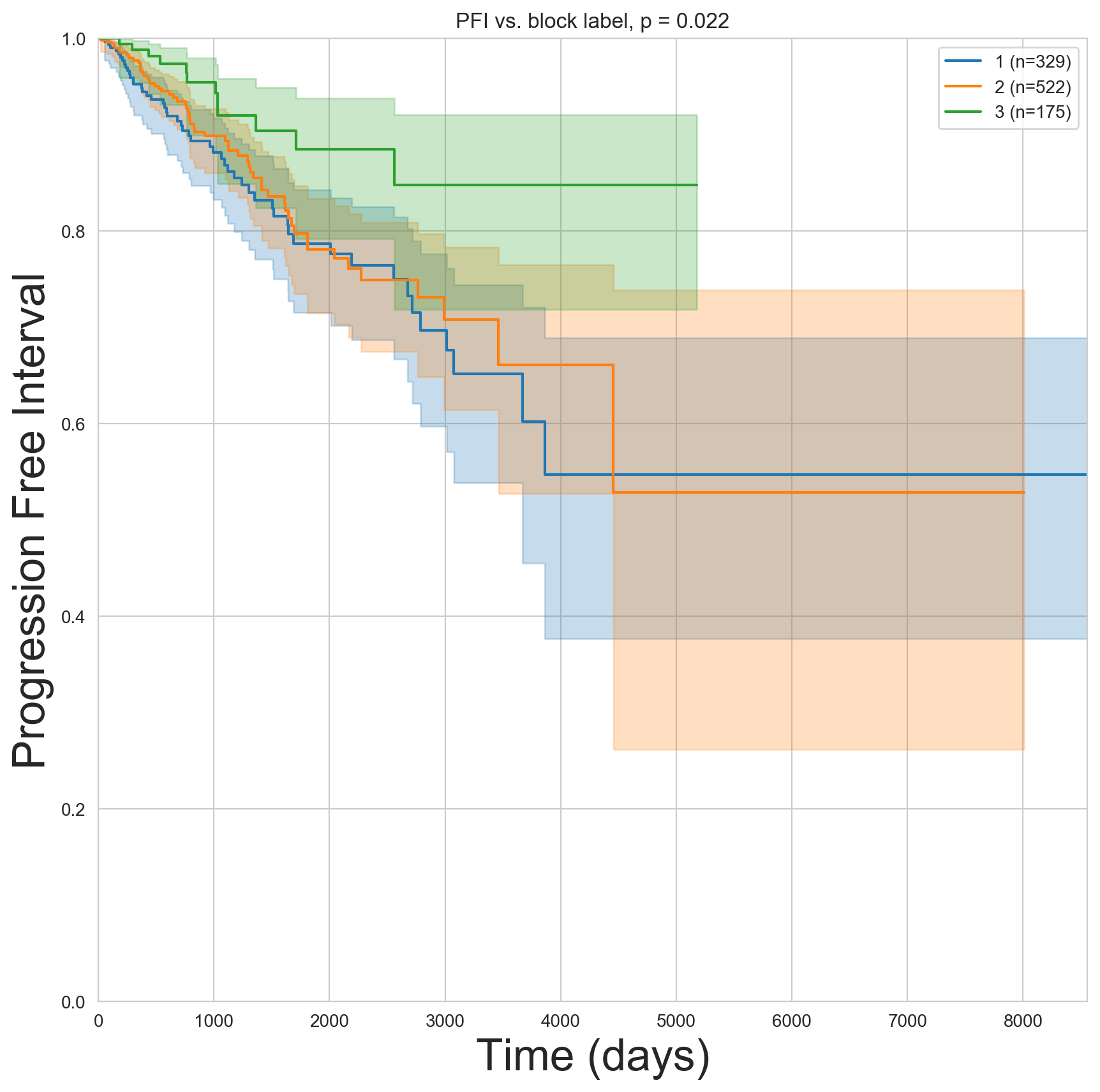} 
\caption{
Kaplan-Meier curves comparing the RNA-CP blocks against PFI.
A LogRank test finds the block label are statistically significantly related to PFI.
}
\label{fig:cp_block_vs_survial} 
\end{subfigure}
\caption{
The first block picks out Basal like tumors as well as a few Luminal A tumors.
The third block picks out Luminal A tumors that tend to have better survival based on PFI.
}
\label{fig:tcga_figs}
\end{figure}

We next investigate the RNA/CP blocks using two additional clinical variables: PAM50 subtype (Basal-like, Luminal A, Luminal B, Her2-enriched) and survival measured by \textit{progression free interval} (PFI) as recommended by \citep{liu2018integrated}.
Figure \ref{fig:cp_block_vs_pam50_subtype} shows block 1 picks out the Basal-like subtype, which is known to have a strong genomic signal in each platform \citep{cancer2012comprehensive, hoadley2018cell}.
Figures \ref{fig:cp_block_vs_pam50_subtype} and \ref{fig:cp_block_vs_pam50_subtype} show block 3 picks out Luminal A tumors that have better survival.

\subsection{Neuron cell types} \label{ss:cell_types}

Integrative clustering has become increasingly important for neuron subtype discovery \citep{gouwens2019classification, gouwens2020toward}.
Neuroscientists are now able to collect a variety of data modalities from individual mouse neurons including transcriptomic, morphological and electrophysiological features.
We apply the bd-MVMM to an excitatory mouse neuron data set obtained from the Allen Institute \citep{gouwens2020toward}.
This two-view data set has 44 electrophysiological (EPHYS) features and 69 transcriptomic features (RNA) available for $n= 4,269$ excitatory neurons.
The EPHYS features were obtained using sparse PCA on 12 raw electrophysiological time series recordings in an awake mouse as discussed in \citep{gouwens2019classification}.
The RNA features are obtained by first selecting the $4,019$ most differentially expressed genes as in \citep{tasic2018shared}, applying a log transform then extracting the top $69$ PCA features.
This PCA  rank was selected using the singular value thresholding method discussed in \citep{gavish2014optimal}.
We first determine the number of clusters in each view by fitting a Gaussian mixture model (with diagonal covariances) to each view marginally; BIC selects 47 EPHYS clusters and 41 RNA clusters.

\begin{figure}[H]
 \centering
 \begin{subfigure}[t]{0.5\textwidth}
\includegraphics[width=1\linewidth, height=1\linewidth]{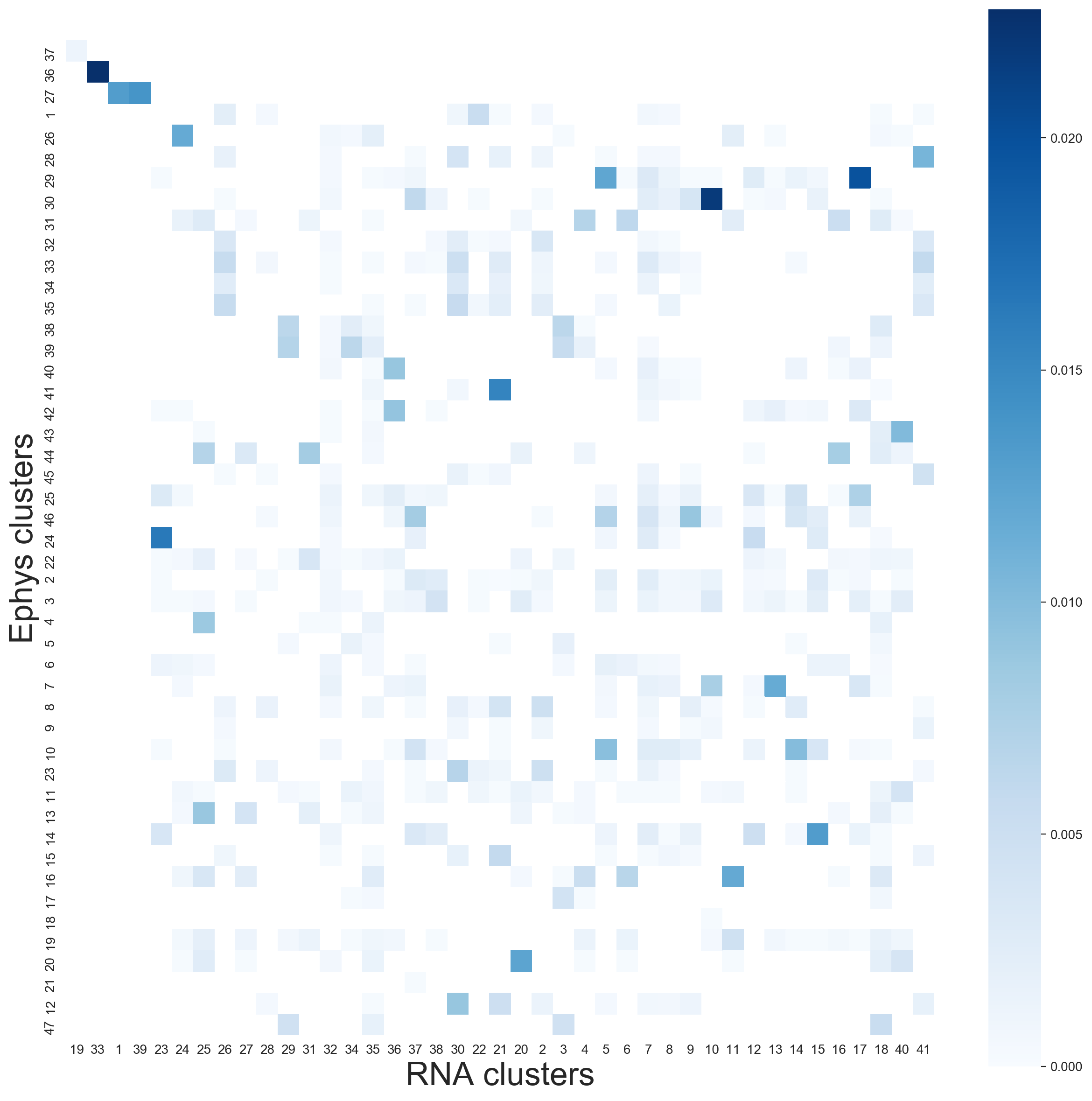}
\caption{
Estimated $\widehat{\blckdiag}$ matrix for EPHYS vs. RNA.
The rows/columns are permuted to reveal the block diagonal structure.
}
\label{fig:mouse_et_pi}
\end{subfigure} %
\hfill
\begin{subfigure}[t]{0.48\textwidth}
\centering
\includegraphics[width=1\linewidth, height=1\linewidth]{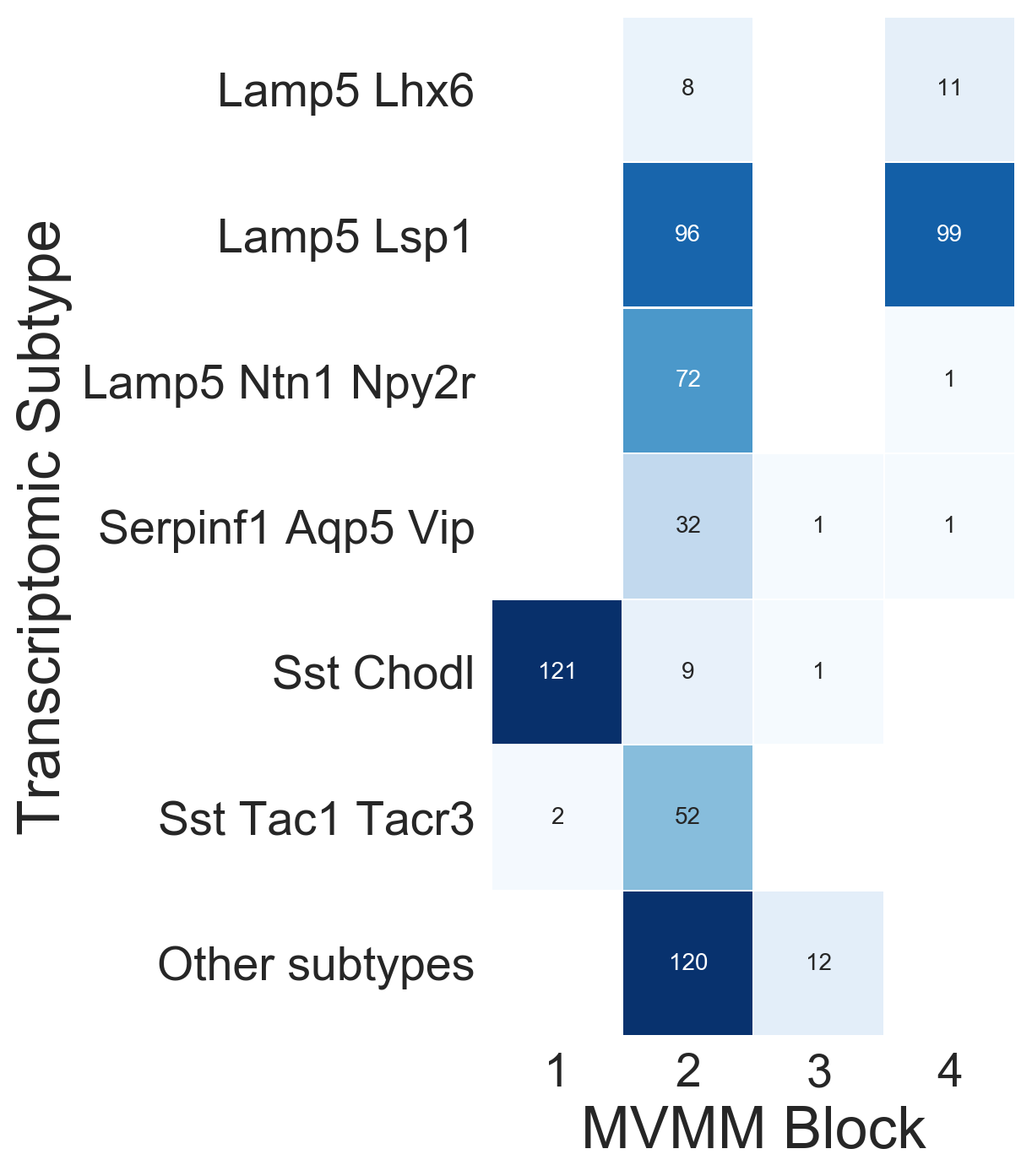}
\caption{
Predicted block labels vs. known transcriptomic subtypes. 
Block 1 picks out the ``Sst Chodl" subtype; block 4 picks out ``Lampp5 Lhx6" and ``Lamp5 Lsp1" subtypes.
}
\label{fig:block_vs_transcr_subtype}
\end{subfigure}
\caption{
MVMM blocks for EPHYS vs. RNA.
}
\label{fig:mouse_et_blocks}
\end{figure}

\begin{figure}[H]
 \centering
 \begin{subfigure}[t]{0.66\textwidth}
\includegraphics[width=1\linewidth, height=1\linewidth]{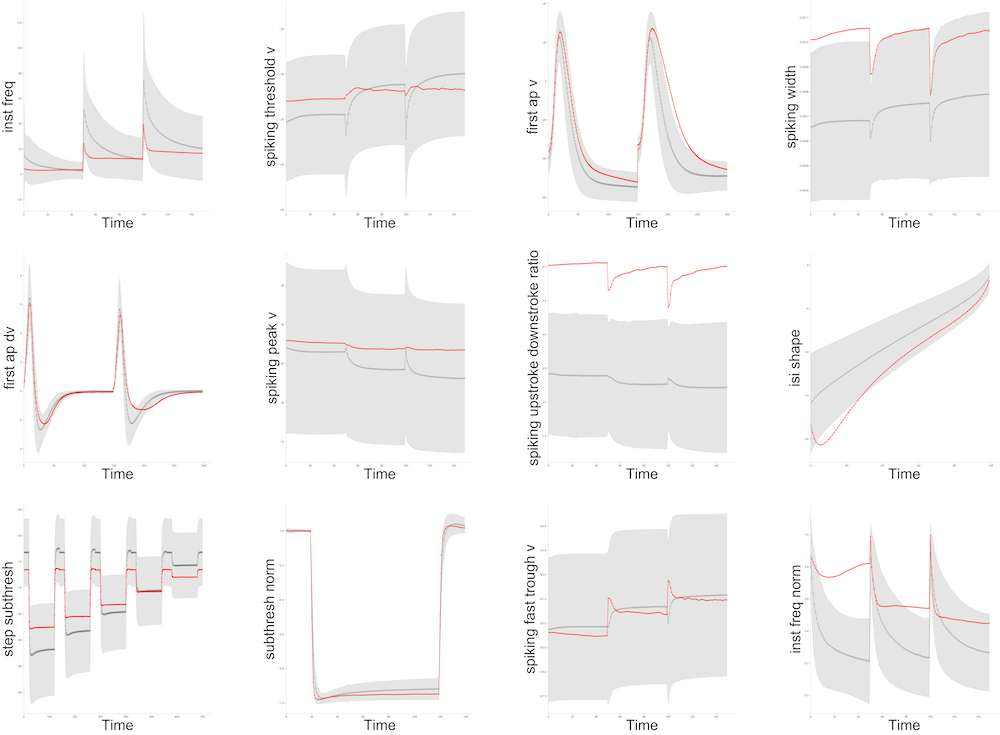}
\caption{
Visualization of EPHYS cluster 36.
The grey line shows the overall mean for each time series and the grey shaded area shows $\pm$ 1 standard deviation.
The red lines show the cluster mean of the raw EPHYS recordings as described in the below caption.
}
\label{fig:ephys_curve_cl_36}
\end{subfigure} %
\hfill
\begin{subfigure}[t]{0.32\textwidth}
\centering
\includegraphics[width=1\linewidth, height=2\linewidth]{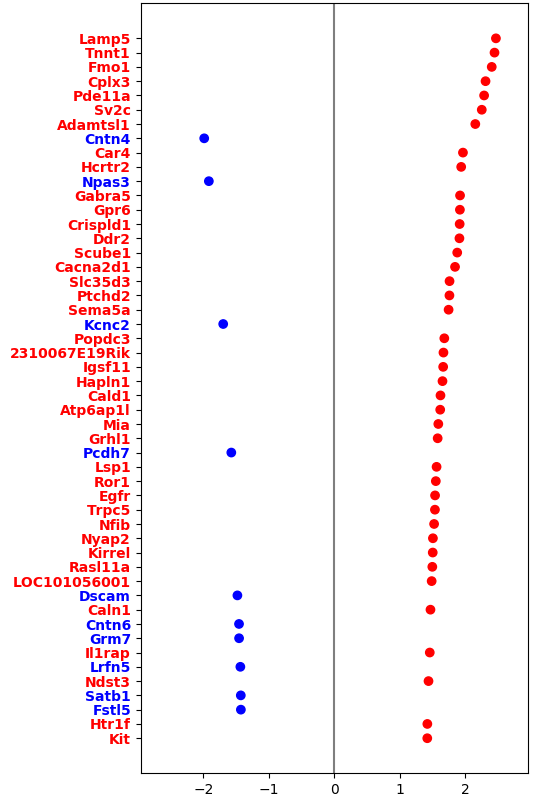}
\caption{
Visualization of the top 50 genes for RNA cluster 33.
The values shown are the standardized difference of the cluster mean minus the overall mean scaled by the overall sample standard deviation.
}
\label{fig:transcr_cl_33}
\end{subfigure}
\caption{
Block 4 identifies EPHYS cluster 36 (Figure \ref{fig:ephys_curve_cl_36}) with RNA cluster 33 (Figure \ref{fig:transcr_cl_33}).
Both figures show visualizations of the cluster means of the raw variables.
While the clustering algorithm was run on PCA features, we show the means on the scale of the original features.
To represent the mean on the raw data scale we compute a weighted average of all observations, where the weights are given by the cluster prediction probabilities (i.e. this is essentially the M-step for the Gaussian mean parameter).
}
\label{fig:mouse_et_block_4}
\end{figure}

We next fit a block diagonal MVMM to this two-view data set and select 4 blocks using BIC.
There is one large $38 \times 43$  block (i.e. 38 RNA clusters and 43 RNA clusters) and the other blocks are $1\times 1$, $1\times 1$ and $2 \times 1$.
This block diagonal structure suggest there are a handful of jointly well defined clusters while the rest of the information is mixed between the two views.
Previous research has identified 60 RNA clusters \citep{tasic2018shared}, which we compare to the predicted block labels found by the MVMM (Figure \ref{fig:block_vs_transcr_subtype}).
This figure shows, for example, block 1 picks out the ``Sst Chodl" subtype and block 4 picks out ``Lampp5 Lhx6" and ``Lamp5 Lsp1" subtypes.

Figure \ref{fig:mouse_et_block_4} takes a closer look at block 4 that identifies RNA cluster 33 with EPHYS cluster 36.
Figure \ref{fig:ephys_curve_cl_36} shows a visualization of EPHYS cluster 36's mean for each of the raw EPHYS response variables. 
This cluster, for example, has a higher than average ``spiking width" and ``spiking upstroke downstroke ratio" responses.
Figure \ref{fig:transcr_cl_33} shows the RNA cluster 33's mean on the scale of the standardized residual from the overall mean (i.e. the value shown for each variable is $\frac{\text{cluster mean} - \text{overall mean}}{\text{sample standard deviation}}$).



\section{Conclusion}
We presented two methods to estimate the sparsity structure of the $\Pi$ matrix for the multi-view mixture model.
The log-MVMM presented in Section \ref{s:sparsity_log_pen} makes no assumption about the structure of the sparsity while the bd-MVMM presented in Section \ref{s:block_diag} assumes there is a block diagonal structure.
These methods allow scientists to explore how cluster information is spread across multi-view data sets.

The simulations in Sections \ref{s:simulations} and \ref{s:add_sims} show the modified BIC criteria often works well for the block diagonal MVMM, but tends to select too few clusters for the log penalized MVMM.
Future work may establish better model selection methods e.g. based on \citet{chen2008extended} or \citet{fu2020estimating}.

The main computational bottleneck for the block diagonal MVMM is the convex Problem \eqref{prob:mvmm_block_diag_extremal_mstep} in the M-step.
For simplicity we use an off the shelf \textit{second order cone program} solver \citep{domahidi2013ecos, diamond2016cvxpy}.
A better algorithm may significantly speed up this step.


\section*{Acknowledgements}

We thank Nathan Gowens, Katherine Hoadley, Jonathan Williams, and Daniela Witten for their insightful discussion and guidance.
This material is based upon work supported by the National Science Foundation under Award No. 1902440.

\appendix

\section{Block diagonal multi-arrays} \label{s:multiarray}

The approach discussed in Section \ref{s:block_diag} for enforcing block diagonal constraints on matrices extends to multi-arrays $X\in \mathbb{R}^{d_1 \times \dots \times d\vs{V}}$.
Consider the following problem
\begin{equation} \label{eq:multi_array_blck_diag_opt_problem}
\begin{aligned}
&\underset{X\in \mathbb{R}^{d_1 \times \dots \times d\vs{V}}}{\text{minimize}}   & & f(X)\\
& \text{subject to} & & X \ge 0 \text{ and } X \text{ is block diagonal with at least } B \text{ blocks up to permutations}. \\
\end{aligned}
\end{equation}

\begin{figure}[H]
\begin{subfigure}[t]{0.32\textwidth}
\includegraphics[width=\linewidth]{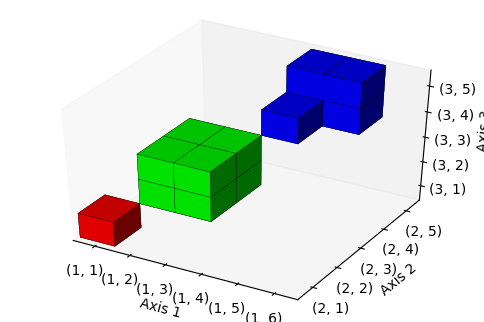} 
\caption{
The support of $X$ where the highlighted entries are non-zero.
Note there is a 2d slice of 0s.
}
\label{fig:block_diag_multiarray}
\end{subfigure}
\hfill
\begin{subfigure}[t]{0.32\textwidth}
\includegraphics[width=\linewidth]{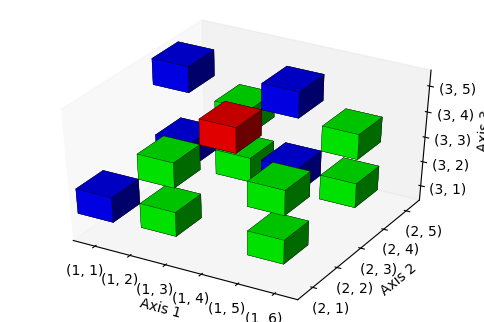} 
\caption{
Same as Figure \ref{fig:block_diag_multiarray}, but the axes have been permuted.
}
\label{fig:block_diag_multiarray_perm}
\end{subfigure}
\hfill
\begin{subfigure}[t]{0.32\textwidth}
\includegraphics[width=\linewidth]{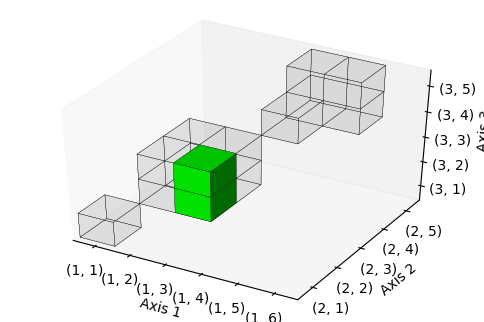}
\caption{
The weight of edge $\{(1, 3), (2, 3)\}$ is $A_{(1, 3), (2, 3)}  = X_{3, 3, 2} + X_{3, 3, 3}$.
}
\label{fig:multiarray_edge_summing}
\end{subfigure}

\caption{
A block diagonal multi-array $X \in \mathbb{R}^{6 \times 5 \times 5}$ with three blocks up to permutations.
}
\label{fig:block_diag_multiarray_viz}
\end{figure}

First we extend definitions for matrices given in Section \ref{s:block_diag} to multi-arrays.
In gory detail, a block of a multi-array is a $V$-hypercube of coordinates, $\mathcal{B} = [L_1, U_1] \times \dots \times [L_V, U_V] \subseteq \mathbb{Z}^V$,
such that $X_{i_1, \dots, i_V} = 0$ if there is a $k \in [V]$ such that $i_k \notin [L_k, U_k]$ but $i_j \in [L_j, U_j]$ for any $j\neq k$.
Figure \ref{fig:block_diag_multiarray} shows a block diagonal multi-array $X \in \mathbb{R}^{6 \times 5 \times 5}$ with three blocks e.g. the upper right block is $[4, 5] \times  [4, 5]  \times  [4, 5]$.
For a fixed block diagonal multi-array, we take the convention that blocks must have at least one non-zero entry and anything that can be a block is a block.
For a multi-array $X$ whose axes are allowed to be permuted we say ``$X$ is block diagonal with $NB(X)$ blocks up to permutations'' where
\begin{equation*} \label{eq:num_blocks_multiarray}
\text{NB}(X) :=\max \{B | \text{the axes of } X \text{ can be permuted to create a } B \text{ block, block diagonal multi-array}\}
\end{equation*}
Any permutation of the axes of $X$ which achieves the above maximum is called a \textit{maximally block diagonal permutation}.
The multi-arrays in Figures \ref{fig:block_diag_multiarray} and \ref{fig:block_diag_multiarray_perm} both have three blocks up to permutations.

Next we construct a graph that captures the permutation invariant block diagonal structure of a multi-array.
Let $G(X)$ be a weighted, $V$-partite graph whose vertex set is
$$
\mathcal{V} := \{ (v, k) | k \in  [d\vs{v}] \text{ and } v \in [V] \}
$$
i.e. the generalization of rows and columns to multi-arrays.
There can only be an edge between two vertices on different axes i.e. $(a, j\vs{a})$ and $(b, j\vs{b})$ where $a \neq b$. 
The weight of such an edge is given by
$$
A_{(a, j\vs{a}), (b, j\vs{b})} (X) := \sum_{k\vs{1}=1}^{K\vs{1}} \dots  \sum_{k\vs{a-1}=1}^{K\vs{a-1}}   \sum_{k\vs{a+1}=1}^{K\vs{a+1}}  \dots  \sum_{k\vs{b-1}=1}^{K\vs{b-1}}  \sum_{k\vs{b+1}=1}^{K\vs{b+1}} \dots  \sum_{k\vs{V}=1}^{K\vs{V}} X_{k\vs{1}, \dots, j\vs{a}, \dots, j\vs{b}, \dots k\vs{V}} 
$$
i.e. summing over all entries of $X$ where the $a$th axis is fixed at $j\vs{a}$ and the $b$th axis is fixed at $j\vs{b}$.
Here $A(X) \in \mathbb{R}^{\sum_{v=1}^V d\vs{V} \times \sum_{v=1}^V d\vs{V} }$ is the adjacency matrix of $G(X)$.
This adjacency matrix $A(X)$ is equivalent to the hypergraph adjacency matrix given in \citet{zhou2007learning}.

The edge $\{(a, j_a), (b, j_b)\}$ is present in $G(X)$ if and only if there there is a tuple $(k\vs{1}, \dots, j\vs{a}, \dots, j\vs{b}, \dots, k\vs{V})$ such that $X_{k\vs{1}, \dots, j\vs{a}, \dots, j\vs{b}, \dots, k\vs{V}} \neq 0$.
For example in Figure \ref{fig:block_diag_multiarray} the edge $\{(2, 4), (3, 4)\}$ is present, but the edge $\{(2, 4), (3, 5) \}$ is not.
A vertex $(v, k)$ is isolated if $X_{j\vs{1}, \dots, j\vs{V}} = 0$ for all $\{j\vs{1}, \dots, j\vs{V} | j\vs{v} = k\}$ i.e. there is a $V-1$ dimensional slice of zeros (e.g. $(1, 6)$ is the only isolated vertex in Figure \ref{fig:block_diag_multiarray}).

The symmetric Laplacian of this graph captures the block diagonal structure of $X$ as follows.
\begin{proposition} \label{prop:sym_lap_block_diag_multiarray}
The following are equivalent for $1 \le B +  \sum_{v=1}^V Z\vs{v} \le \min(d\vs{1}, \dots, d\vs{V})$
\begin{enumerate}

\item $X$ is block diagonal up to permutations with $B$ blocks and has  $Z\vs{v}$ $V-1$ dimensional slices of zeros on the $v$th axis.

\item $G(A)$ has $B$ connected components with at least two vertices and $ \sum_{v=1}^V Z\vs{v}$ isolated vertices.

\item $L_{\text{sym}}(A(X))$ has exactly $B$ eigenvalues equal to 0.

\item $L_{\text{un}}(A_{\text{bp}}(X))$ has exactly $B + \sum_{v=1}^V Z\vs{v} $ eigenvalues equal to 0.

\end{enumerate}
Additionally, the number of eigenvalues equal to 1 of the symmetric Laplacian is at least $\sum_{v=1}^V Z\vs{v}$.  

\end{proposition}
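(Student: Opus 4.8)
The plan is to prove the four-way equivalence by reducing most of it to the graph-theoretic facts already used for Proposition~\ref{prop:sym_lap_block_diag}, and isolating the genuinely new content in the combinatorial equivalence $(1)\Leftrightarrow(2)$. The key observation is that the equivalences among $(2)$, $(3)$ and $(4)$ make no use of the bipartite structure exploited in Proposition~\ref{prop:sym_lap_block_diag}: they are statements about an arbitrary weighted, undirected graph $G(X)$ together with its unnormalized and symmetric Laplacians, with no reference to whether the graph is bipartite or $V$-partite. Hence I would first dispose of these spectral equivalences, then treat $(1)\Leftrightarrow(2)$, and finally the eigenvalue-one count.

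For $(2)\Leftrightarrow(4)$ I would invoke the standard fact that the multiplicity of the eigenvalue $0$ of the unnormalized Laplacian $L_{\text{un}}(A(X))$ equals the number of connected components of $G(X)$, counting each isolated vertex as a singleton component; under $(2)$ this number is $B + \sum_{v=1}^V Z\vs{v}$, and the converse reads the same count backwards. For $(2)\Leftrightarrow(3)$ I would use the pseudo-inverse convention for $L_{\text{sym}}$, so that an isolated vertex $(v,k)$ satisfies $[L_{\text{sym}}]_{(v,k),(v,k)}=1$ and is decoupled from the rest of the matrix: each connected component with at least two vertices contributes exactly one zero eigenvalue, with eigenvector supported on that component, while each isolated vertex contributes an eigenvalue $1$ rather than $0$. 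Thus $L_{\text{sym}}(A(X))$ has exactly as many zero eigenvalues as there are components with at least two vertices, namely $B$. These arguments are identical to those in the proof of Proposition~\ref{prop:sym_lap_block_diag} and do not depend on $V$.

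The substantive step is $(1)\Leftrightarrow(2)$. First I would record the elementary fact, immediate from the definition of $A(X)$, that the edge $\{(a,j\vs{a}),(b,j\vs{b})\}$ is present in $G(X)$ exactly when some nonzero entry of $X$ has its $a$th axis fixed at $j\vs{a}$ and its $b$th axis fixed at $j\vs{b}$, and that a vertex $(v,k)$ is isolated exactly when the slice $\{X_{j\vs{1},\dots,j\vs{V}} : j\vs{v}=k\}$ is identically zero. The heart of the argument is then to show that the coordinates grouped into a single block (after a maximally block diagonal permutation of the axes) are precisely the non-isolated vertices of one connected component of $G(X)$. In one direction, if $X$ is already block diagonal with blocks $\mathcal{B}_\beta = \prod_v [L_v^\beta, U_v^\beta]$, the defining property of a block---that a tuple leaving $\mathcal{B}_\beta$ along a single axis while remaining inside along the others is forced to be zero---implies that no edge can join a vertex of $\mathcal{B}_\beta$ to a vertex of a different block, so distinct blocks land in distinct components; within a block, connectivity follows because the block has a nonzero entry, whose axis indices are mutually linked, and the block indices are then linked transitively through such entries. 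In the other direction, given the components of $G(X)$, I would permute the indices on each axis so the vertices of each multi-vertex component occupy a contiguous range, and verify via the edge/isolation characterization above that the resulting ranges form valid hypercube blocks, with maximality of the block count matching the number of multi-vertex components.

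Finally, for the lower bound on the multiplicity of the eigenvalue $1$, I would note that each of the $\sum_{v=1}^V Z\vs{v}$ isolated vertices $(v,k)$ yields a standard basis vector $e_{(v,k)}$ that is an eigenvector of $L_{\text{sym}}(A(X))$ with eigenvalue $1$, since the corresponding row and column of $\text{diag}(\text{deg}(A(X)))^{-1/2} A(X) \, \text{diag}(\text{deg}(A(X)))^{-1/2}$ vanish under the pseudo-inverse convention; these vectors are mutually orthogonal, giving at least $\sum_{v=1}^V Z\vs{v}$ such eigenvalues. I expect the main obstacle to be the combinatorial equivalence $(1)\Leftrightarrow(2)$, and specifically the bookkeeping of the ``single-axis exit forces a zero'' condition in $V$ dimensions: unlike the matrix ($V=2$) case, one must check that connectivity of the $V$-partite graph still forces the nonzero support to assemble into axis-aligned hypercubes rather than a more complicated pattern, and that a maximally block diagonal permutation of the axes realizes exactly the component decomposition.
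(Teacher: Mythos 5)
Your overall architecture matches the paper's: the paper's proof of Proposition \ref{prop:sym_lap_block_diag_multiarray} is literally the one-line remark that the proof of Proposition \ref{prop:sym_lap_block_diag} generalizes, and, like you, that proof routes the spectral equivalences through Proposition \ref{prop:sym_lap_spectrum_ccs} (multi-vertex components give the zero eigenvalues of $L_{\text{sym}}$, isolated vertices give eigenvalues equal to $1$) and the standard zero-eigenvalue count for $L_{\text{un}}$. Those parts of your plan, and the eigenvalue-one lower bound, are sound. The genuine gap is in your forward direction of $(1)\Rightarrow(2)$: you state the defining property of a block as ``a tuple leaving $\mathcal{B}_\beta$ along a single axis while remaining inside along the others is forced to be zero'' and claim this implies that no edge joins a vertex of $\mathcal{B}_\beta$ to a vertex of a different block. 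For $V \ge 3$ that implication is false. Take $V = 3$ and $X \in \mathbb{R}^{6 \times 6 \times 6}$ whose support is $[1,2]^3 \cup [3,4]^3 \cup [5,6]^3 \cup \{(1,3,5)\}$. Each hypercube $[2b-1,2b]^3$, $b = 1,2,3$, satisfies the single-axis-exit condition: the only off-hypercube nonzero tuple, $(1,3,5)$, exits each of the three hypercubes along \emph{two} axes, so it is never forced to be zero. Hence under your reading $X$ is block diagonal with three blocks, yet $X_{1,3,5} \neq 0$ creates edges joining vertex $(1,1)$ to $(2,3)$ and $(3,5)$, merging the three blocks' vertex sets: $G(X)$ has a single connected component, and $L_{\text{sym}}(A(X))$ has one zero eigenvalue, not three. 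So the equivalence you are trying to prove fails under the hypothesis you are working from.

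The repair is to use (or first derive) the stronger \emph{no-straddling} property as the meaning of block diagonality: a nonzero entry cannot be inside a block's range on one axis and outside it on another, i.e.\ every nonzero entry lies entirely inside a single block. When $V = 2$, straddling and single-axis exit coincide, which is exactly why this distinction is invisible in the matrix proof you are generalizing; for $V \ge 3$ (and $B \ge 3$) it is the crux, and it is precisely the issue you flag but do not resolve at the end of your proposal. With no-straddling in hand, your argument does go through: every nonzero entry induces a clique containing exactly one vertex per axis, so edges never cross blocks, every multi-vertex component has at least one index on each axis, and your contiguity-permutation argument handles $(2)\Rightarrow(1)$. One smaller gloss to fix as well: within-block connectivity does not follow from ``transitive linking through nonzero entries'' alone (a block whose sub-array looks like $\text{diag}([1,1])$ has a disconnected graph); it follows from maximality of $B = \text{NB}(X)$, since a block whose graph split into two components could be permuted into two blocks, exactly as in the paper's isomorphism argument for the matrix case.
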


We now have that Problem \eqref{eq:multi_array_blck_diag_opt_problem} is equivalent to
\begin{equation} \label{eq:sym_lap_constr_problem_multiarray}
\begin{aligned}
&\underset{X\in \mathbb{R}^{d\vs{1} \times \dots \times d\vs{v}}}{\text{minimize}}   & & \ell(X)\\
& \text{subject to} & &  X \ge 0 \text{ and } L_{\text{sym}}(A(X)) \text{ has } B \text{ eigenvalues equal to 0}.
\end{aligned}
\end{equation}
Following Section \ref{ss:block_diag_opt}, solve the related problem 
\begin{equation} \label{prob:bd_multiarray_extremal_rep}
\begin{aligned}
& \underset{X\in \mathbb{R}^{d\vs{1} \times \dots \times d\vs{V}}, U\in \mathbb{R}^{\sum_{v=1}^V d\vs{v} \times B}}{\text{minimize}}   & & f(X) +   \alpha \text{Tr} \left( U^T L_{\text{un}}(A(X)) U\right)   \\
& \text{subject to} & & X \ge 0 \text{ and } U^T \text{diag}(\text{deg}(A(X))) U = I_B,
\end{aligned}
\end{equation}
for a sufficiently large value of $\alpha$.
Note that $A(\cdot)$ is a linear function so the second term in the objective and the constraints are linear in $A$.
An alternating algorithm similar to the one discussed in Section \ref{ss:sym_lap_pen_alt_algo} can be used to solve this problem.


\section{Extremal characterization of weighted sums of generalized eigenvalues}  \label{s:extremal}

For a pair of symmetric matrices $A, B \in \mathbb{R}^{n \times n}$ denote the matrices whose columns are the largest generalized $K$ eigenvectors of $(A, B)$ by
$$\mathcal{GE}_K(A, B) := \{ U |  A  U_k  = \lambda_k B U_k \text{ for } k \in [K],  U^T B U = I_K\} \subseteq \mathbb{R}^{n \times K} $$
where $\lambda_1 \ge \dots \ge \lambda_K$ are the largest generalized eigenvalues of $(A, B)$.
Similarly, let $\mathcal{GE}_{(K)}(A, B)$ be the analogous set for the smallest $K$ generalized eigenvalues.
The following proposition shows that the generalized eigenvalues of $(A, B)$ can still be well defined when $B$ has a non-trivial kernel and can be ordered (since they are real).
\begin{proposition} \label{prop:gevals_well_def_sing_B}
If $\text{ker}(B) \subseteq \ker(A)$ and $m = n - \text{dim}(\text{ker}(B))$ then $(A, B)$ has $m$ real generalized eigenvalues.
These generalized eigenvalues are given by the eigenvalues of $B^{-1/2} A B^{-1/2}$ excluding the eigenvalues whose eigenvectors live in the kernel of $B$ where the inverse is taken to be the Moore-Penrose pseudo inverse.
 \end{proposition}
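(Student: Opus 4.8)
The plan is to diagonalize $B$ and reduce the (possibly singular) generalized eigenvalue problem to a regular one living on $\text{range}(B)$, where the hypothesis $\ker(B)\subseteq\ker(A)$ does the essential work of decoupling the kernel directions. Since $B^{1/2}$ appears in the statement I read $B$ as symmetric positive semidefinite (the case relevant here, where $B=\text{diag}(\text{deg}(A))$). First I would write the spectral decomposition $B = Q\,\bigl(\begin{smallmatrix} D & 0 \\ 0 & 0\end{smallmatrix}\bigr)\,Q^T$ with $Q=[Q_1\ Q_2]$ orthogonal, $D\in\mathbb{R}^{m\times m}$ diagonal and positive definite, the columns of $Q_1$ spanning $\text{range}(B)$ and those of $Q_2$ spanning $\ker(B)$. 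Conjugating, set $\tilde A := Q^T A Q = \bigl(\begin{smallmatrix} A_{11} & A_{12} \\ A_{12}^T & A_{22}\end{smallmatrix}\bigr)$, so that the Moore--Penrose square root is $B^{-1/2}=Q\,\bigl(\begin{smallmatrix} D^{-1/2} & 0 \\ 0 & 0\end{smallmatrix}\bigr)\,Q^T$.

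The key step is that $\ker(B)\subseteq\ker(A)$ forces $A Q_2 = 0$, hence $Q^T A Q_2 = 0$, i.e.\ the last $n-m$ columns of $\tilde A$ vanish; by symmetry this gives $A_{12}=0$ and $A_{22}=0$, so $\tilde A = \bigl(\begin{smallmatrix} A_{11} & 0 \\ 0 & 0\end{smallmatrix}\bigr)$ is block diagonal. Writing a candidate eigenvector as $v = Q\,(x^T,y^T)^T$, the pencil equation $Av=\lambda B v$ becomes, after multiplying by $Q^T$, the top block $A_{11}x = \lambda D x$ together with an identically satisfied kernel block, while the normalization $v^T B v = x^T D x = 1$ forces $x\neq 0$ (so the free direction $y$ yields no admissible eigenvector). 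Thus the generalized eigenvalues of $(A,B)$ coincide exactly with those of the regular pencil $(A_{11},D)$ with $D\succ 0$. Using $A_{11}x=\lambda D x \iff D^{-1/2}A_{11}D^{-1/2}\,(D^{1/2}x) = \lambda\,(D^{1/2}x)$, these are precisely the $m$ real eigenvalues of the symmetric matrix $D^{-1/2}A_{11}D^{-1/2}$, which establishes the first assertion.

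For the second assertion I would compute directly $Q^T\bigl(B^{-1/2}AB^{-1/2}\bigr)Q = \bigl(\begin{smallmatrix} D^{-1/2}A_{11}D^{-1/2} & 0 \\ 0 & 0\end{smallmatrix}\bigr)$ from the displayed forms of $\tilde A$ and $B^{-1/2}$. Hence $M:=B^{-1/2}AB^{-1/2}$ is orthogonally similar to a block-diagonal matrix whose top block carries the $m$ eigenvalues found above and whose bottom block is zero with eigenspace $\ker(B)$. Since $M$ is symmetric and $B^{-1/2}$ annihilates $\ker(B)$, the subspace $\ker(B)\subseteq\ker(M)$ is $M$-invariant, so one may choose an orthonormal eigenbasis of $M$ split between $\text{range}(B)$ and $\ker(B)$; deleting the $n-m$ eigenvalues whose eigenvectors lie in $\ker(B)$ leaves exactly the generalized eigenvalues, as claimed.

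The main point requiring care is the phrasing of the second assertion when $D^{-1/2}A_{11}D^{-1/2}$ itself has a zero eigenvalue: then $\ker(M)$ strictly contains $\ker(B)$ and individual eigenvectors inside the zero eigenspace are not unique, so ``excluding the eigenvalues whose eigenvectors live in $\ker(B)$'' must be interpreted via the $M$-invariant splitting $\mathbb{R}^n = \text{range}(B)\oplus\ker(B)$ rather than via individual zero eigenvectors. Pinning this down through the invariant-subspace decomposition is the one subtle step; the remainder is the routine block computation sketched above.
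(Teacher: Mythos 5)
Your proof is correct and follows essentially the same route as the paper's: both spectrally decompose $B$ into its range and kernel, use $\ker(B)\subseteq\ker(A)$ to obtain a block-diagonal form, and reduce to the regular symmetric eigenproblem $D^{-1/2}A_{11}D^{-1/2}$ on $\mathrm{range}(B)$, identifying its eigenvalues with those of $B^{-1/2}AB^{-1/2}$ away from $\ker(B)$. Your write-up is in fact slightly more complete than the paper's, which only constructs $m$ generalized eigenpairs from the reduced block without arguing (as you do via the normalization $v^TBv=1$ forcing $x\neq 0$) that no further generalized eigenvalues exist, and without addressing the non-uniqueness of eigenvectors when the reduced block is itself singular.
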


We adapt a Proposition from \citet{marshall1979inequalities} to obtain an extremal characterization for weighted sums of the largest (smallest) generalized eigenvalues.
This is a generalization of the famous Fan's theorem \citep{fan1949theorem}.
\begin{proposition} \label{prop:weighted_geval_extremal_representation}
Let $A, B \in \mathbb{R}^{n \times n}$ be symmetric.
Assume $B$ is positive semi-definite, $\text{ker}(B) \subseteq \text{ker}(A)$ and $K \le n - \text{dim}(\text{ker}(B))$.
If $w \in \mathbb{R}^K$, such that $w_1 \ge w_2 \ge \dots \ge w_K $, then
\begin{equation} \label{eq:weighted_sum_largest_eval_extremal}
\begin{aligned}
& \sum_{j=1}^K w_j \lambda_{j}(A, B)  = &    & \underset{U\in \mathbb{R}^{n \times K}}{\text{maximum}}  & &  \text{Tr}\left( U^T A U \text{diag}(w) \right)\\ 
& & & \text{subject to} & &  U^T B U = I_K,
\end{aligned}
\end{equation}
where the maximum is attained by any matrix in $U \in \mathbb{R}^{n \times K} \in \mathcal{GE}_K(A, B)$.
Similarly,
\begin{equation}\label{eq:weighted_sum_smallest_eval_extremal}
\begin{aligned}
& \sum_{j=1}^K w_j \lambda_{(j)}(A, B)   = &    & \underset{U\in \mathbb{R}^{n \times K}}{\text{minimum}}  & &  \text{Tr}\left( U^T A U \text{diag}(w) \right)\\ 
& & & \text{subject to} & &  U^T B U = I_K,
\end{aligned}
\end{equation}
where the minimum is attained by any matrix  $U \in \mathbb{R}^{n \times K} \in \mathcal{GE}_{(K)}(A, B)$.
\end{proposition}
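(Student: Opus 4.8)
The plan is to reduce the generalized eigenvalue problem for the pair $(A, B)$ to a standard symmetric eigenvalue problem, where the cited Marshall--Olkin extremal characterization (the weighted generalization of Fan's theorem) applies directly. Since $B$ is positive semi-definite, write $B^{1/2}$ for its symmetric positive semi-definite square root and $B^{-1/2}$ for the Moore--Penrose pseudo-inverse of $B^{1/2}$, and set $M := B^{-1/2} A B^{-1/2}$, which is symmetric. The hypothesis $\text{ker}(B) \subseteq \text{ker}(A)$ together with the symmetry of $A$ forces $\text{range}(A) \subseteq \text{range}(B)$, so that $P A P = A$ where $P = B^{1/2} B^{-1/2}$ is the orthogonal projection onto $\text{range}(B)$; hence $A = B^{1/2} M B^{1/2}$. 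By Proposition \ref{prop:gevals_well_def_sing_B}, the $m := n - \text{dim}\,\text{ker}(B)$ generalized eigenvalues of $(A,B)$ are exactly the eigenvalues of $M$ associated with eigenvectors in $\text{range}(B)$.

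First I would perform the change of variables $W := B^{1/2} U$. For any feasible $U$ (i.e. $U^T B U = I_K$) this gives $U^T B U = U^T B^{1/2} B^{1/2} U = W^T W = I_K$, so $W$ is an orthonormal $K$-frame, and $U^T A U = U^T B^{1/2} M B^{1/2} U = W^T M W$; thus both the objective and the constraint depend on $U$ only through $W$. I would then check that this map carries the feasible set onto the orthonormal $K$-frames whose columns lie in $\text{range}(B)$: the columns of $W = B^{1/2}U$ automatically lie in $\text{range}(B^{1/2}) = \text{range}(B)$, and conversely any such frame is attained, since $U := B^{-1/2} W$ satisfies $B^{1/2} U = P W = W$ and $U^T B U = W^T P W = W^T W = I_K$ (here $K \le m$ guarantees such frames exist). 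Consequently the problem is equivalent to maximizing $\text{Tr}(W^T M W \,\text{diag}(w))$ over orthonormal $K$-frames supported in $\text{range}(B)$, where $M$ restricted to $\text{range}(B)$ is symmetric with eigenvalues $\lambda_1(A,B) \ge \dots \ge \lambda_m(A,B)$.

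With the problem in standard form, I would invoke the weighted Fan / Marshall--Olkin extremal identity: for a symmetric matrix with decreasing eigenvalues and decreasing weights $w_1 \ge \dots \ge w_K$, the maximum of $\text{Tr}(W^T M W \,\text{diag}(w))$ over orthonormal $K$-frames equals $\sum_{j=1}^K w_j \lambda_j$, attained when the columns of $W$ are the top $K$ eigenvectors in order. Translating back through $U = B^{-1/2} W$, the maximizer is precisely a matrix in $\mathcal{GE}_K(A,B)$, which gives \eqref{eq:weighted_sum_largest_eval_extremal}. The minimum statement \eqref{eq:weighted_sum_smallest_eval_extremal} then follows by applying the maximum result to $-A$ (which shares the same kernel, so $\text{ker}(B) \subseteq \text{ker}(-A)$ and $\lambda_{(j)}(A,B) = -\lambda_j(-A,B)$) and negating.

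I expect the main obstacle to be the careful bookkeeping for singular $B$: verifying that the pseudo-inverse identities $A = B^{1/2} M B^{1/2}$ and $B^{1/2} B^{-1/2} = P$ hold, and that the change of variables lands onto the orthonormal frames in $\text{range}(B)$ rather than all of $\mathbb{R}^n$ --- this is exactly where the hypotheses $\text{ker}(B) \subseteq \text{ker}(A)$ and $K \le m$ enter. A secondary point of care is the weighted Fan step itself: the single clean identity $\sum_j w_j \lambda_j$ is the one I would take from Marshall--Olkin for ordered (and, in the form used here, nonnegative) weights, consistent with the paper's application where $w = \mathbf{1}_B$; the underlying Abel-summation argument pairs Fan's prefix-sum inequalities with the monotonicity of $w$.
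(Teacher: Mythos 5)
Your proposal is correct in the regime where the proposition is actually used (nonnegative ordered weights), and it shares the paper's overall strategy --- reduce the generalized eigenvalue problem to the standard weighted Fan / Marshall--Olkin identity through a square-root change of variables --- but the execution is genuinely different in two places. For singular $B$, the paper argues in two stages: it first proves the positive definite case, then chooses an orthogonal $Q$ adapted to $\ker(B)^{\perp}\oplus\ker(B)$, reduces to an $m\times m$ positive definite pair $(\widetilde{A},\widetilde{B})$, and closes with a contradiction argument showing any maximizer of the full problem projects to a feasible competitor for the reduced one. You instead make a single global change of variables $W=B^{1/2}U$ with Moore--Penrose pseudo-inverses, and the content of that contradiction step is replaced by your verification that $U\mapsto W$ maps the feasible set \emph{onto} the orthonormal $K$-frames supported in $\mathrm{range}(B)$ (via $PAP=A$, which indeed follows from $\ker(B)\subseteq\ker(A)$ plus symmetry); this is more direct and dispenses with the case split, at the cost of the pseudo-inverse bookkeeping you correctly identify as the delicate part. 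For the minimum statement, you apply the maximum result to $-A$ and use $\lambda_{(j)}(A,B)=-\lambda_j(-A,B)$, whereas the paper reruns its argument with $H=-\text{diag}(w)$; your route is actually the safer one, since for $w\ge 0$ the matrix $-\text{diag}(w)$ is not positive semi-definite, and the identity $\max\,\text{Tr}(U^TAUH)=\sum_{i=1}^K\lambda_i(A)\lambda_i(H)$ as quoted in the paper then pairs $w_j$ with $\lambda_{K+1-j}(A)$ rather than with the globally smallest eigenvalues $\lambda_{(j)}(A)$, so the paper's shortcut only comes out right after padding $H$ with $n-K$ zero eigenvalues. Finally, your hedge that the weights be nonnegative is not cosmetic but necessary: with $K<n$ the identity \eqref{eq:weighted_sum_largest_eval_extremal} fails when $w_K<0$ (take $A=\text{diag}(1,-1)$, $B=I_2$, $K=1$, $w_1=-1$: the left side is $-1$ while the maximum is $+1$, attained at the \emph{bottom} eigenvector), so the Abel-summation argument you sketch, which needs $w_K\ge 0$, covers exactly the cases the paper applies ($w\in\mathbb{R}^K_+$), while the proposition's claim to allow negative entries overstates what either proof establishes.
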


This proposition allows $U$ to be low rank (as opposed to the full $n \times n$ matrix), permits weighted sums of generalized eigenvalues and says $B$ does not have to be full rank.
Note $w$ is allowed to have negative entries which may be of interest for some applications (e.g. a penalty that encourages some eigenvalues to be large).

Proposition \ref{prop:gevals_well_def_sing_B} shows how the eigenvalues of $L_{\text{sym}}(A_{\text{bp}} (X))$ are related to the generalized eigenvalues of $(L_{\text{un}}(A_{\text{bp}} (X)),\text{diag}( \text{deg}(A_{\text{bp}} (X))))$; the latter are the subset of the former excluding the one eigenvalues that come from degree zero nodes.
 The following corollary shows how problems \eqref{prob:bd_eval_pen}  and \eqref{prob:bd_extremal_rep} and are related; they are the same so long as the solution does not have too many rows/columns of zeros.
 \begin{corollary}  \label{cor:sym_evals_and_gevals}
 Let $X \in \mathbb{R}^{R \times C}+$ and $1 \le K \le \min(R, C)$.
 Let $\widetilde{R}$ denote the number of non-zero rows of $X$ (similarly for $\widetilde{C}$).
 If $K \le \widetilde{R} + \widetilde{C}$ then
 $$
 \lambda_{(k)}\left( L_{\text{sym}}(A_{\text{bp}} (X))\right) = \lambda_{(k)} \left (L_{\text{un}}(A_{\text{bp}} (X)),\text{diag}( \text{deg}(A_{\text{bp}} (X)))\right)
 $$
 for each $1 \le k \le K$. 
  \end{corollary}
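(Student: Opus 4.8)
The plan is to diagonalize both quantities simultaneously by splitting the vertex set of the bipartite graph into its isolated and non-isolated parts. Write $A := A_{\text{bp}}(X)$ and $W := \text{diag}(\text{deg}(A))$, let $\mathcal{Z}$ be the set of isolated vertices (the degree-zero rows and columns), and let $\mathcal{N}$ be its complement. Since $X \ge 0$, a row or column of $A$ vanishes exactly when the corresponding degree is $0$, so $|\mathcal{N}| = \widetilde{R} + \widetilde{C}$ and $\dim \ker(W) = |\mathcal{Z}|$. First I would record two elementary facts: (i) for an isolated vertex, $W e_i = 0$ and $L_{\text{un}}(A) e_i = (\text{diag}(\text{deg}(A)) - A) e_i = 0$, so $\ker(W) \subseteq \ker(L_{\text{un}}(A))$ and Proposition \ref{prop:gevals_well_def_sing_B} applies, giving exactly $m = (R + C) - |\mathcal{Z}| = \widetilde{R} + \widetilde{C}$ real generalized eigenvalues (this is precisely why the hypothesis $K \le \widetilde{R} + \widetilde{C}$ is imposed: it guarantees $\lambda_{(k)}(L_{\text{un}}(A), W)$ is defined for every $k \le K$); and (ii) there are no edges between $\mathcal{N}$ and $\mathcal{Z}$, so in the block ordering $(\mathcal{N}, \mathcal{Z})$ all of $A$, $W$, and $L_{\text{un}}(A)$ are block diagonal with a trivial $\mathcal{Z}$-block.

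The key computation is then to express both spectra through the common block $M := W_{\mathcal{N}}^{-1/2} A_{\mathcal{N}\mathcal{N}} W_{\mathcal{N}}^{-1/2}$. Using the pseudo-inverse convention for $W^{-1/2}$, a direct block calculation gives
\begin{equation*}
L_{\text{sym}}(A) = \begin{bmatrix} I - M & 0 \\ 0 & I \end{bmatrix}, \qquad W^{-1/2} L_{\text{un}}(A) W^{-1/2} = \begin{bmatrix} I - M & 0 \\ 0 & 0 \end{bmatrix}
\end{equation*}
in the $(\mathcal{N}, \mathcal{Z})$ ordering. By Proposition \ref{prop:gevals_well_def_sing_B}, the generalized eigenvalues of $(L_{\text{un}}(A), W)$ are the eigenvalues of $W^{-1/2} L_{\text{un}}(A) W^{-1/2}$ after discarding the $|\mathcal{Z}|$ eigenvectors lying in $\ker(W)$ — i.e. the trivial $\mathcal{Z}$-block — leaving exactly the $\widetilde{R} + \widetilde{C}$ eigenvalues of $I - M$. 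Hence the multiset of generalized eigenvalues equals $\text{spec}(I - M)$, whereas $\text{spec}(L_{\text{sym}}(A)) = \text{spec}(I - M) \cup \{1^{(|\mathcal{Z}|)}\}$ as multisets; the two differ only by $|\mathcal{Z}|$ extra copies of $1$.

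It remains to compare the $k$ smallest elements of these two multisets for $k \le K$, and this is the step I expect to be the main obstacle. Since $\lambda_{(k)}(L_{\text{un}}(A), W) = \lambda_{(k)}(I - M)$ for every $k \le \widetilde{R} + \widetilde{C}$, the claim reduces to showing that inserting $|\mathcal{Z}|$ copies of $1$ into $\text{spec}(I - M)$ does not disturb its smallest $K$ entries. The natural tool is that $A$ is bipartite: conjugating by $\text{diag}(I, -I)$ (which commutes with $W$) negates $W^{-1/2} A W^{-1/2}$, so its spectrum is symmetric about $0$ and thus $\text{spec}(L_{\text{sym}}(A))$ — and likewise $\text{spec}(I - M)$ — is symmetric about $1$. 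Consequently the eigenvalues of $I - M$ that are at most $1$ occupy the leading positions and absorb the inserted $1$'s, so that $\lambda_{(k)}(L_{\text{sym}}(A)) = \lambda_{(k)}(I - M)$ precisely while the $k$-th smallest eigenvalue stays within the block of eigenvalues $\le 1$ — which is exactly the regime relevant to \eqref{prob:bd_eval_pen}, where these eigenvalues are driven to $0$. Pinning down this ordering bookkeeping is the delicate part; once it is in place the termwise equality of the smallest $K$ eigenvalues follows, completing the proof.
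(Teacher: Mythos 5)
Your reduction is sound up to the last step, and it is essentially the paper's own argument made more explicit: the paper likewise invokes Proposition \ref{prop:gevals_well_def_sing_B}, identifies the generalized eigenvalues with the spectrum of the non-isolated block (your $I-M$), and then must argue that the $|\mathcal{Z}|$ inserted eigenvalues equal to $1$ do not disturb the smallest $K$ positions. That final step --- the one you defer as ``ordering bookkeeping'' --- is a genuine gap, and it cannot be closed under the stated hypotheses. The symmetry of $\mathrm{spec}(I-M)$ about $1$, which you invoke in your favor, is exactly what breaks the claim: writing $\widetilde{T}$ for the reduced matrix $T_{\text{sym}}$ restricted to the non-zero rows and columns, $I-M$ has $\min(\widetilde{R},\widetilde{C})$ eigenvalues $1-\sigma_i$, $|\widetilde{R}-\widetilde{C}|$ eigenvalues equal to $1$, and $\min(\widetilde{R},\widetilde{C})$ eigenvalues $1+\sigma_i$; when $\widetilde{T}$ has full rank only $\max(\widetilde{R},\widetilde{C})$ of these $\widetilde{R}+\widetilde{C}$ values are $\le 1$. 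The absorption argument therefore covers only $k\le\max(\widetilde{R},\widetilde{C})$, whereas the corollary permits $K$ as large as $\min\bigl(\min(R,C),\,\widetilde{R}+\widetilde{C}\bigr)$, which is strictly larger whenever $X$ has both a zero row and a zero column.

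Concretely, take
\begin{equation*}
X=\begin{bmatrix} B & 0 \\ 0 & 0 \end{bmatrix}\in\mathbb{R}^{4\times 4}_+,\qquad B=\begin{bmatrix} 2 & 1\\ 1 & 2\end{bmatrix},
\end{equation*}
with $2\times 2$ zero blocks, so that $R=C=4$, $\widetilde{R}=\widetilde{C}=2$, and $K=3$ satisfies both $K\le\min(R,C)$ and $K\le\widetilde{R}+\widetilde{C}$. Here $\widetilde{T}=B/3$ has singular values $1$ and $1/3$, so the generalized eigenvalues of $\bigl(L_{\text{un}}(A_{\text{bp}}(X)),\text{diag}(\text{deg}(A_{\text{bp}}(X)))\bigr)$ are $\{0,\,2/3,\,4/3,\,2\}$, while $\mathrm{spec}\bigl(L_{\text{sym}}(A_{\text{bp}}(X))\bigr)=\{0,\,2/3,\,1,1,1,1,\,4/3,\,2\}$; at $k=3$ the claimed identity reads $1=4/3$. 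So no bookkeeping can finish your proof as written: the statement itself needs the stronger hypothesis $K\le\max(\widetilde{R},\widetilde{C})$, which is automatic when $X$ has no zero rows or no zero columns (the regime Algorithm \ref{algo:wnn_alt_algo} is designed to maintain) but not under the stated condition $K\le\widetilde{R}+\widetilde{C}$. For what it is worth, the paper's own proof makes the identical unjustified leap --- from ``$\lambda_{(k)}(L_{\text{sym}}(A_{\text{bp}}(X)))\le 1$ for $k\le\min(R,C)$'' it infers that the smallest $\widetilde{R}+\widetilde{C}$ eigenvalues avoid eigenvectors in the kernel of the degree matrix, which the example above contradicts --- so you located the true weak point of the corollary, but neither you nor the paper repairs it.
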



\section{Alternating algorithm for block diagonal constraints}  \label{s:wnn}

This section considers the following weighted nuclear norm regularized problem for some $1 \le K \le \min(R, C)$,
\begin{equation}\label{prob:sym_lap_eval_pen} 
\begin{aligned}
& \underset{X\in \mathbb{R}^{R \times C}}{\text{minimize}}   & & f(X) +   \alpha \sum_{j=1}^K  w_j \lambda_{(j)} \left( L_{\text{sym}}(A_{\text{bp}}(X))\right)   \\
& \text{subject to} & &  X \ge 0, 
\end{aligned}
\end{equation}
where $w \in \mathbb{R}^{K}_+$ is a positive weight vector with $w_1 \ge \dots \ge w_K$ e.g. $w_k \propto \frac{1}{k}$.
Problem \eqref{prob:bd_eval_pen} is of course recovered by setting $w = \mathbf{1}_K$.
 
The mild generalization of \eqref{prob:bd_eval_pen} allows us to put more weight on smaller eigenvalues, which can lead to better estimators  \citep{chen2013reduced, gu2014weighted}.
In some applications one might also want to consider \eqref{prob:sym_lap_eval_pen} where $\alpha$ is the (continuous) hyper-parameter controlling the amount of block diagonal regularization instead of \eqref{prob:bd_eval_constr}, which has the (discrete) hyper-parameter of the number of blocks.
We implemented this idea in the context of the block diagonal MVMM simulations in Section \ref{s:simulations}.
Unfortunately, this simulation leads to a null result; we found that continuous block diagonal regularization (e.g. where $w$ is exponentially or polynomially decaying) was not faster or more accurate than the block diagonally constrained method of Section \ref{ss:mvmm_block_diag}.

The following proposition makes the connection between the constrained \eqref{prob:bd_constr} and  \eqref{prob:bd_eval_pen}/\eqref{prob:bd_extremal_rep}.
While global or local solutions to \eqref{prob:bd_constr} are difficult to find,  these points are typically contained in a larger set of points (global or local solutions to \eqref{prob:bd_extremal_rep}), which are easier to find.

\begin{proposition} \label{prop:soln_relations_local}

\begin{enumerate}

\item Suppose $X$ is a global (local) solution of \eqref{prob:bd_eval_pen} such that $ \sum_{j=1}^B  \lambda_{(j)} \left( L_{\text{sym}}(A_{\text{bp}}(X))\right) = 0$. Then
 $X$ is a  global (local) solution of \eqref{prob:bd_constr}.

\item 
Suppose $X$ is a global (local) solution of \eqref{prob:bd_eval_pen} such that the largest $B$ row sums and the largest $B$ column sums of $X$ are strictly positive.
Then there exists a coordinate-wise minimizer $U$  such that $(X, U)$ is a global (local) minimizer of \eqref{prob:bd_extremal_rep}.
\end{enumerate}
\end{proposition}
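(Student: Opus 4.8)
The plan is to exploit two facts established earlier: Proposition \ref{prop:sym_lap_block_diag}, which matches the number of blocks of $X$ up to permutations with the number of zero eigenvalues of $L_{\text{sym}}(A_{\text{bp}}(X))$, and the extremal characterization of Proposition \ref{prop:weighted_geval_extremal_representation} together with Corollary \ref{cor:sym_evals_and_gevals}, which rewrites the eigenvalue penalty in \eqref{prob:bd_eval_pen} as a minimization over the auxiliary variable $U$. Throughout I use that $L_{\text{sym}}(\cdot)$ is positive semidefinite, so every $\lambda_{(j)}(L_{\text{sym}}(A_{\text{bp}}(X))) \ge 0$.

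For the first part, the hypothesis $\sum_{j=1}^B \lambda_{(j)}(L_{\text{sym}}(A_{\text{bp}}(X))) = 0$ forces $\lambda_{(1)} = \dots = \lambda_{(B)} = 0$ by non-negativity, so $L_{\text{sym}}(A_{\text{bp}}(X))$ has at least $B$ zero eigenvalues; by Proposition \ref{prop:sym_lap_block_diag} this means $X$ has at least $B$ blocks up to permutations, i.e. $X$ is feasible for \eqref{prob:bd_constr}. Conversely, any $X'$ feasible for \eqref{prob:bd_constr} has at least $B$ blocks, hence its eigenvalue penalty vanishes, so the penalized objective of \eqref{prob:bd_eval_pen} and the objective $f$ of \eqref{prob:bd_constr} agree on the feasible set of \eqref{prob:bd_constr}, both equalling $f(X)$ at $X$. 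In the global case, comparing $f(X)$ with $f(X')$ over all feasible $X'$ gives the claim. In the local case I keep the same neighborhood: since the feasible set of \eqref{prob:bd_constr} lies inside $\{X \ge 0\}$ and the penalty is zero on it, local optimality of $X$ for \eqref{prob:bd_eval_pen} restricts to local optimality for \eqref{prob:bd_constr}.

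For the second part, the hypothesis that the largest $B$ row and column sums of $X$ are strictly positive means $X$ has at least $B$ nonzero rows and at least $B$ nonzero columns, so $\widetilde{R}, \widetilde{C} \ge B$ and $\widetilde{R} + \widetilde{C} \ge B$. Moreover $\ker(\text{diag}(\text{deg}(A_{\text{bp}}(X)))) \subseteq \ker(L_{\text{un}}(A_{\text{bp}}(X)))$ always holds, because an isolated vertex produces a zero row of $L_{\text{un}}$. Hence Corollary \ref{cor:sym_evals_and_gevals} and Proposition \ref{prop:weighted_geval_extremal_representation} (with $w = \mathbf{1}_B$) apply at $X$, giving
\begin{equation*}
\sum_{j=1}^B \lambda_{(j)}\left(L_{\text{sym}}(A_{\text{bp}}(X))\right) = \min_{U^T \text{diag}(\text{deg}(A_{\text{bp}}(X))) U = I_B} \text{Tr}\left(U^T L_{\text{un}}(A_{\text{bp}}(X)) U\right),
\end{equation*}
attained by any $U^* \in \mathcal{GE}_{(B)}\left(L_{\text{un}}(A_{\text{bp}}(X)), \text{diag}(\text{deg}(A_{\text{bp}}(X)))\right)$. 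This $U^*$ is the coordinate-wise minimizer of the statement, and the objective of \eqref{prob:bd_extremal_rep} at $(X, U^*)$ equals the penalized objective of \eqref{prob:bd_eval_pen} at $X$. To finish, I compare against an arbitrary feasible $(X', U')$ of \eqref{prob:bd_extremal_rep}: feasibility of $U'$ forces $\text{diag}(\text{deg}(A_{\text{bp}}(X')))$ to have rank at least $B$, hence $\widetilde{R}' + \widetilde{C}' \ge B$, so the same two results give $\text{Tr}(U'^T L_{\text{un}}(A_{\text{bp}}(X')) U') \ge \sum_{j=1}^B \lambda_{(j)}(L_{\text{sym}}(A_{\text{bp}}(X')))$; thus the extremal objective dominates the penalized objective pointwise in $X'$ and uniformly in $U'$. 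Chaining this with global optimality of $X$ yields the global statement, and since the bound holds for every feasible $U'$, I can take the neighborhood of $(X, U^*)$ to be $N_X \times \mathbb{R}^{(R + C) \times B}$ for the neighborhood $N_X$ witnessing local optimality of $X$, which yields the local statement.

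I expect the main obstacle to be the careful bookkeeping around the generalized-eigenvalue characterization when some degrees vanish: verifying the kernel-containment hypothesis of Proposition \ref{prop:weighted_geval_extremal_representation}, confirming that the constraint $U^T \text{diag}(\text{deg}(A_{\text{bp}}(X))) U = I_B$ is feasible exactly when at least $B$ vertices are non-isolated so that the identity of Corollary \ref{cor:sym_evals_and_gevals} is available both at $X$ and at competitors $X'$, and ensuring the Moore-Penrose convention underlying $L_{\text{sym}}(\cdot)$ introduces no spurious zero eigenvalues. The remaining work of transferring global versus local optimality across the three problems is routine once the objectives are shown to coincide on the relevant feasible sets.
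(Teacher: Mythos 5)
Your proof is correct and takes essentially the same route as the paper's: part 1 uses Proposition \ref{prop:sym_lap_block_diag} to see that the penalized and constrained objectives coincide on the feasible set of \eqref{prob:bd_constr}, and part 2 uses Proposition \ref{prop:weighted_geval_extremal_representation} together with Corollary \ref{cor:sym_evals_and_gevals} to identify the coordinate-wise minimizer and to lower-bound the extremal objective by the penalized one at any feasible competitor. The only differences are cosmetic: the paper phrases both parts as proofs by contradiction, and you are somewhat more explicit about why feasibility of $U'$ forces a competitor $X'$ to have at least $B$ non-isolated vertices, a point the paper's proof leaves implicit.
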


The first claim shows that if we can find a solution to the penalized Problem \eqref{prob:bd_eval_pen} that is block diagonal (i.e. $\alpha$ is large enough to induce the rank constraint) then we have a solution to \eqref{prob:bd_constr}.
The second claim shows the solutions of \eqref{prob:bd_eval_pen} are typically\footnote{If $\alpha$ is large, the solutions to  \eqref{prob:bd_eval_pen} typically satisfy the column/row sum condition in the second claim (since zero rows/columns give large eigenvalues of 1 by Proposition \ref{ss:proof_sym_lap_spect_bd}).} also solutions to the extremal representation problem \eqref{prob:bd_extremal_rep}; the latter are easier for our algorithm to locate.

Section \ref{ss:sym_lap_pen_alt_algo} presents an alternating algorithm for \eqref{prob:sym_lap_eval_pen} and Section \ref{ss:algo_convergence} discusses convergence properties of this algorithm.

\subsection{Alternating algorithm for \eqref{prob:sym_lap_eval_pen}} \label{ss:sym_lap_pen_alt_algo}

Following Proposition \ref{eq:weighted_sum_largest_eval_extremal}, we reformulate  the weighted nuclear norm problem \eqref{prob:sym_lap_eval_pen}  as
\begin{equation} \label{prob:sym_lap_pen_extremal_rep}
\begin{aligned}
& \underset{X\in \mathbb{R}^{R \times C}, U\in \mathbb{R}^{(R + C) \times K}}{\text{minimize}}   & & f(X) +   \alpha \text{Tr} \left( U^T L_{\text{un}}(A_{\text{bp}}(X)) U \text{diag}(w) \right)   \\
& \text{subject to} & & X \ge 0, U^T \text{diag}(\text{deg}(A_{\text{bp}}(X))) U = I_K.
\end{aligned}
\end{equation}

\subsubsection{$U$ subproblem}

For fixed $X$, the $U$ subproblem  in \eqref{prob:sym_lap_pen_extremal_rep} is a generalized eigen-problem. 
Corollary \ref{cor:eigen_subproblem} shows a global solution of this problem can be obtained through a low rank SVD of a smaller matrix.
For $X \in \mathbb{R}^{R \times C}_+$ let
\begin{equation}\label{eq:tsym}
T_{\text{sym}}(X) := \text{diag}(X \mathbf{1}_C)^{-1/2} X \text{diag}(X^T \mathbf{1}_R)^{-1/2} \in \mathbb{R}^{R + C}.
\end{equation}
When $X$ has 0 rows or columns the inverse is taken to be the Moore-Penrose pseudo-inverse.
Note this matrix is the upper diagonal elements of $L_{\text{sym}}(A_{\text{bp}}(X))$.

The Lasso penalty in the $X$ update discussed below can lead to exact zeros and, in principle, can introduce some rows/columns of zeros.
The following corollary shows that the $U$ update can handle the case when $X$ has some rows/columns of 0s.
Note that if the initial value of Algorithm \ref{algo:wnn_alt_algo} satisfies the condition $K \le \widetilde{R}  + \widetilde{C}$ then the output of each successive step will also satisfy this condition.

\begin{corollary} \label{cor:eigen_subproblem}
For $X \in \mathbb{R}^{R \times C}_+$ consider the following problem,
\begin{equation} \label{prob:geig}
\begin{aligned}
& \underset{U\in \mathbb{R}^{(R + C) \times K}}{\text{minimize}}   & & \text{Tr} \left( U^T L_{\text{un}}(A_{\text{bp}}(X)) U \text{diag}(w) \right)   \\
&  \text{subject to} & & U^T \text{diag}(\text{deg}(A_{\text{bp}}(X))) U = I_K\\
\end{aligned}
\end{equation}
for some $K \le R + C$.

\textbf{Case 1}: Suppose $X$ has no rows or columns of zeros.
Let $U_{\text{left}} \in \mathbb{R}^{R \times \min(R, C)}$ and  $U_{\text{right}} \in \mathbb{R}^{C \times \min(R, C)}$ be the matrix of the left and right singular vectors of $T_{\text{sym}}(X)$.
Let $U_{\text{left}, j}$ denote the left singular vector corresponding to the $j$th largest singular value and let  $U_{\text{left}, (j)}$ denote the left singular vector corresponding to the $j$th smallest singular value.
If $R \ge C$ let $Q \in \mathbb{R}^{R \times (\max(R, C) - \min(R, C))}$ be a orthonormal basis matrix of $\text{col-span}(U_{\text{left}} )^{\perp}$. 
If $R < C$ let $Q \in \mathbb{R}^{C \times (\max(R, C) - \min(R, C))}$ be a orthonormal basis matrix of $\text{col-span}(U_{\text{right}})^{\perp}$.

Let the columns of $U^* \in \mathbb{R}^{(R + C) \times K}$ be given by $U^*_k = \text{diag}(\text{deg}(A_{\text{bp}}(X)))^{-1/2} \Xi$ where
\begin{equation}\label{eq:geig_soln}
\Xi = 
\begin{cases}
\begin{bmatrix} U_{\text{left}, k} \\ U_{\text{right}, k}  \end{bmatrix}  & 1 \le k \le \min(R, C)\\[1pt]\\
\begin{bmatrix} Q_j \\ \mathbf{0}_C  \end{bmatrix}  & k = \min(R, C) + j, \text{ for } 1 \le j \le \max(R, C) - \min(R, C), \text{ and } R \ge C \\[1pt]\\
\begin{bmatrix}  \mathbf{0}_R \\ Q_j  \end{bmatrix}  & k = \min(R, C) + j, \text{ for } 1 \le j \le \max(R, C) - \min(R, C), \text{ and } R < C \\[1pt]\\
\begin{bmatrix} U_{\text{left}, (j)} \\ - U_{\text{right}, (j)}  \end{bmatrix}  & k = \max(R, C) + j, \text{ for } j \ge 1.
\end{cases}
\end{equation}
Then $U^*$ is a global minimizer of \eqref{prob:geig}.

\textbf{Case 2}: Suppose $X$ has $ \widetilde{R}$ and $ \widetilde{C}$ non-zero rows and columns and $K \le  \widetilde{R} +  \widetilde{C}$.
Let $\widetilde{X} \in \mathbb{R}^{ \widetilde{R} \times  \widetilde{C}}$ denote $X$ after removing the zero rows and columns and let $\widetilde{U}$ be the solution obtained using \eqref{eq:geig_soln} applied to $\widetilde{X}$.
Then a global solution of \eqref{prob:geig} can be obtained by adding appropriate zero rows to $\widetilde{U}$.

\end{corollary}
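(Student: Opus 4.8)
The plan is to reduce the whole corollary to the extremal characterization in Proposition \ref{prop:weighted_geval_extremal_representation}. Write $\Delta := \text{diag}(\text{deg}(A_{\text{bp}}(X)))$. Since $A_{\text{bp}}(X) \ge 0$, the matrix $\Delta$ is positive semi-definite and $\ker(\Delta) \subseteq \ker(L_{\text{un}}(A_{\text{bp}}(X)))$: an isolated vertex $i$ gives $\deg_i = 0$ and a zero $i$th row in both $\Delta$ and $L_{\text{un}}(A_{\text{bp}}(X)) = \Delta - A_{\text{bp}}(X)$, so $e_i$ lies in both kernels. With $w \ge 0$ decreasing and $K \le \widetilde{R} + \widetilde{C} = (R+C) - \dim\ker(\Delta)$, Proposition \ref{prop:weighted_geval_extremal_representation} applies to $(A,B) = (L_{\text{un}}(A_{\text{bp}}(X)), \Delta)$: the minimum of \eqref{prob:geig} equals $\sum_{j=1}^K w_j \lambda_{(j)}(L_{\text{un}}(A_{\text{bp}}(X)), \Delta)$ and is attained at any element of $\mathcal{GE}_{(K)}(L_{\text{un}}(A_{\text{bp}}(X)), \Delta)$. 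So it suffices to show the explicit $U^*$ of \eqref{eq:geig_soln} is such a matrix of $K$ smallest generalized eigenvectors.

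For Case 1 ($\Delta$ invertible) I would use $L_{\text{sym}}(A_{\text{bp}}(X)) = \Delta^{-1/2} L_{\text{un}}(A_{\text{bp}}(X)) \Delta^{-1/2}$, so that $\xi$ is an eigenvector of $L_{\text{sym}}$ with eigenvalue $\lambda$ if and only if $u = \Delta^{-1/2}\xi$ satisfies $L_{\text{un}}(A_{\text{bp}}(X)) u = \lambda \Delta u$ with $\xi^T\xi = u^T \Delta u$; this is exactly the $\Delta^{-1/2}$ prefactor in \eqref{eq:geig_soln}. It then remains to read off the eigenvectors of $L_{\text{sym}}$. Writing $L_{\text{sym}}(A_{\text{bp}}(X)) = I - \Delta^{-1/2} A_{\text{bp}}(X) \Delta^{-1/2}$ and using the bipartite block form of $A_{\text{bp}}(X)$, the off-diagonal block of $\Delta^{-1/2} A_{\text{bp}}(X) \Delta^{-1/2}$ is precisely $T_{\text{sym}}(X)$ from \eqref{eq:tsym}. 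For a matrix $\begin{bmatrix} 0 & T \\ T^T & 0 \end{bmatrix}$, the SVD $T = \sum_k \sigma_k u_k v_k^T$ gives eigenpairs $\left(\pm\sigma_k, \begin{bmatrix} u_k \\ \pm v_k \end{bmatrix}\right)$ together with eigenvalue-$0$ vectors supported on the larger block drawn from $\ker(T^T)$ or $\ker(T)$. Subtracting from $I$, the eigenvalues of $L_{\text{sym}}(A_{\text{bp}}(X))$ are $1 \mp \sigma_k$ and $1$, so the smallest correspond to the largest singular values. Sorting in increasing order reproduces the cases of \eqref{eq:geig_soln}: the top $\min(R,C)$ singular vectors yield $1 - \sigma_k$; the columns $Q_j$, forming an orthonormal basis of $\text{col-span}(U_{\text{left}})^\perp = \ker(T_{\text{sym}}(X)^T)$ (resp. $\ker(T_{\text{sym}}(X))$), yield the eigenvalue-$1$ vectors from the rectangular mismatch; and the $\begin{bmatrix} U_{\text{left},(j)} \\ -U_{\text{right},(j)} \end{bmatrix}$ yield $1 + \sigma_{(j)}$. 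After the scalar normalization of each column enforcing $U^T \Delta U = I_K$, this exhibits $U^* \in \mathcal{GE}_{(K)}(L_{\text{un}}(A_{\text{bp}}(X)), \Delta)$.

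For Case 2 ($\Delta$ singular) the zero rows/columns of $X$ are exactly the isolated vertices and span $\ker(\Delta)$. Because $\ker(\Delta) \subseteq \ker(L_{\text{un}}(A_{\text{bp}}(X)))$, Proposition \ref{prop:gevals_well_def_sing_B} says the generalized spectrum of $(L_{\text{un}}(A_{\text{bp}}(X)), \Delta)$ is that of the problem restricted to the non-isolated coordinates, i.e. to $\widetilde{X}$, and by Corollary \ref{cor:sym_evals_and_gevals} the $K \le \widetilde{R} + \widetilde{C}$ smallest generalized eigenvalues coincide with those of $L_{\text{sym}}(A_{\text{bp}}(\widetilde{X}))$. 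Applying Case 1 to $\widetilde{X}$ yields $\widetilde{U}$, and I would verify that padding $\widetilde{U}$ with zero rows in the isolated coordinates gives a feasible point of \eqref{prob:geig}: the padded rows multiply only the zero entries of $\Delta$, so $U^T \Delta U = I_K$ is preserved, and the objective is unchanged, making it a global minimizer.

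The main obstacle is the ordering bookkeeping in Case 1 — correctly interleaving the three eigenvalue regimes ($1-\sigma$, the eigenvalue-$1$ block from the dimension mismatch, and $1+\sigma$) in increasing order — together with checking that the $Q_j$ are genuine generalized eigenvectors of eigenvalue $1$ that are $\Delta$-orthogonal to the others, and handling the pseudo-inverse convention consistently in the singular case. The conceptual content is entirely supplied by the Fan-type extremal characterization of Proposition \ref{prop:weighted_geval_extremal_representation}; everything else is verifying that the SVD-based formula \eqref{eq:geig_soln} lands in $\mathcal{GE}_{(K)}$.
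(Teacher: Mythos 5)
Your proposal is correct and takes essentially the same route as the paper: the paper's own proof is a two-line appeal to Proposition \ref{prop:weighted_geval_extremal_representation} (the Fan-type extremal characterization) together with Proposition \ref{prop:sym_lap_tsym} (whose content — the $\{1\pm\sigma_i(T_{\text{sym}}(X))\}$ spectrum and the SVD-based eigenvectors of $L_{\text{sym}}(A_{\text{bp}}(X))$ — you re-derive inline), with the condition $K \le \widetilde{R}+\widetilde{C}$ invoked exactly as you do for the degenerate case. Your additional bookkeeping (the $\ker(\Delta)\subseteq\ker(L_{\text{un}})$ check, the ordering of the three eigenvalue regimes, the zero-padding feasibility in Case 2, and the per-column normalization) simply fills in details the paper leaves implicit.
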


\subsubsection{$X$ subproblem}  \label{ss:var_subproblem}

For fixed $U$, the constraints and second term in the objective of Problem \eqref{prob:sym_lap_pen_extremal_rep} are linear in $X$.
Let matrix $M(U, w) \in \mathbb{R}^{R \times C}$ be the matrix such that
$$\text{Tr} \left( U^T L_{\text{un}}(A_{\text{bp}}(X)) U  \text{diag}(w)\right) = \langle X, M(U, w) \rangle.$$
Writing $U = \begin{bmatrix}U_{\text{rows}}\\ U_{\text{col}} \end{bmatrix}$ where $U_{\text{rows}} \in \mathbb{R}^{R \times K}$ and $U_{\text{cols}} \in \mathbb{R}^{C \times K}$, we see the $r, c$th element of $M(U, w)$ is 
\begin{equation} \label{eq:M}
[M(U, w)]_{rc} = || \text{diag}(w)^{1/2} \left( U_{\text{rows}}(r, :) - U_{\text{cols}}(c, :) \right) ||_2^2.
\end{equation}

Let 
\begin{equation} \label{eq:c_diag}
c_{\text{diag}}(U) := U \odot U \in \mathbb{R}^{(R + C) \times K}
\end{equation}
be the matrix whose elements are the squares of $U$; this matrix gives the diagonal elements of the linear equality constraint of \eqref{prob:sym_lap_pen_extremal_rep}.
Also let 
\begin{equation}\label{eq:c_utri}
c_{\text{utri}}(U) \in \mathbb{R}^{(R + C) \times {K \choose 2}} \text{ be the matrix whose columns are given by } U_{\ell} \odot U_{j}, 1 \le \ell < j \le K.
\end{equation}
This matrix gives the upper-triangular of the linear equality constraints of \eqref{prob:sym_lap_pen_extremal_rep}; the lower triangular constraints are redundant.
Note some of the constraints of $c_{\text{utri}}(U)$ may be redundant\footnote{E.g. when $A_{\text{bp}}(X)$ has multiple connected components Proposition \ref{prop:sym_lap_spectrum_ccs} gives one source of redundancy.} and can be removed to improve numerical performance.

For fixed $U$, the $X$ subproblem for Problem \eqref{prob:sym_lap_pen_extremal_rep} is given by
\begin{equation} \label{prob:wnn_x_update}
\begin{aligned}
&\underset{X}{\text{minimize}}   & & f(X) +\alpha  \langle X, M(U, w) \rangle\\
& \text{subject to} & & X \ge 0\\
& & &  \begin{bmatrix} c_{\text{diag}}(U)^T  \\ c_{\text{utri}}(U)^T \end{bmatrix} \text{diag} (\text{deg} ( A_{\text{bp}}(X)) )=   \begin{bmatrix} \mathbf{1}_{K} \\ \mathbf{0}_{ {K \choose 2}}  \end{bmatrix}.
\end{aligned}
\end{equation}

If $f$ is convex then \eqref{prob:wnn_x_update} is a convex problem because the second term in the objective and the constraints are linear.
Because $X$ is constrained to be positive, the second term in the objective puts a weighted lasso penalty on the entries of $X$ whose weights are given by $M(U, w)$.

For complicated objective functions (e.g. the log-likelihood of a mixture model) the full $X$ updates may be computationally intractable.
We therefore consider surrogate updates obtained by replacing $f$ with a surrogate function that has the same first order behavior \citep{razaviyayn2013unified}.
\begin{definition} \label{def:surrogate_fcn}
A surrogate function $Q(X | Y)$ satisfies $Q(X | X) = f(X)$, $Q(X | Y) \ge f(Y)$, $Q(\cdot | Y)$ is continuous and assume $\frac{d}{dX}Q(X | Y) \big |_{X=Y} = \frac{d}{dX} f(X) \big |_{X=Y}  $ for all $X, Y$.
\end{definition}
Given the current guess, $X_{\text{current}}$, we update $X$ by solving the following problem,
\begin{equation} \label{prob:wnn_x_update_maj}
\begin{aligned}
&\underset{X}{\text{argmin}}   & & Q(X | X_{\text{current}}) + \alpha  \langle X, M(U, w) \rangle\\
& \text{subject to} & & X \ge 0 \\ 
& & &  \begin{bmatrix} c_{\text{diag}}(U)^T  \\ c_{\text{utri}}(U)^T \end{bmatrix} \text{diag}( \text{deg} ( A_{\text{bp}}(X)) )=   \begin{bmatrix} \mathbf{1}_{K} \\ \mathbf{0}_{ {K \choose 2}}  \end{bmatrix}.
\end{aligned}
\end{equation}

\subsubsection{Alternating algorithm}

Let $\textsc{Update-X}(X_{\text{current}}, U)$ be an algorithm that solves either the full update \eqref{prob:wnn_x_update} or the surrogate update \eqref{prob:wnn_x_update_maj}.

\begin{algorithm}[H] \label{algo:wnn_alt_algo} 
\DontPrintSemicolon
  
\KwInput{ $\alpha \ge 0$, $K \le \min(R, C)$, $w \in \mathbb{R}^{K}_+$}
\KwOutput{$X$}
  
Initialize $X^0$.
  
\While{Stopping criteria not satisfied}
{
$U^{s+1} \leftarrow $ smallest $K$ generalized eigenvectors of \tcp*{Computed as in Corollary \ref{cor:eigen_subproblem}} 
\begin{equation}\label{eq:wnn_alt_algo_eig}
\left( L_{\text{sym}} (A_{\text{bp}}(X^{s})), \text{diag}(\text{deg}(A_{\text{bp}}(X^s))) \right)
\end{equation}



$X^{s+1} \leftarrow \textsc{Update-X}(X^s, U^{s+1})$ 
  
$ s \leftarrow s + 1$
  
}
\caption{Alternating algorithm for the weighted nuclear norm Problem \eqref{prob:sym_lap_pen_extremal_rep}}
\end{algorithm}

\begin{remark}
If \textsc{Update-X} solves either \eqref{prob:wnn_x_update} or \eqref{prob:wnn_x_update_maj}, each step of this algorithm decreases the objective function of \eqref{prob:sym_lap_pen_extremal_rep} (and \eqref{prob:sym_lap_eval_pen}).
In the former case, Algorithm \ref{algo:wnn_alt_algo} is an alternating minimization algorithm while in the latter case it is a \textit{block successive upper bound minimization algorithm} with coupled constraints between blocks \citep{razaviyayn2013unified}.
\end{remark}

\subsubsection{Algorithm Intuition}

The second term in \eqref{prob:wnn_x_update} puts a weighted lasso penalty on the entries of $X$.
These weights, which come from \eqref{eq:M}, encourage $X$ to be more block diagonal.

Suppose $X_{\text{current}} \in \mathbb{R}^{R + C}_+$ is exactly block diagonal up to permutations with $B$ blocks, $K=B$ and $w = \mathbf{1}_B$.
Let $\mathbf{1}_{A_1}, \dots, \mathbf{1}_{A_B} \in \mathbb{R}^{R + C}$ denote the indicator vectors of the blocks and let $d_b = \mathbf{1}_{R + C}^T \mathbf{1}_{A_b}$ be the total degree of the $b$th block for each $b \in [B]$.
By Proposition \ref{prop:sym_lap_spectrum_ccs},
$$U = \begin{bmatrix} \frac{1}{\sqrt{d_1}} \mathbf{1}_{A_1} & \dots & \frac{1}{\sqrt{d_B}} \mathbf{1}_{A_B} \end{bmatrix}$$
is a $U$ global minimizer of \eqref{prob:sym_lap_pen_extremal_rep}.
In this case
\begin{equation*}
[M(U, \mathbf{1}_B)]_{r, c} = 
\begin{cases}
0  & \text{if row } r \text{ and column } c \text{ are in the same block}\\
\frac{1}{d_{\text{row}}(r)} + \frac{1}{d_{\text{col}}(c)}  & \text{if row } r \text{ and column } c \text{ are in different blocks},
\end{cases}
\end{equation*}
where $d_{\text{row}}(r) = d_b$ where the $r$th row belongs to the $b$th block (similarly for $d_{\text{col}}(r)$).
The second term in \eqref{prob:wnn_x_update} only penalizes edges that go between blocks and does not penalize edges within a block.

If $X_{\text{current}}$ has a row or column of zeros the corresponding eigenvalue of the symmetric Laplacian will be 1 (i.e. large) and the corresponding eigenvector will not be included in the smallest $K$ eigenvectors that comprise $U$.
Therefore, the algorithm does not want to encourage rows/columns of zeros.

\subsection{Convergence of Algorithm \ref{algo:wnn_alt_algo} } \label{ss:algo_convergence}

We show Algorithm \ref{algo:wnn_alt_algo} converges to a coordinate-wise minimizer when \textsc{Update-X} does a full update by solving \eqref{prob:wnn_x_update}.
If \textsc{Update-X} does a surrogate update solving \eqref{prob:wnn_x_update_maj} Algorithm \ref{algo:wnn_alt_algo} converges to a coordinate-wise stationary point (defined below).
The non-linear coupled constraints of Problem \eqref{prob:wnn_x_update_maj} make the convergence analysis tricky e.g. the BSUM framework does not apply \citep{razaviyayn2013unified}.

Consider a constrained optimization problem with two blocks of variables; let $f(x, y)$ be the objective function and let $g(x, y), h(x, y)$ be vector valued functions corresponding to the equality and inequality constraints (all functions are assumed to be continuous).
Let $I(x, y)$ denote the indicator function of the constraint set $\{ (x, y) | g(x, y) = 0, h(x, y) \le 0 \}$.
Recall a \textit{stationary point} of an optimization problem is one that satisfies the KKT conditions \citep{boyd2004convex}; assuming appropriate constraint qualification all local minimizers are stationary points.

\begin{definition}
Let $\mathcal{L}(y) := \{ x* | x^* \text{ is a local minimizer of } \underset{x}{minimize}\; f(x, y) + I(x, y)\}$ be the set of $x$ coordinate local minimizers for fixed $y$.
Let $\mathcal{S}(y) := \{x^* | x^*$ is a stationary point of $ \underset{x}{minimize}\; f(x, y) + I(x, y) \}$ be the set of $x$ coordinate stationary points for fixed $y$.
Let $\mathcal{G}(x) := \{ y* | y^* \in \underset{y}{argmin}\; f(x, y) + I(x, y)\}$ be the set of $y$ coordinate global minimizers for fixed $x$.
Finally let,
\begin{equation}
\mathcal{LG} := \{ (x, y) | x \in \mathcal{L}(y), y \in \mathcal{G}(x)\}
\end{equation}
\begin{equation}
\mathcal{SG} := \{ (x, y) | x \in \mathcal{S}(y), y \in \mathcal{G}(x)\}
\end{equation}
denote the set of $x, y$ pairs where $x$ is a coordinate-wise local minimizer (stationary point) and $y$ is a coordinate-wise minimizer.
\end{definition}

\begin{assumption} \label{assumpt:basic}
Assume the objective function $f: \mathbb{R}^{R \times C}_+ \to \mathbb{R}$ is continuous and the level set $S_{X^0} :=  \{X | f(X) \le f(X^0) + \alpha w^T \mathbf{1}_K \}$ is compact where $X^0$ is the point at which the algorithm is initialized. 
\end{assumption}

\begin{assumption} \label{assumpt:degree_lower_bound}
Assume there exists an $\eta > 0$ such that the iterates, $X^s$, of Algorithm \ref{algo:wnn_alt_algo} are contained in the set $\mathcal{RC}_{\eta}  := \{X | \text{deg}(A_{\text{bp}}(X)) \ge  \eta \mathbf{1}_{R + C} \}$ for large enough $s$.
\end{assumption}

This technical assumption typically hold in practice since the algorithm does not encourage rows/columns of zeros as discussed above.
Alternatively, this assumption can be enforced by adding the linear constraints $\text{deg}(A_{\text{bp}}(X)) \ge \eta \mathbf{1}_{R + C}$ to   \eqref{prob:sym_lap_eval_pen} and \eqref{prob:sym_lap_pen_extremal_rep}.
The updates for the algorithm still work even when some rows/columns of $X$ are identically zero as long as there are at least $K$ total non-zero rows/columns at each step\footnote{We lose  the convergence guarantees, however, because the sequence is not guaranteed to be in a compact set.}.

\begin{assumption} \label{assumpt:x_update}
Assume one of the following,
\begin{enumerate}
\item $\textsc{Update-X}$ returns a global minimizer of the full update problem \eqref{prob:wnn_x_update}.
 
\item There exists a surrogate function $Q$ and $\textsc{Update-X}$ returns a global minimizer of the surrogate update problem \eqref{prob:wnn_x_update_maj}.
\end{enumerate}
\end{assumption}

\begin{proposition} \label{prop:conv}
Let $\{X^s, U^s \}_{s=1}^{\infty}$ be any sequence generated by Algorithm \ref{algo:wnn_alt_algo} and suppose Assumptions \ref{assumpt:basic} and \ref{assumpt:degree_lower_bound} hold.
Under Assumption \ref{assumpt:x_update}.1  all limit points of $\{X^s, U^s\}_{s=1}^{\infty}$ are elements of $\mathcal{LG}$.
Under Assumption \ref{assumpt:x_update}.2, all limit points of $\{X^s, U^s\}_{s=1}^{\infty}$ are elements of $\mathcal{SG}$.
In addition, $\lim_{s \to \infty}   f(X^s) +   \alpha \sum_{j=1}^K  w_j \lambda_{(j)} \left( L_{\text{sym}}(A_{\text{bp}}(X^s))\right)  \to  f(X^*) +   \alpha \sum_{j=1}^K  w_j \lambda_{(j)} \left( L_{\text{sym}}(A_{\text{bp}}(X^*))\right) $ for all limit points $X^*$.
\end{proposition}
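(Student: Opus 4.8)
The plan is to cast Algorithm \ref{algo:wnn_alt_algo} as a point-to-set map and invoke Zangwill's global convergence theorem, using the objective of the extremal problem \eqref{prob:sym_lap_pen_extremal_rep},
$$F(X, U) := f(X) + \alpha \text{Tr}\left(U^T L_{\text{un}}(A_{\text{bp}}(X)) U \text{diag}(w)\right),$$
as the descent function. I would write the algorithmic map as the composition $M = M_X \circ M_U$, where $M_U(X)$ returns the set of global minimizers of the $U$-subproblem \eqref{prob:geig} (a genuinely set-valued map, because of possible eigenvalue multiplicities, described by Corollary \ref{cor:eigen_subproblem}) and $M_X(X, U)$ returns the minimizer of the $X$-update \eqref{prob:wnn_x_update} or \eqref{prob:wnn_x_update_maj}. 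The three hypotheses to verify are: (i) the iterates lie in a compact set; (ii) $F$ strictly decreases off the solution set and is non-increasing on it; and (iii) the map is closed.

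For the descent property, after the $U$-step Corollary \ref{cor:eigen_subproblem} returns a global minimizer, so $F(X^s, U^{s+1}) = f(X^s) + \alpha \sum_{j=1}^K w_j \lambda_{(j)}(L_{\text{sym}}(A_{\text{bp}}(X^s)))$ and this is $\le F(X^s, U^s)$. For the $X$-step, Assumption \ref{assumpt:x_update}.1 gives a global minimizer of \eqref{prob:wnn_x_update} so $F$ drops directly; under Assumption \ref{assumpt:x_update}.2 I would use the surrogate inequalities $Q(X\mid Y) \ge f(X)$ and $Q(Y\mid Y) = f(Y)$ (Definition \ref{def:surrogate_fcn}) to get $F(X^{s+1}, U^{s+1}) \le Q(X^{s+1}\mid X^s) + \alpha\langle X^{s+1}, M(U^{s+1},w)\rangle \le Q(X^s\mid X^s) + \alpha \langle X^s, M(U^{s+1},w)\rangle = F(X^s, U^{s+1})$. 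Chaining the two steps shows $F$ is non-increasing. Since $K \le \min(R,C)$, at least $\min(R,C)$ eigenvalues of the bipartite $L_{\text{sym}}$ are $\le 1$ (by the pairing $1\pm\sigma_i$), so the smallest $K$ satisfy $\sum_j w_j \lambda_{(j)} \le w^T \mathbf{1}_K$. Hence $f(X^s) \le F(X^s,U^{s+1}) \le F(X^0,U^1) \le f(X^0) + \alpha w^T \mathbf{1}_K$, placing $X^s$ in the compact level set $S_{X^0}$ of Assumption \ref{assumpt:basic}; combined with the lower bound on degrees from Assumption \ref{assumpt:degree_lower_bound} the constraint $U^T \text{diag}(\text{deg}(A_{\text{bp}}(X))) U = I_K$ forces $\|U^s\|$ to be bounded, giving (i).

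The main obstacle is closedness (iii), which is exactly where the coupled, non-linear constraints bite. The difficulty is twofold: the constraint $U^T \text{diag}(\text{deg}(A_{\text{bp}}(X))) U = I_K$ couples $U$ to $X$, so the feasible region moves with $X$; and the optimal-$U$ map is set-valued at eigenvalue crossings. On the set $\mathcal{RC}_{\eta} \cap S_{X^0}$ the matrix $L_{\text{sym}}(A_{\text{bp}}(X))$ depends continuously on $X$ (Assumption \ref{assumpt:degree_lower_bound} keeps the pseudo-inverse away from its discontinuity), so the optimal value $\sum_j w_j \lambda_{(j)}$ is continuous and, by a Berge-type argument, the solution map $M_U$ is closed. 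Closedness of $M_X$ follows from continuity of $f$ (or of $Q$) together with continuity of the constraint data $M(U,w)$, $c_{\text{diag}}(U)$, $c_{\text{utri}}(U)$ in $U$; a standard composition lemma then makes $M = M_X \circ M_U$ closed on the compact set.

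Finally I would characterize the solution (fixed-point) set. At a limit point $(X^*, U^*)$, convergence of $F$ together with closedness forces $U^* \in \mathcal{G}(X^*)$ and that the $X$-step cannot strictly decrease $F$. Under Assumption \ref{assumpt:x_update}.1 the returned $X^*$ is a global, hence local, minimizer of the $X$-subproblem, so $X^* \in \mathcal{L}(U^*)$ and $(X^*, U^*) \in \mathcal{LG}$. Under Assumption \ref{assumpt:x_update}.2 the first-order agreement $\frac{d}{dX}Q(X\mid X^*)\big|_{X=X^*} = \frac{d}{dX}f(X)\big|_{X=X^*}$ (Definition \ref{def:surrogate_fcn}) shows the KKT system of the surrogate update at $X^*$ coincides with that of the true $X$-subproblem, so $X^*$ is only guaranteed to be a stationary point, giving $(X^*, U^*) \in \mathcal{SG}$. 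The convergence of the displayed penalized objective then follows from the convergence of $F$ along the sequence together with the continuity of $X \mapsto \sum_j w_j \lambda_{(j)}(L_{\text{sym}}(A_{\text{bp}}(X)))$ on $\mathcal{RC}_{\eta} \cap S_{X^0}$.
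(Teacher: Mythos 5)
Your proposal is correct and takes essentially the same route as the paper's proof: both apply Zangwill's global convergence theorem to the composed point-to-set map, establish descent of the extremal objective $\phi(X,U)$, obtain compactness from the level set $S_{X^0}$ (via the bound $\alpha w^T \mathbf{1}_K$ on the eigenvalue term) together with the degree lower bound of Assumption \ref{assumpt:degree_lower_bound} forcing $\|U\|$ bounded, and prove closedness by composing the two closed update maps through the compact set of feasible $U$'s (the paper's Lemma \ref{lem:closed_composition}). Your characterization of limit points also matches the paper's Lemma \ref{lem:gen_fp_coord_stat}: global minimization of the $X$-subproblem under Assumption \ref{assumpt:x_update}.1 gives $\mathcal{LG}$, and the first-order agreement of the surrogate with $f$ plus affine constraints (so KKT is necessary) gives $\mathcal{SG}$ under Assumption \ref{assumpt:x_update}.2.
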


Proposition \ref{prop:conv} does not guarantee we find a minimizer of \eqref{prob:sym_lap_eval_pen}.
The following proposition shows these local minimizers are contained in the solution set we actually are guaranteed to find.
\begin{proposition} \label{prop:soln_set_containment}
Suppose $X$ is a local minimizer of \eqref{prob:sym_lap_eval_pen}.
Then $ \text{ there exists a } U  \text{ such that } (X, U) \in \mathcal{LG} \subseteq \mathcal{SG}$, where $ \mathcal{LG}$ and $\mathcal{SG}$ correspond to Problem \eqref{prob:sym_lap_pen_extremal_rep}.
If $X$ is a global minimizer of \eqref{prob:sym_lap_eval_pen}, then there exists a $U$ such that $(X, U)$ is a global minimizer of \eqref{prob:sym_lap_pen_extremal_rep}.

\end{proposition}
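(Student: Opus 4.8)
The plan is to exploit the fact that Problem \eqref{prob:sym_lap_eval_pen} is exactly the \emph{partial minimization} of Problem \eqref{prob:sym_lap_pen_extremal_rep} over the variable $U$. Write $F(X) := f(X) + \alpha \sum_{j=1}^K w_j \lambda_{(j)}(L_{\text{sym}}(A_{\text{bp}}(X)))$ for the objective of \eqref{prob:sym_lap_eval_pen}. By Corollary \ref{cor:sym_evals_and_gevals} (using the standing condition $K \le \widetilde R + \widetilde C$), the eigenvalue sum in $F$ equals the corresponding sum of generalized eigenvalues of $(L_{\text{un}}(A_{\text{bp}}(X)), \text{diag}(\text{deg}(A_{\text{bp}}(X))))$, and then the extremal characterization \eqref{eq:weighted_sum_smallest_eval_extremal} of Proposition \ref{prop:weighted_geval_extremal_representation} yields
\begin{equation*}
F(X) = \min_{U} \left[ f(X) + \alpha \text{Tr}\left(U^T L_{\text{un}}(A_{\text{bp}}(X)) U \text{diag}(w)\right) \right],
\end{equation*}
where the minimum runs over $U$ with $U^T \text{diag}(\text{deg}(A_{\text{bp}}(X))) U = I_K$ and is attained by any matrix of smallest $K$ generalized eigenvectors. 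Proposition \ref{prop:gevals_well_def_sing_B} guarantees these are well defined even when some degrees vanish, since a zero-degree node produces a standard basis vector lying in both $\ker(\text{diag}(\text{deg}))$ and $\ker(L_{\text{un}})$.

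For the first claim, let $X^*$ be a local minimizer of \eqref{prob:sym_lap_eval_pen} and take $U^*$ to be a minimizer in the display above at $X = X^*$; by construction $U^* \in \mathcal{G}(X^*)$. It remains to show $X^* \in \mathcal{L}(U^*)$, i.e. that $X^*$ locally minimizes the $X$-subproblem of \eqref{prob:sym_lap_pen_extremal_rep} with $U^*$ held fixed. The crux is the coupling of the constraint: any $X$ feasible for this subproblem satisfies $(U^*)^T \text{diag}(\text{deg}(A_{\text{bp}}(X))) U^* = I_K$, so $U^*$ is itself admissible for the inner minimization at that $X$, whence
\begin{equation*}
f(X) + \alpha \text{Tr}\left((U^*)^T L_{\text{un}}(A_{\text{bp}}(X)) U^* \text{diag}(w)\right) \ge F(X),
\end{equation*}
with equality at $X = X^*$. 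Since $X^*$ locally minimizes $F$ over $\{X \ge 0\}$, on a neighbourhood the right-hand side dominates $F(X^*)$, hence so does the left-hand side; this is precisely the statement that $X^*$ is a coordinate-wise local minimizer. Thus $(X^*, U^*) \in \mathcal{LG}$, and $\mathcal{LG} \subseteq \mathcal{SG}$ because, under the relevant constraint qualification, every local minimizer of the $X$-subproblem satisfies its KKT conditions, i.e. $\mathcal{L}(U^*) \subseteq \mathcal{S}(U^*)$.

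The global statement follows from the same inequality without restricting to a neighbourhood: for \emph{any} feasible $(X, U)$ of \eqref{prob:sym_lap_pen_extremal_rep} the displayed bound (with $U$ in place of $U^*$) gives objective value $\ge F(X) \ge F(X^*)$, while $(X^*, U^*)$ attains $F(X^*)$, so it is a global minimizer. The main obstacle, and the place requiring care, is exactly this constraint coupling: unlike an ordinary partial-minimization argument, the feasible set of the $X$-subproblem varies with the fixed $U^*$, so one must first verify that feasibility of $X$ forces $U^*$ to be admissible for the inner eigenvalue problem at $X$ before the extremal lower bound can be invoked. A secondary technical point is confirming that the hypotheses of Corollary \ref{cor:sym_evals_and_gevals} and Proposition \ref{prop:weighted_geval_extremal_representation} (the row/column-count and kernel-containment conditions) persist along the relevant $X$, which is where the restriction $K \le \widetilde R + \widetilde C$ enters.
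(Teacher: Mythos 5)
Your proof is correct and takes essentially the same route as the paper's (much terser) argument: the paper's ``proof by contradiction'' is precisely the contrapositive of your key inequality --- that feasibility of $X$ in the $X$-subproblem makes $U^*$ admissible for the inner eigenvalue minimization, so the subproblem objective dominates the objective of \eqref{prob:sym_lap_eval_pen} with equality at $X^*$ --- and both proofs obtain $\mathcal{LG} \subseteq \mathcal{SG}$ from the affine-in-$X$ constraints via linear constraint qualification and first-order necessity of the KKT conditions. Your write-up simply fills in details the paper leaves implicit (attainment of the inner minimum via Proposition \ref{prop:weighted_geval_extremal_representation} and Corollary \ref{cor:sym_evals_and_gevals}, the rank condition $K \le \widetilde{R} + \widetilde{C}$, and the global case).
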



\section{Choice of Laplacian} \label{s:un_lap_bad}

Many existing approaches to imposing block diagonal constraints use the unnormalized Laplacian instead of the symmetric Laplacian \citep{feng2014robust, nie2016constrained, nie2017learning, lu2018subspace, kumar2019unified}.
This section shows that approaches based on the unnormalized Laplacian require stronger modeling assumptions and do not have computational advantages over our approach based on the symmetric Laplacian.

\begin{figure}[H]
\begin{subfigure}[t]{0.32\textwidth}
\includegraphics[width=\linewidth]{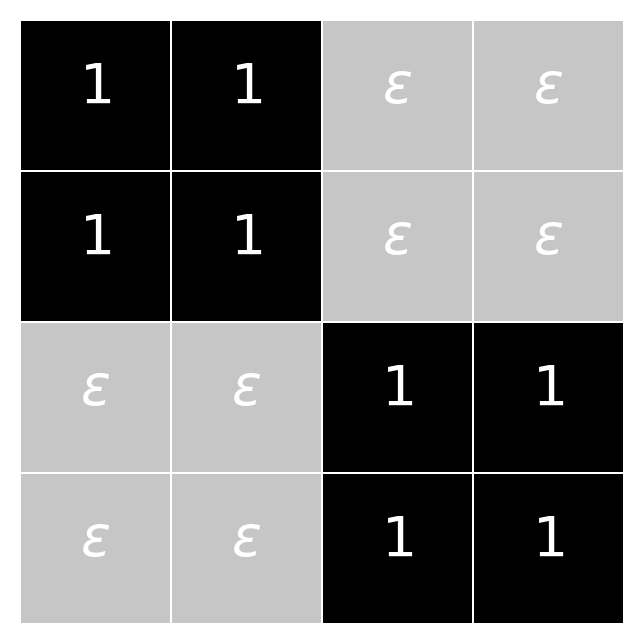} 
\caption{
$X_{\epsilon} \in \mathbb{R}^{4 \times 4}$ is the matrix with two $2\times 2$ blocks of ones and whose off-diagonal elements are equal to $\epsilon$.  
}
\label{fig:un_vs_sym__block_diag__mat}
\end{subfigure}
\hfill
\begin{subfigure}[t]{0.32\textwidth}
\includegraphics[width=\linewidth]{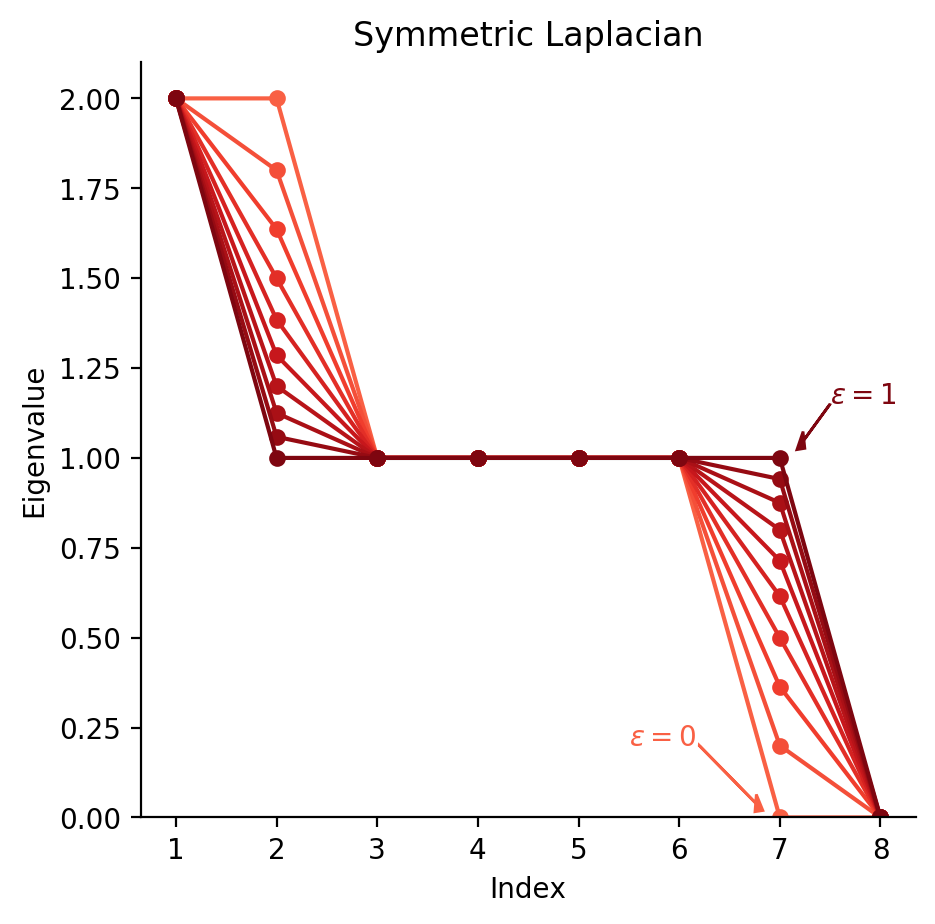} 
\caption{
Spectrum of $L_{\text{sym}}(A_{\text{bp}}(X_{\epsilon}))$ for a range of values of $\epsilon \in [0, 1]$.
}
\label{fig:un_vs_sym__block_diag__sym_evals}
\end{subfigure}
\hfill
\begin{subfigure}[t]{0.32\textwidth}
\includegraphics[width=\linewidth]{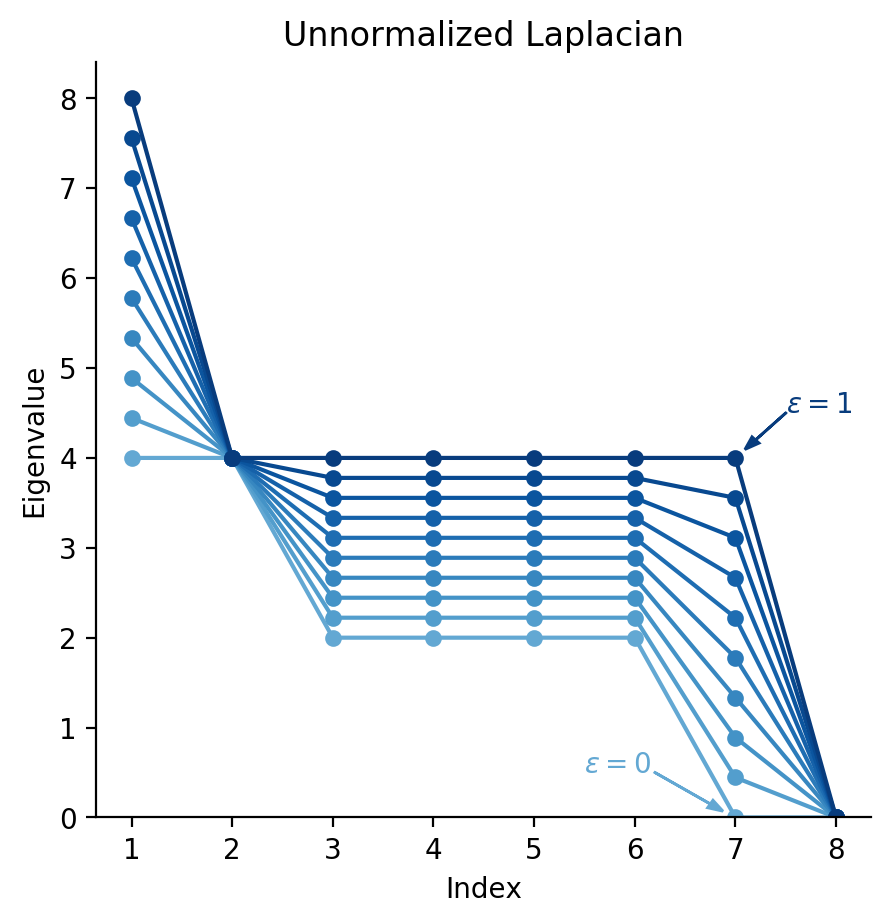}
\caption{
Spectrum of $L_{\text{un}}(A_{\text{bp}}(X_{\epsilon}))$ for a range of values of  $\epsilon \in [0, 1]$.
}
\label{fig:un_vs_sym__block_diag__un_evals}
\end{subfigure}%

\caption{
As $X_{\epsilon}$ approaches a 2 block, block diagonal matrix an eigenvalue of both the symmetric and unnormalized Laplacian approaches 0.
}
\label{fig:un_vs_sym_evals_block_diag}
\end{figure}

\begin{figure}[H]
\begin{subfigure}[t]{0.32\textwidth}
\includegraphics[width=\linewidth]{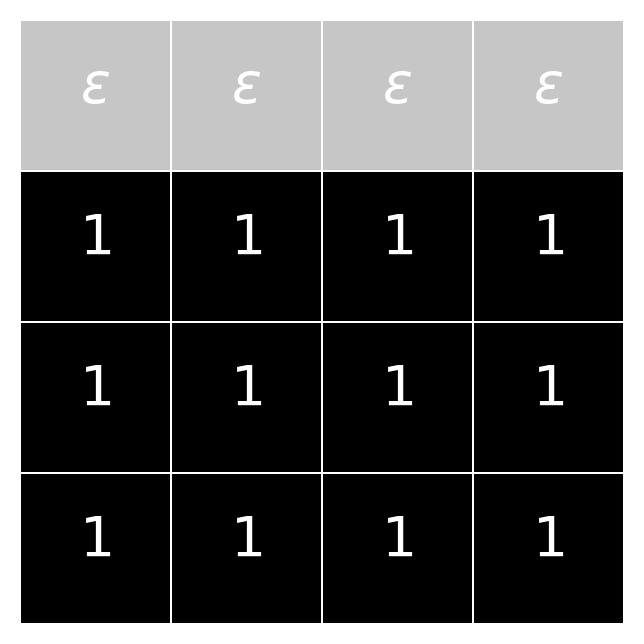} 
\caption{
$X_{\epsilon} \in \mathbb{R}^{4 \times 4}$ is the matrix whose first row is equal to $\epsilon$ and whose remaining elements are equal to 1.
}
\label{fig:un_vs_sym__zero_row__mat}
\end{subfigure}
\hfill
\begin{subfigure}[t]{0.32\textwidth}
\includegraphics[width=\linewidth]{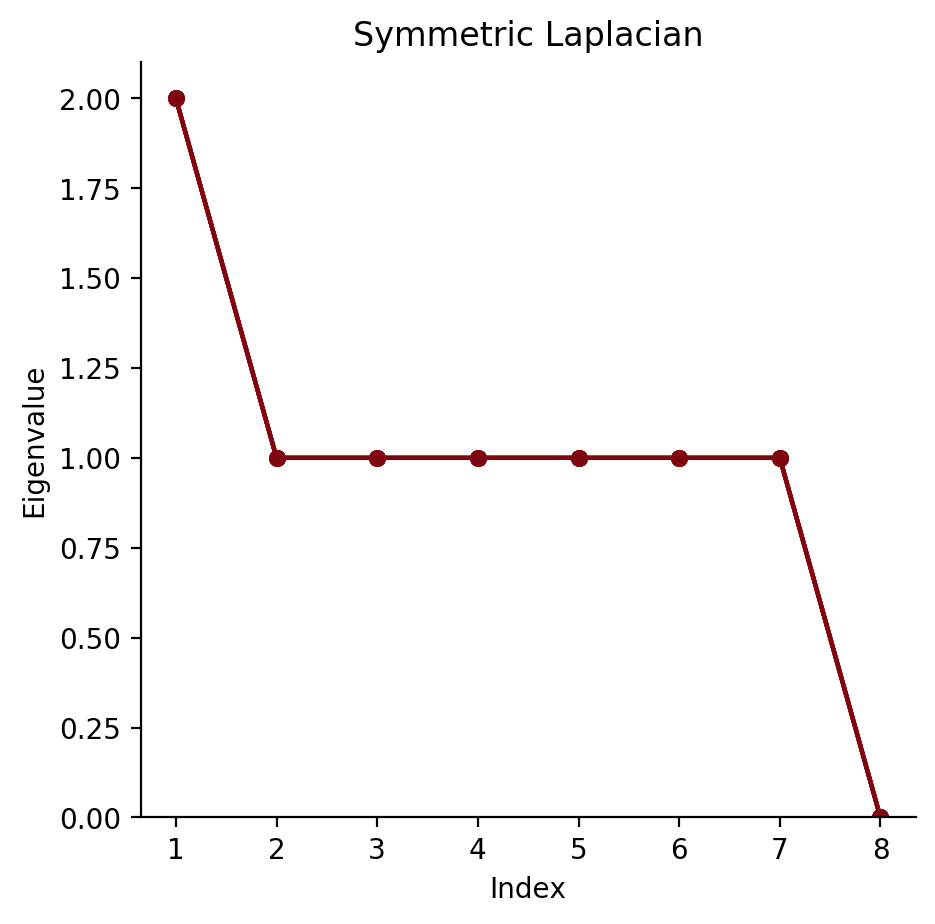} 
\caption{
Same as Figure \ref{fig:un_vs_sym__block_diag__sym_evals}.
Here the spectrum is the same for every value of $\epsilon$.
}
\label{fig:un_vs_sym__zero_row__sym_evals}
\end{subfigure}
\hfill
\begin{subfigure}[t]{0.32\textwidth}
\includegraphics[width=\linewidth]{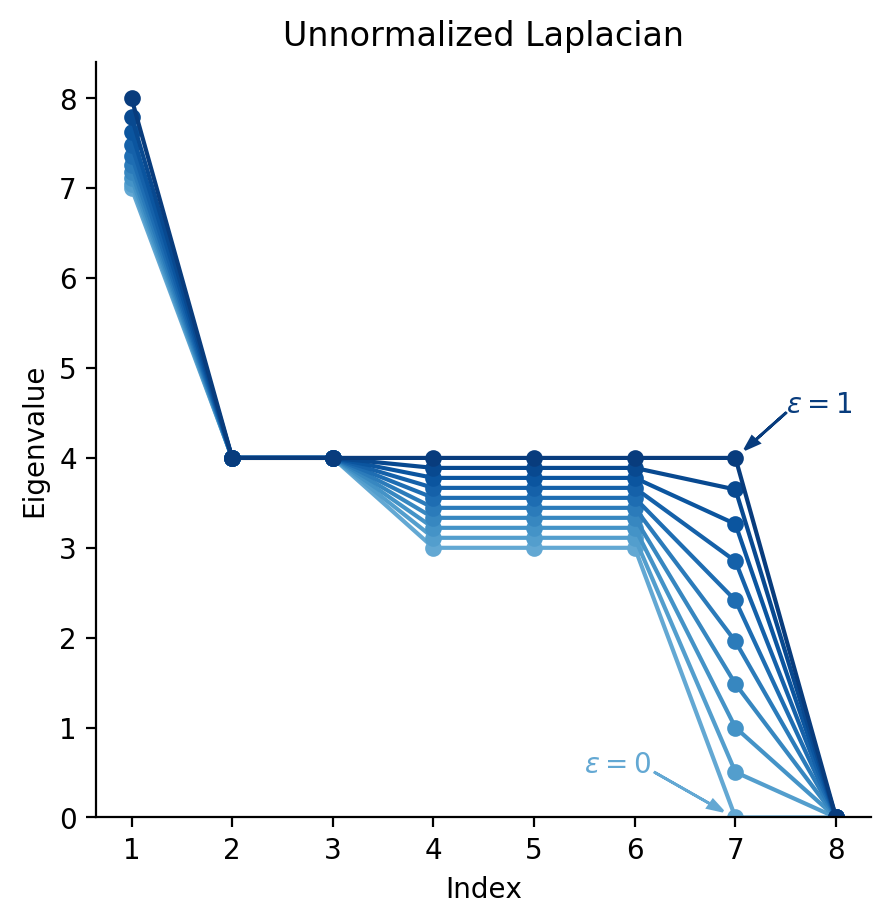}
\caption{
Same as Figure \ref{fig:un_vs_sym__block_diag__un_evals}.
}
\label{fig:un_vs_sym__zero_row__un_evals}
\end{subfigure}%

\caption{
As a row approaches 0, an eigenvalue of the unnormalized Laplacian approaches 0; the spectrum of the symmetric Laplacian is unaffected.
}
\label{fig:un_vs_sym_evals_zero_row}
\end{figure}

For any $X \in \mathbb{R}^{R \times C}_+$ Proposition \ref{prop:sym_lap_block_diag} shows
\begin{equation*} \label{eq:sym_lap_evals_equal_nb}
L_{\text{sym}}(A_{\text{bp}}(X)) \text{ has exactly }  B \text{ eigenvalues equal to } 0 \iff  X \text{ has exactly } B \text{ blocks up to permutations}
\end{equation*}
while
\begin{equation*}\label{eq:un_lap_evals_atmost_nb}
L_{\text{un}}(A_{\text{bp}}(X)) \text{ has exactly } B \text{ eigenvalues equal to } 0 \iff  X \text{ has at most } B \text{ blocks up to permutations}.
\end{equation*}

\begin{remark}
Consider replacing $L_{\text{sym}}(\cdot)$ with $L_{\text{un}}(\cdot)$ in \eqref{prob:bd_eval_constr}. 
We observed that in practice, using the unnormalized Laplacian for the block diagonal MVMM often leads to unsatisfactory solutions with too many rows/columns of 0s.
\end{remark}

Figures \ref{fig:un_vs_sym_evals_block_diag} and \ref{fig:un_vs_sym_evals_zero_row} illustrate the difference between $L_{\text{sym}}(A_{\text{bp}}(X))$ and $L_{\text{un}}(A_{\text{bp}}(X))$. 
When the symmetric Laplacian has small eigenvalues, then $X$ is close to block diagonal.
When the unnormalized Laplacian has small eigenvalues, $X$ is either close to block diagonal or has rows/columns of zeros.

It is easier to enforce the constraint ``at least $B$ eigenvalues are 0'' as opposed to exactly $B$ eigenvalues are 0.
For the symmetric Laplacian this inequality constraint leads to
\begin{equation*}
L_{\text{sym}}(A_{\text{bp}}(X)) \text{ has at least } B \text{ eigenvalues equal to } 0 \iff  X \text{ has at least } B \text{ blocks up to permutations}.
\end{equation*}
If the inequality constraint is placed on the eigenvalues of the unnormalized Laplacian we cannot make a corresponding statement about the block diagonal structure of the matrix.

One approach to ensuring the exact correspondence between the 0 eigenvalues of the unnormalized Laplacian and the block diagonal structure of $X$ is to constrain the degrees to be a known, non-zero constant.
Let $c \in \mathbb{R}^{R \times C}_+$, with $c >0$ then
\begin{equation}\label{eq:un_lap_evals_fixed_deg_exact_nb}
\begin{aligned}
& L_{\text{un}}(A_{\text{bp}}(X)) \text{ has exactly } B \text{ eigenvalues equal to } 0 \text{ and } \text{deg}(A_{\text{bp}}(X)) = c\\
 \iff \; &  X \text{ has at exactly } B \text{ blocks up to permutations and } \text{deg}(A_{\text{bp}}(X)) = c.
\end{aligned}
\end{equation}
Assuming the degrees are known allows one to use the unnormalized Laplacian, but requires stronger modeling assumptions.

Using the unnormalized Laplacian with the fixed degree constraint does not provide computational advantages over our approach based on the symmetric Laplacian.
Each step of the alternating algorithm for the symmetric Laplacian discussed in Section \ref{ss:sym_lap_pen_alt_algo} computes an eigen-decomposition then solves the linearly perturbed subproblem \eqref{prob:wnn_x_update}.
A similar algorithm for the unnormalized Laplacian can also be developed \citep{nie2016constrained}.
The eigen-decomposition for the unnormalized Laplacian requires computing the smallest $K$ eigenvectors of an $\mathbb{R}^{(R + C) \times (R + C)}$ matrix.
On  the other hand, the eigen-decomposition for the symmetric Laplacian can be obtained by computing the largest $K$ singular vectors of a smaller $\mathbb{R}^{R + C}$ matrix (Corollary \ref{cor:eigen_subproblem}). 
Additionally, when the fixed degree constraint is applied for the unnormalized Laplacian the corresponding linearly perturbed subproblem is in the same form as \eqref{prob:wnn_x_update}  (i.e. has linear constraints). 

Note that $L_{\text{sym}}(A_{\text{bp}}(\cdot))$ is not a continuous function near degree zero nodes due to the inverse so we have to be careful about how we use it.
In practice, we find this discontinuity is not a major issue and is not even present in the extremal formulation of the problem \eqref{prob:bd_extremal_rep}.
Minimizing the eigenvalues of the symmetric Laplacian tends not to encourage rows or columns to be zero, unlike the unnormalized Laplacian (see Figure \ref{fig:un_vs_sym_evals_zero_row}).


\section{EM algorithms for the multi-view mixture model}\label{ss:additional_mvm_em}

This section provides EM algorithms to fit the various multi-view mixture model problems described in the body of the paper.
Many of the computations (e.g. the E-step and the M-step for the cluster parameters $\Theta$) can be done using standard single-view mixture model algorithms.
This means we can base implementations of the MVMM EM algorithms off of pre-existing mixture modeling software such as sklearn \citep{pedregosa2011scikit}.

\subsection{EM algorithm for the MVMM} \label{ss:mvmm_em}

We fit the MVMM described in Section \ref{s:mvmm} by minimizing the negative observed data log likelihood 
\begin{equation} \label{prob:max_obs_log_lik_mvmm}
\begin{aligned}
& \underset{\Theta, \pi}{\text{minimize}} & & - \ell( \{x_i \}_{i=1}^n| \Theta, \pi) 
\end{aligned}
\end{equation}
using an EM algorithm. 
The E-step constructs a surrogate function for the original objective function at the current guess and the M-step minimizes this surrogate function \citep{lange2000optimization}.

In detail, at the $s$th step, given current parameter estimates $(\Theta^s, \pi^s)$ the E-step constructs
\begin{equation} \label{eq:em_surrogate}
\begin{aligned}
Q^s\left(\Theta, \pi \right) & := E\left [ \sum_{i=1}^n \log f \left( x_i, y_i |  \Theta^s,  \pi^s \right)  \right] \\
& = \sum_{i=1}^n   \sumoverclusters \gamma(k\vs{1}, \dots, k\vs{V} | x_i ) \log \left( \pi_{k\vs{1}, \dots, k\vs{V}} \prod_{v=1}^V  \phi\vs{v}(x\vs{v}_i | \Theta^{(v)}_{k\vs{v}})  \right), 
\end{aligned}
\end{equation}
where $f$ is the complete data pdf \eqref{eq:mvmm_joint_pdf} and the responsibilities are
\begin{equation}\label{eq:em_resp}
\begin{aligned}
\gamma^s(k\vs{1}, \dots, k\vs{V} | x_i )  & := E\left[ P(y= (k\vs{1}, \dots, k\vs{V}) | x_i ) \right] =  \frac{\pi^s_{k\vs{1}, \dots, k\vs{V}}  \prod_{v=1}^V  \phi\vs{v}(x_i\vs{v} | \Theta^{(v), s}_{k\vs{v}})}{ f( x_i | \Theta^s, \pi^s ) } \text{ for each } i \in [n].
\end{aligned}
\end{equation}
The parameters are then updated in the M-step by solving 
$$\Theta^{s+1}, \pi^{s+1} =  \underset{\Theta, \pi}{\text{argmax }} Q^s\left(\Theta, \pi \right) $$
From \eqref{eq:em_surrogate} we see that this optimization problem splits into $V + 1$ separate problems; one for $\pi$ and one for each set of view cluster parameters $\Theta\vs{v}$, $v=1, \dots, V$.
The $\pi$ update has an analytical solution given by $ \pi^{s + 1} = a$ where $a\in R^{K\vs{1} \times \dots \times K\vs{V}}$ with
\begin{equation} \label{eq:a_resp}
a_{k\vs{1}, \dots, k\vs{V}} := \frac{1}{n} \sum_{i=1}^n \gamma^s(k\vs{1}, \dots, k\vs{V} | x_i).
\end{equation}
The cluster parameters for the $v$th view are updated by solving the following weighted maximum likelihood problem,
\begin{equation} \label{prob:mvmm_m_ste_view_clust_params}
\begin{aligned}
& \underset{\{\Theta^{(v)}_{k}\}_{k=1}^{K\vs{v}}}{\text{minimize}} & &  - \sum_{i=1}^n   \sum_{k=1}^{K\vs{v}} \gamma(k\vs{1}, \dots, k\vs{V} | x_i ) \log \left( \phi\vs{v}(x\vs{v}_i | \Theta^{(v)}_{k})  \right),
\end{aligned}
\end{equation}
Note this problem is in exactly the same form as the M-step for a standard, single-view mixture model making it straightforward to use pre-existing EM implementations.

\begin{algorithm}[H] \label{algo:mvmm_em}
\DontPrintSemicolon
  
\KwInput{$K\vs{1}, \dots, K\vs{V}$}
\KwData{$ \{x_i \}_{i=1}^n$}
\KwOutput{$\Theta, \pi$}
Initialize $\Theta^0 = \{\Theta^{(v), 0}\}_{v=1}^V, \pi^0$.
  
\While{Stopping criteria not satisfied}
{

$Q^s(\cdot), a^s \leftarrow$ E-step$\left( \{x_i \}_{i=1}^n, \Theta^s, \pi^s \right)$ \tcp*{From \eqref{eq:em_surrogate} and  \eqref{eq:em_resp}}

\For{for v=1, \dots, V}{ 
	$\Theta^{(v), s+1} \leftarrow \text{argmin}_{\Theta\vs{v}} Q^s(\Theta\vs{v})$ \tcp*{Solves a problem in the form of \eqref{prob:mvmm_m_ste_view_clust_params}}
}

$\pi^{s+1} \leftarrow  a^s$  
  
$ s \leftarrow s + 1$
  
}
\caption{EM algorithm for the MVMM}
\end{algorithm}

The view specific cluster parameters, $\Theta$, can be initialized using standard mixture model initialization strategies. 
We initialize the $\pi$ matrix so that the entries all have the same value.
We terminate the algorithm when the objective function has stopped decreasing.

\subsection{EM algorithm for the log penalized MVMM} \label{ss:log_pen_em}

This section discusses an EM algorithm for the log-penalized problem \eqref{eq:log_pen_max_log_lik}.
This EM algorithm is similar to the one described in Section \ref{ss:mvmm_em}, but the M-step solves a different problem.
At each step we majorize the log-likelihood with $Q^s(\Theta, \pi)$ from \eqref{eq:em_surrogate}. 
The updates for $\Theta$ are the same as in Section \ref{ss:mvmm_em}.
The update of $\pi$ leads to the following problem
\begin{equation} \label{eq:mvmm_m_step}
\begin{aligned}
& \underset{\pi \in \mathbb{R}^{K\vs{1} \times \dots \times K\vs{V}}}{\text{minimize}} & & - \sum_{k\vs{1}=1}^{K\vs{1}} \dots \sum_{k\vs{V}=1}^{K\vs{V}} a_{k\vs{1}, \dots, k\vs{V}}  \log( \pi_{k\vs{1}, \dots, k\vs{V}}) + \lambda  \log(\delta + \pi_{k\vs{1}, \dots, k\vs{V}}) \\
& \text{subject to } & &  \pi \ge 0 \text{ and }\sum_{k\vs{1}=1}^{K\vs{1}} \dots \sum_{k\vs{V}=1}^{K\vs{V}}  \pi_{k\vs{1}, \dots, k\vs{V}}= 1,
\end{aligned}
\end{equation}
where $a$ is given by \eqref{eq:a_resp}.
Based on Theorem \ref{thm:approx_soft} we approximate the solution to this problem with the normalized soft-thresholding operation
\begin{equation}\label{eq:mvmm_soft_thresh}
\pi_{k\vs{1}, \dots, k\vs{V}} = \frac{(a_{k\vs{1}, \dots, k\vs{V}} - \lambda)_+}{  \sum_{j\vs{1}=1}^{K\vs{1}} \dots  \sum_{j\vs{V}=1}^{K\vs{V}} (a_{j\vs{1}, \dots, j\vs{V}} - \lambda)_+}.
\end{equation}

\begin{algorithm}[H] \label{algo:mvmm_em_log_pen}
\DontPrintSemicolon
  
\KwInput{$K\vs{1}, \dots, K\vs{V}$, $0 < \lambda  < \frac{1}{\prod_{v=1}^V K\vs{v}}$ }
\KwData{$ \{x_i \}_{i=1}^n$}
\KwOutput{$\Theta, \pi$}
Initialize $\Theta^0 = \{\Theta^{(v), 0}\}_{v=1}^V, \pi^0$.
  
\While{Stopping criteria not satisfied}
{

$Q^s(\cdot), a^s \leftarrow$ E-step$\left( \{x_i \}_{i=1}^n, \Theta^s, \pi^s \right)$ \tcp*{From \eqref{eq:em_surrogate} and  \eqref{eq:em_resp}}

\For{for v=1, \dots, V}{ 
	$\Theta^{(v), s+1} \leftarrow \text{argmin}_{\Theta\vs{v}} Q^s(\Theta\vs{v})$ \tcp*{Solves a problem in the form of \eqref{prob:mvmm_m_ste_view_clust_params}}
}

$\pi^{s+1} \leftarrow$ normalized soft-thresholding applied to $a^s$ as in \eqref{eq:mvmm_soft_thresh} \tcp*{Approximates Problem \eqref{eq:mvmm_m_step}}

$ s \leftarrow s + 1$
  
}
\caption{EM algorithm for the log-penalized MVMM, Problem \eqref{eq:log_pen_max_log_lik} }
\end{algorithm}

Algorithm \ref{algo:mvmm_block_diag_extremal_em} is initialized by running a few EM steps for the unconstrained MVMM using the algorithm discussed in Section \ref{ss:mvmm_em}.
We terminate the algorithm when the objective function of \eqref{eq:log_pen_max_log_lik} has stopped decreasing. 
We specify a small value of $\delta$ (e.g. $10^{-6}$) to monitor the convergence of the objective function, but this value of $\delta$ plays no role in the EM updates due to \eqref{eq:mvmm_soft_thresh}.

\subsection{EM algorithm for the block diagonally constrained MVMM} \label{ss:mvmm_bd_constr_algo}

Following Sections \ref{ss:block_diag_opt} and \ref{s:wnn} we replace \eqref{prob:mvmm_bd_constr} with the following related problem for a sufficiently large value of $\alpha$,
\begin{equation} \label{prob:mvmm_block_diag_extremal}
\begin{aligned}
&\underset{\Theta, \blckdiag, U}{\text{minimize}}   & &  -  \ell( \{x_i \}_{i=1}^n| \Theta, \epsilon \mathbf{1}_{K\vs{1}}\mathbf{1}_{K\vs{2}}^T + \blckdiag) +  \alpha \text{Tr} \left( U^T L_{\text{un}}(A_{\text{bp}}(\blckdiag)) U\right)  \\
& \text{subject to} & & \blckdiag \ge 0,  \langle \blckdiag, \mathbf{1}_{K\vs{1}} \mathbf{1}_{K\vs{2}}^T  \rangle = 1 - K\vs{1} K\vs{2} \epsilon  \\
&&& U^T \text{diag}(\text{deg}(A_{\text{bp}}(\blckdiag))) U = I_B.
\end{aligned}
\end{equation}

We can solve this problem by alternating between updating $U$ and $(\Theta, \pi)$.
The $U$ variable is updated with an eigen-decomposition as in Corollary \ref{cor:eigen_subproblem}.
To update $(\Theta, \blckdiag)$ at the $s$th step we majorize the log-likelihood with $Q^s(\Theta, \epsilon \mathbf{1}_{K\vs{1}}\mathbf{1}_{K\vs{2}}^T+ \blckdiag)$ from \eqref{eq:em_surrogate}. 
The update for $\Theta$ is the same as in Section \ref{ss:mvmm_em}.
The M-step for $\blckdiag$ solves the following convex problem
\begin{equation}\label{prob:mvmm_block_diag_extremal_mstep}
\begin{aligned}
&\underset{\blckdiag}{\text{minimize}}   & & -\sum_{k\vs{1}=1}^{K\vs{1}}   \sum_{k\vs{2}=1}^{K\vs{2}} a_{k\vs{1} k\vs{2}}  \log (\epsilon + \blckdiag_{k\vs{1} k\vs{2}} ) + \alpha  \langle  \blckdiag, M(V, \mathbf{1}_B) \rangle\\
& \text{subject to} & & \blckdiag \ge 0, \langle \blckdiag, \mathbf{1}_{K\vs{1}} \mathbf{1}_{K\vs{2}}^T \rangle =  1 - K\vs{1} K\vs{2} \epsilon\\ 
& & &  \begin{bmatrix} c_{\text{diag}}(V)^T  \\ c_{\text{utri}}(V)^T \end{bmatrix} \text{diag}(\text{deg} ( A_{\text{bp}}( \blckdiag))) =   \begin{bmatrix} \mathbf{1}_{B} \\ \mathbf{0}_{ {B \choose 2}}  \end{bmatrix},
\end{aligned}
\end{equation}
where $a$ is from \eqref{eq:a_resp} and $M(U, \mathbf{1}_B), c_{\text{diag}}(U), c_{\text{utri}}(U)$ are from \eqref{eq:M}, \eqref{eq:c_diag}, \eqref{eq:c_utri}. 
Let \textsc{Update-D}(U) be an algorithm that solves the convex Problem \eqref{prob:mvmm_block_diag_extremal_mstep}. 

\begin{algorithm}[H] \label{algo:mvmm_block_diag_extremal_em}
\DontPrintSemicolon
  
 \KwInput{ $K\vs{1}, K\vs{2}$, $1 \le B \le \min(K\vs{1}, K\vs{2})$,  $0 < \epsilon <  \frac{1}{K\vs{1} K\vs{2}}$, $\alpha >0$}

\KwData{$ \{x_i \}_{i=1}^n$}
\KwOutput{$\Theta, \blckdiag$}
  
Initialize $\Theta^0 = \{\Theta^{(v), 0}\}_{v=1}^2, \blckdiag^0$.

Initialize $\alpha$ \tcp*{E.g. from \eqref{eq:alpha_heruistic} below} 

\While{Block diagonal stopping criteria is not satisfied}
{

\While{Optimization convergence stopping criteria not satisfied}
{

$U^{s+1} \leftarrow $ smallest $B$ generalized eigenvectors of \tcp*{Computed as in Corollary \ref{cor:eigen_subproblem}} 
$$\left( L_{\text{un}} (A_{\text{bp}}(\blckdiag^{s})), \text{diag}(\text{deg}(A_{\text{bp}}(\blckdiag^s))) \right)$$

$Q^s(\cdot) \leftarrow$ E-step$\left( \{x_i \}_{i=1}^n, \Theta^s, \epsilon \mathbf{1}_{K\vs{1}}\mathbf{1}_{K\vs{2}}^T +  \blckdiag^s \right)$  \tcp*{From \eqref{eq:em_surrogate} and  \eqref{eq:em_resp}}

\For{for v=1, 2}{ 
	$\Theta^{(v), s+1} \leftarrow \text{argmin}_{\Theta\vs{v}} Q^s(\Theta\vs{v})$ \tcp*{Solves a problem in the form of \eqref{prob:mvmm_m_ste_view_clust_params}}
}

$\blckdiag^{s+1} \leftarrow    \textsc{Update-D}(U^{s+1})$

$ s \leftarrow s + 1$
  
}

Increase $\alpha$ \tcp*{E.g. $\alpha \leftarrow 2 * \alpha$}
}
\caption{EM algorithm for the block diagonally constrained MVMM, Problem \eqref{prob:mvmm_block_diag_extremal}}
\end{algorithm}

Algorithm \ref{algo:mvmm_block_diag_extremal_em} is initialized by running a few EM steps for the unconstrained MVMM using the algorithm discussed in Section \ref{ss:mvmm_em}.
Each step of the inner loop of Algorithm \ref{algo:mvmm_block_diag_extremal_em} is guaranteed to decrease the objective function of \eqref{prob:mvmm_block_diag_extremal} therefore, we stop the inner loop when the objective function has stopped decreasing.
The convergence results discussed in Section \ref{s:wnn} apply to the inner loop of Algorithm \ref{algo:mvmm_block_diag_extremal_em}.

For a given value of $\alpha$, the solution output by  Algorithm \ref{algo:mvmm_block_diag_extremal_em} may have too few 0 eigenvalues; in this case we increase $\alpha$  (e.g. multiplying it by 2) and re-run the inner loop. 
The following proposition motivates a heuristic choice for the initial value of $\alpha$ as well as an initializer for \textsc{Update-D}.
\begin{proposition} \label{prop:log_lasso_heuristic}
Let $a, b, \epsilon > 0$.
The unique global minimizer,
\begin{equation}\label{prob:meowmoewmoew}
\begin{aligned}
x^* = \; &\underset{x \in \mathbb{R}}{\text{argmin}}   & & - a \log(x + \epsilon) + b x\\
& \text{subject to} & & x \ge 0
\end{aligned}
\end{equation}
is given by
\begin{equation*}
x^* = 
\begin{cases}
\frac{a}{b} - \epsilon & \text{ if } \frac{a}{b} - \epsilon > 0 \\
0  & \text{otherwise.}
\end{cases}
\end{equation*}
\end{proposition}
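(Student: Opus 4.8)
The plan is to recognize this as a one-dimensional strictly convex program over the half-line and to characterize its minimizer via first-order optimality. First I would set $g(x) := -a\log(x + \epsilon) + bx$ on the feasible set $[0, \infty)$; since $\epsilon > 0$ the argument of the logarithm is strictly positive throughout, so $g$ is twice differentiable there with $g''(x) = \frac{a}{(x+\epsilon)^2} > 0$. Hence $g$ is strictly convex on the closed convex set $[0,\infty)$, which guarantees at most one minimizer. Existence follows because $g$ is continuous and $g(x) \to \infty$ as $x \to \infty$ (the linear term $bx$ dominates the logarithm), so a unique global minimizer $x^*$ exists.

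Next I would compute $g'(x) = b - \frac{a}{x + \epsilon}$ and locate the unconstrained stationary point by solving $g'(x) = 0$, which yields $x + \epsilon = \frac{a}{b}$, i.e.\ the candidate value $\frac{a}{b} - \epsilon$. I would then split into two cases according to whether this candidate is feasible. If $\frac{a}{b} - \epsilon > 0$, the stationary point lies in the interior of $[0, \infty)$, and by strict convexity it is the global minimizer, so $x^* = \frac{a}{b} - \epsilon$. If instead $\frac{a}{b} - \epsilon \le 0$, I would observe that $g'(0) = b - \frac{a}{\epsilon} \ge 0$, an inequality exactly equivalent to $\frac{a}{b} \le \epsilon$; since $g'$ is increasing (again by convexity), it follows that $g' \ge 0$ on all of $[0,\infty)$, so $g$ is nondecreasing and the minimum is attained at the boundary, $x^* = 0$.

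This argument is elementary and I do not anticipate a genuine obstacle; the only point requiring minor care is the boundary case, where one must verify that the sign condition $g'(0) \ge 0$ matches precisely the stated threshold $\frac{a}{b} - \epsilon \le 0$ and then invoke monotonicity of $g'$, rather than naively substituting the infeasible stationary point into the formula. Equivalently, one could phrase the whole proof through the KKT conditions for the constraint $x \ge 0$ with multiplier $\mu \ge 0$: stationarity reads $b - \frac{a}{x+\epsilon} - \mu = 0$ together with complementary slackness $\mu x = 0$, and examining the two branches ($\mu = 0$ versus $x = 0$) recovers the same closed-form solution.
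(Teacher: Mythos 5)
Your proof is correct and takes essentially the same approach as the paper's: establish strict convexity of $-a\log(x+\epsilon)+bx$, find the unique stationary point $x=\frac{a}{b}-\epsilon$, and split into cases according to whether it is feasible. In fact you supply details the paper's terse proof leaves implicit (existence via coercivity, and the boundary case via $g'(0)\ge 0$ together with monotonicity of $g'$), so nothing is missing.
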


Let $(\Theta^0, D^0)$ be the initial guess in Algorithm \ref{algo:mvmm_block_diag_extremal_em}.
By ignoring constraints the solution to \eqref{prob:mvmm_block_diag_extremal_mstep} can be approximated by
\begin{equation*} \label{eq:D_approx}
D_{k\vs{1}, k\vs{2}}^* \approx  \left( \frac{a_{k\vs{1}, k\vs{2}}}{\alpha M(U, \mathbf{1}_B)_{k\vs{1}, k\vs{2}}} - \epsilon \right)_+,
\end{equation*}
where $a$ and  $U$ are obtained from $(\Theta^0, D^0)$ as above.
This suggests the following guess\footnote{This median value gives a rough estimate for the scale of $\alpha$ at which terms are set to 0.} for $\alpha$
\begin{equation}\label{eq:alpha_heruistic}
\alpha = c \cdot \text{median} \left(  \left \{ \frac{a_{k\vs{1}, k\vs{2}}} {\epsilon M(U, \mathbf{1}_B)_{k\vs{1}, k\vs{2}} }  \right \}_{k\vs{1} \in [K\vs{1}], k\vs{2} \in [K\vs{2}]}\right)
\end{equation}
for some small value of $c < 0$ e.g. $c = 0.01$.

Algorithm \ref{algo:mvmm_block_diag_extremal_em} can be sensitive to the initial choice of $\alpha$ and how fast $\alpha$ is increased.
Informally, if $\alpha$ is too large, the algorithm may converge quickly to a bad local minimizer.
If $\alpha$ is too small the algorithm will take longer to converge.


\section{Additional simulations} \label{s:add_sims}

This section expands on the simulations presented in Section \ref{s:simulations}.
The setup here is similar to the setup in Section \ref{s:simulations}.
Here we look at three different $\Pi$ matrices (Figure \ref{fig:sim_pi_mats}) and at two different signal to noise settings.
In the first setting the views have uneven signal to noise ratio where $\sigma_{\text{mean}}\vs{1} = 1$ and  $\sigma_{\text{mean}}\vs{2} = 0.5$ (i.e. the first view clusters are better separated than the second view clusters).
In the second setting the views have even signal to noise ratios where $\sigma_{\text{mean}}\vs{1} = \sigma_{\text{mean}}\vs{2} = 1$.
The figures below examine cluster level performance at the true parameter values (similar to Figure \ref{fig:beads_2_5__1__.5__n_samples_vs_test_overall_ars_at_truth}), block level performance at the true parameter values (similar to Figure \ref{fig:beads_2_5__1__.5__n_samples_vs_test_community_restr_ars_at_truth}) and the BIC estimated number of components (similar to Figure \ref{fig:beads_2_5__1__.5__n_samples_vs_n_comp_est_at_bic_selected}).
The details of these figures are explained in Section \ref{s:simulations}.

\begin{figure}[H]
 \centering
\begin{subfigure}[t]{0.2\textwidth}
\includegraphics[width=\linewidth,  height=\linewidth]{beads_2_5__1__.5/pi_true}
\caption{
$\Pi \in \mathbb{R}^{10 \times 10}$ with five $2 \times 2$ blocks.
All entries are equal.
}
\label{fig:beads_2_5__1__.5__pi_true__app}
\end{subfigure}
\hfill
\begin{subfigure}[t]{0.2\textwidth}
\includegraphics[width=\linewidth,  height=\linewidth]{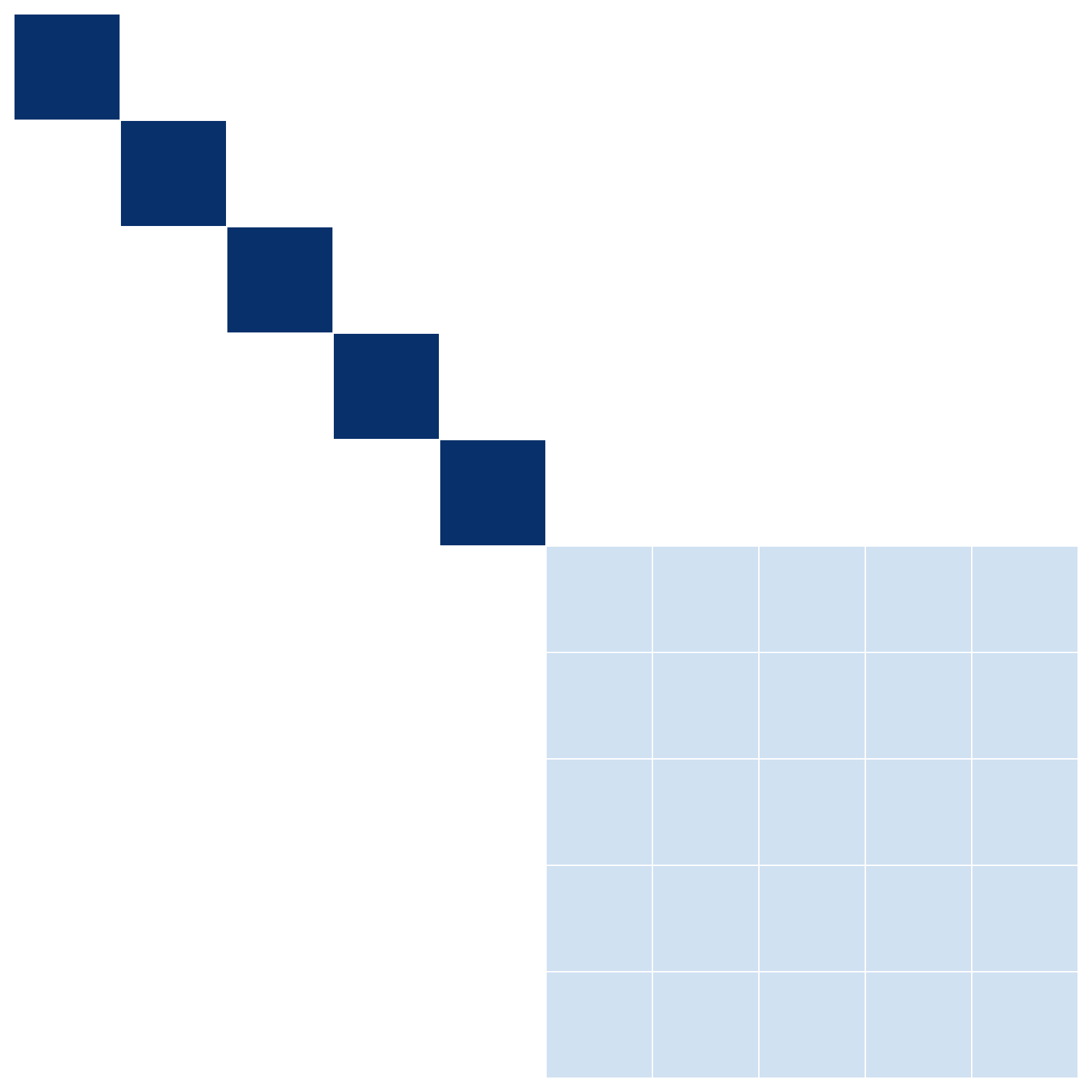}
\caption{
$\Pi \in \mathbb{R}^{10 \times 10}$ with five $1 \times 1$ blocks and one $5 \times 5$ block.
All blocks have the same total weight.
}
\label{fig:lollipop_5_5__1_.5_pi_true}
\end{subfigure}
\hfill
 \begin{subfigure}[t]{0.2\textwidth}
\includegraphics[width=\linewidth,  height=\linewidth]{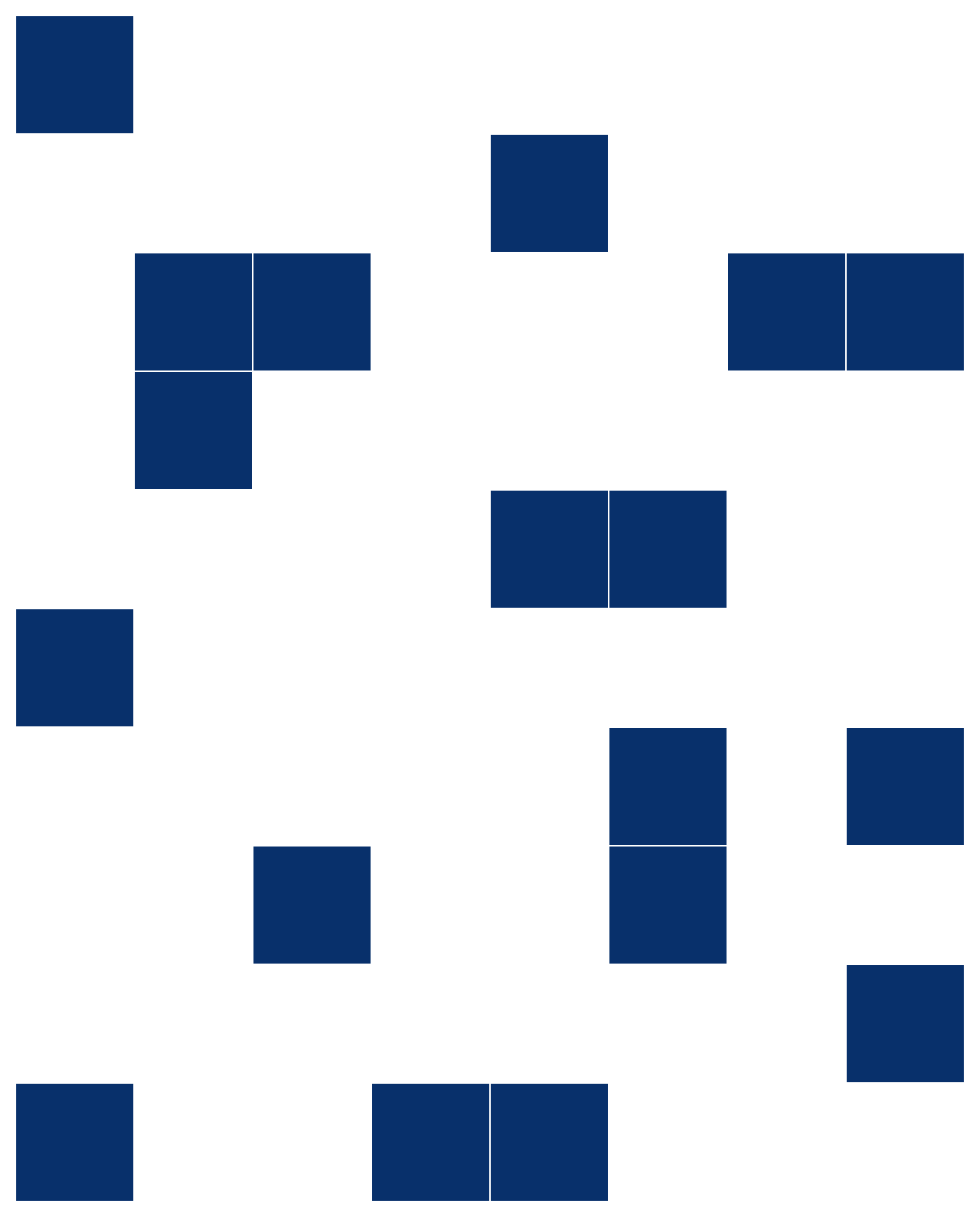}
\caption{
$\Pi \in \mathbb{R}^{10 \times 10}$  with $18$ randomly selected, non-zero entries with equal values.
}
\label{fig:sparse_pi_3__1_.5__pi_true}
\end{subfigure}
\caption{
Three different $\Pi$ matrices examined by simulations in this section.
}
\label{fig:sim_pi_mats}
\end{figure}

The overall takeaway is that the bd-MVMM and log-MVMM usually outperform the MVMM in the uneven setting.
In the even setting the bd-MVMM and log-MVMM sometimes still have an edge over the MVMM, but all three models are much closer together.
The log-MVMM sometimes struggles with the block level performance because small errors in the support of the estimated $\Pi$ can merge blocks together.
The BIC criteria often, but not always works well for the bd-MVMM.
This BIC criteria is usually biased towards selecting too few components for the log-MVMM.

Figure \ref{fig:beads_2_5__1__1_results} shows the results for the $\Pi$ matrix shown in \ref{fig:beads_2_5__1__.5__pi_true__app}.
The top row shows the uneven setting and the bottom row shows the even setting.
In the uneven setting the log-MVMM and bd-MVMM out-perform the MVMM in both cluster level and block level performance.
In the even setting the log-MVMM and bd-MVMM are only slightly better than the MVMM.
In the uneven setting the log-MVMM struggles with BIC based model selection, though it works well in the even setting.

Figure \ref{fig:lollipop_results} shows the results for the $\Pi$ matrix shown in \ref{fig:lollipop_5_5__1_.5_pi_true}.
In the uneven setting the bd-MVMM performs the best on both the cluster level and block level labels, however, the log-MVMM struggles with the block labels.
In this uneven setting BIC does not work well for either model and selects too few clusters in both cases.
BIC may struggle with the bd-MVMM because the individual clusters in the large $5\times 5$ block have smaller weights and therefore breaking this block up into several blocks does not harm the model fit as much as with the other $\Pi$.
In the even setting all three models perform similarly though the bd-MVMM has a slight edge at smaller sample sizes.
In this even setting BIC works well for bd-MVMM, but is still biased down for log-MVMM.

Figure \ref{fig:sparse_pi_results} shows the results for the sparse $\Pi$ matrix shown in \ref{fig:sparse_pi_3__1_.5__pi_true}.
Again in the uneven setting the log-MVMM out performs the MVMM, but in the even setting the two models are much closer together.
BIC is still biased towards too few clusters for this setting.

\begin{figure}[H]
 \centering
 \begin{subfigure}[t]{0.2\textwidth}
\includegraphics[width=\linewidth,  height=\linewidth]{beads_2_5__1__.5/n_samples_vs_test_overall_ars_at_truth}
\label{fig:beads_2_5__1__.5__n_samples_vs_test_overall_ars_at_truth__app}
\end{subfigure}
\hfill
\begin{subfigure}[t]{0.2\textwidth}
\includegraphics[width=\linewidth,  height=\linewidth]{beads_2_5__1__.5/n_samples_vs_test_community_restr_ars_at_truth}
\label{fig:beads_2_5__1__.5__n_samples_vs_test_community_restr_ars_at_truth__app}
\end{subfigure}
\hfill
 \begin{subfigure}[t]{0.2\textwidth}
\includegraphics[width=\linewidth,  height=\linewidth]{beads_2_5__1__.5/n_samples_vs_n_comp_est_at_bic_selected}
\label{fig:beads_2_5__1__.5__n_samples_vs_n_comp_est_at_bic_selected__app}
\end{subfigure}
\newline
\begin{subfigure}[t]{0.2\textwidth}
\includegraphics[width=\linewidth,  height=\linewidth]{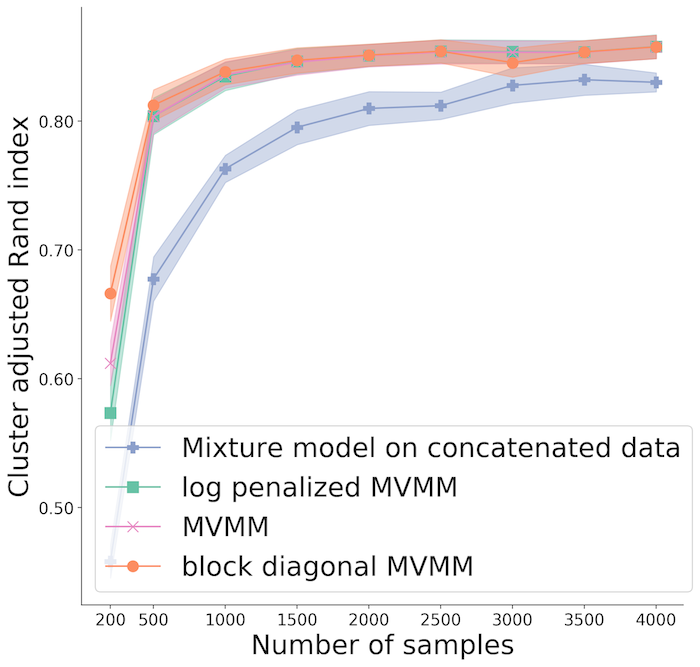}
\label{fig:beads_2_5__1__1__n_samples_vs_test_overall_ars_at_truth__app}
\end{subfigure}
\hfill
\begin{subfigure}[t]{0.2\textwidth}
\includegraphics[width=\linewidth,  height=\linewidth]{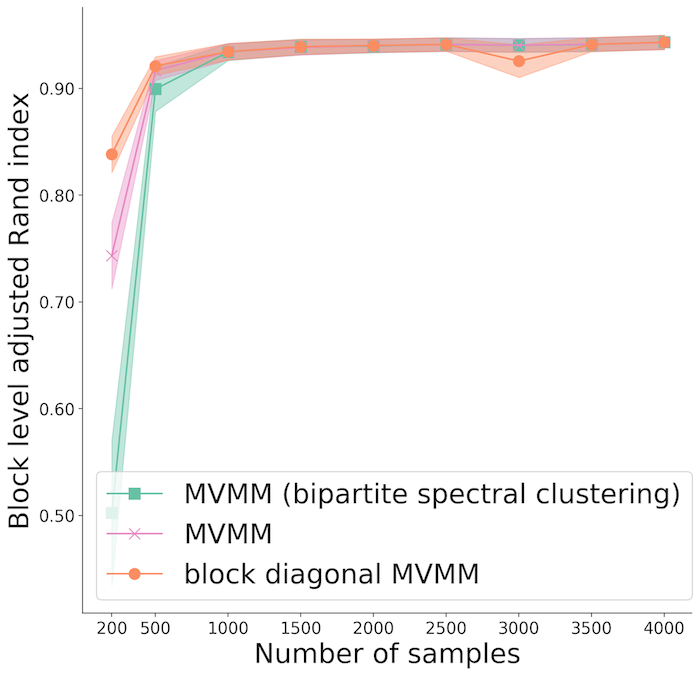}
\label{fig:beads_2_5__1__1__n_samples_vs_test_community_restr_ars_at_truth__app}
\end{subfigure}
\hfill
 \begin{subfigure}[t]{0.2\textwidth}
\includegraphics[width=\linewidth,  height=\linewidth]{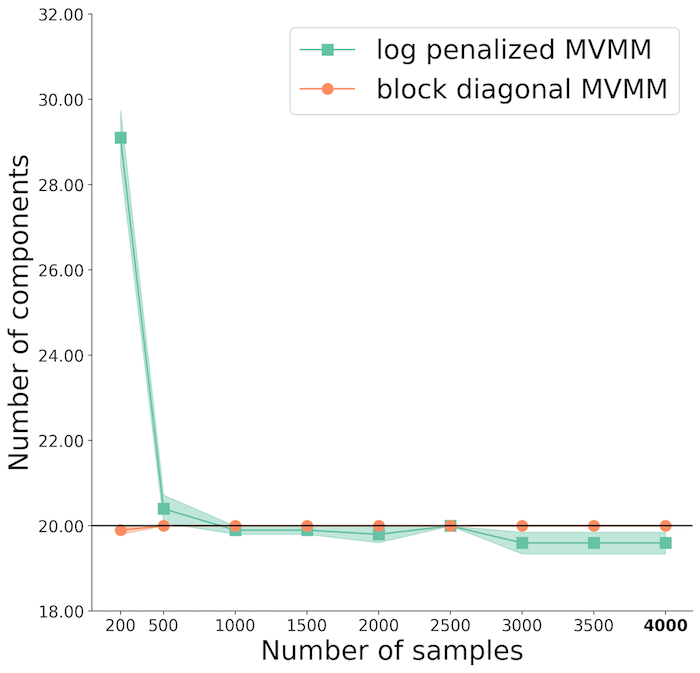}
\label{fig:beads_2_5__1__1__n_samples_vs_n_comp_est_at_bic_selected__app}
\end{subfigure}
\caption{
Results for the 5 block $\Pi$ matrix shown in Figure \ref{fig:beads_2_5__1__.5__pi_true__app}.
In the top row the view signal to noise ratios are uneven with  $\sigma_{\text{mean}}\vs{1} = 1$ and  $\sigma_{\text{mean}}\vs{2} = 0.5$.
In the bottom row the view signal to noise ratios are even with  $\sigma_{\text{mean}}\vs{1} =  \sigma_{\text{mean}}\vs{2} = 1$.
The first two columns examine the cluster label and block label performance at the true hyper-parameter values.
The third column examines the BIC estimated number of components.
}
\label{fig:beads_2_5__1__1_results}
\end{figure}

\begin{figure}[H]
 \centering
 \begin{subfigure}[t]{0.2\textwidth}
\includegraphics[width=\linewidth,  height=\linewidth]{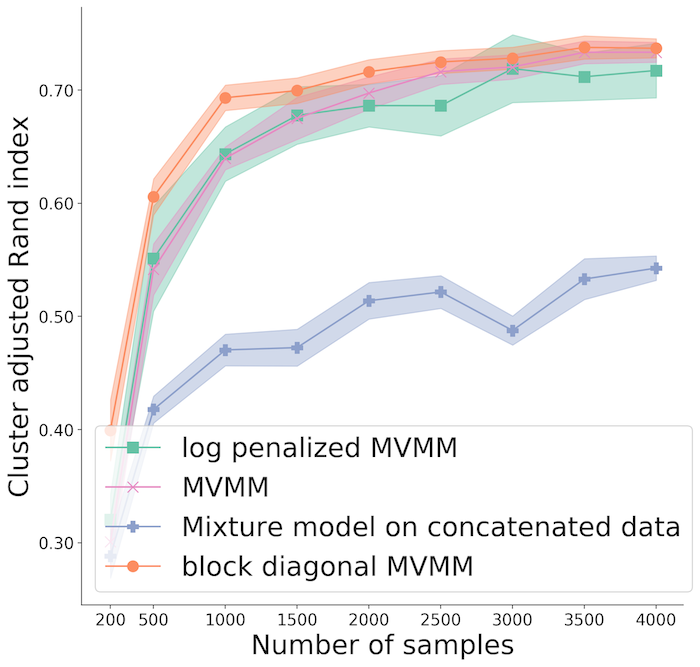}
\label{fig:lollipop_5_5__1_.5__n_samples_vs_test_overall_ars_at_truth}
\end{subfigure}
\hfill
\begin{subfigure}[t]{0.2\textwidth}
\includegraphics[width=\linewidth,  height=\linewidth]{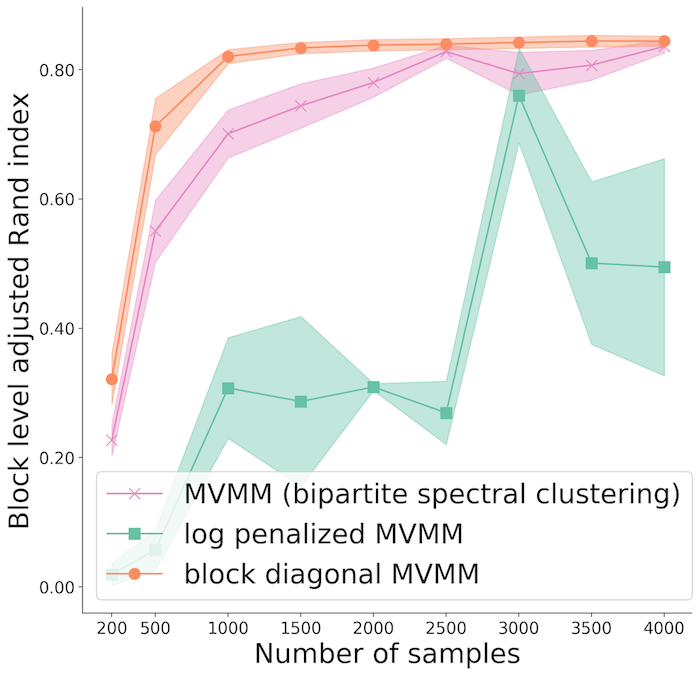}
\label{fig:lollipop_5_5__1_.5__n_samples_vs_test_community_restr_ars_at_truth}
\end{subfigure}
\hfill
 \begin{subfigure}[t]{0.2\textwidth}
\includegraphics[width=\linewidth,  height=\linewidth]{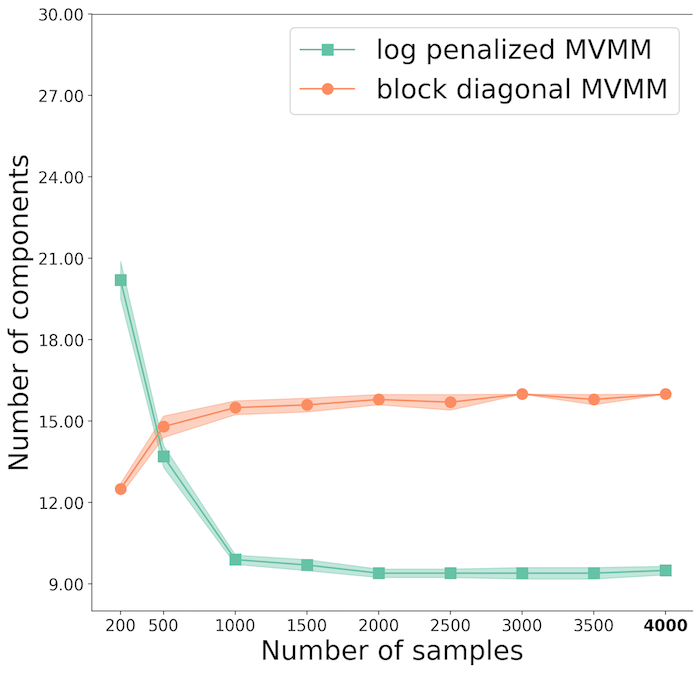}
\label{fig:lollipop_5_5__1_.5__n_samples_vs_n_comp_est_at_bic_selected}
\end{subfigure}
\newline
\begin{subfigure}[t]{0.2\textwidth}
\includegraphics[width=\linewidth,  height=\linewidth]{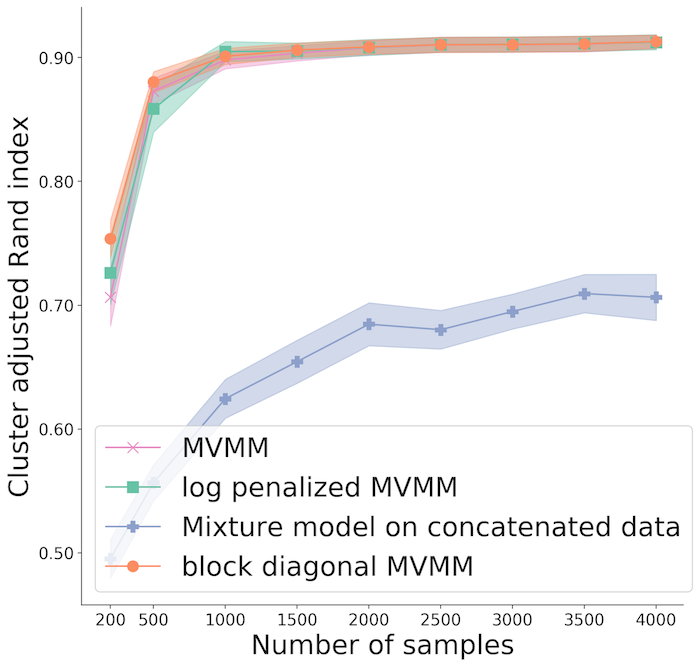}
\label{fig:lollipop_5_5__1_1__n_samples_vs_test_overall_ars_at_truth}
\end{subfigure}
\hfill
\begin{subfigure}[t]{0.2\textwidth}
\includegraphics[width=\linewidth,  height=\linewidth]{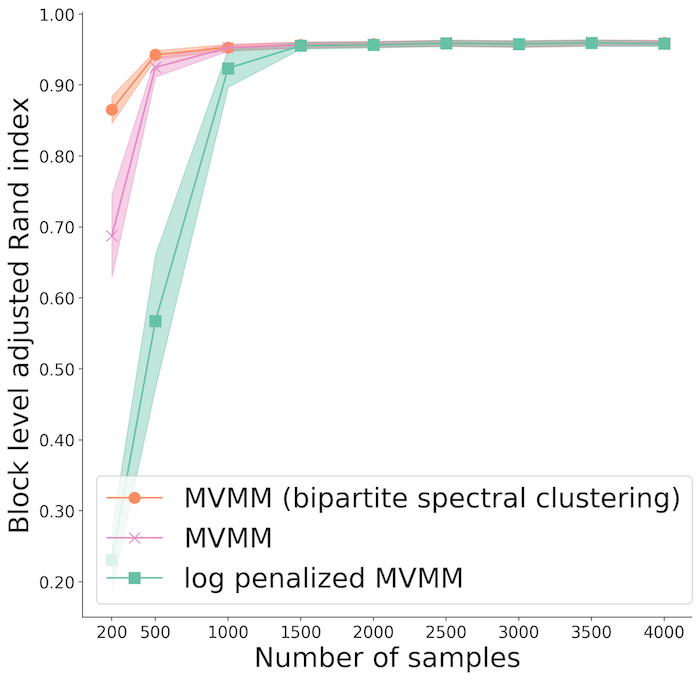}
\label{fig:lollipop_5_5__1_1__n_samples_vs_test_community_restr_ars_at_truth}
\end{subfigure}
\hfill
 \begin{subfigure}[t]{0.2\textwidth}
\includegraphics[width=\linewidth,  height=\linewidth]{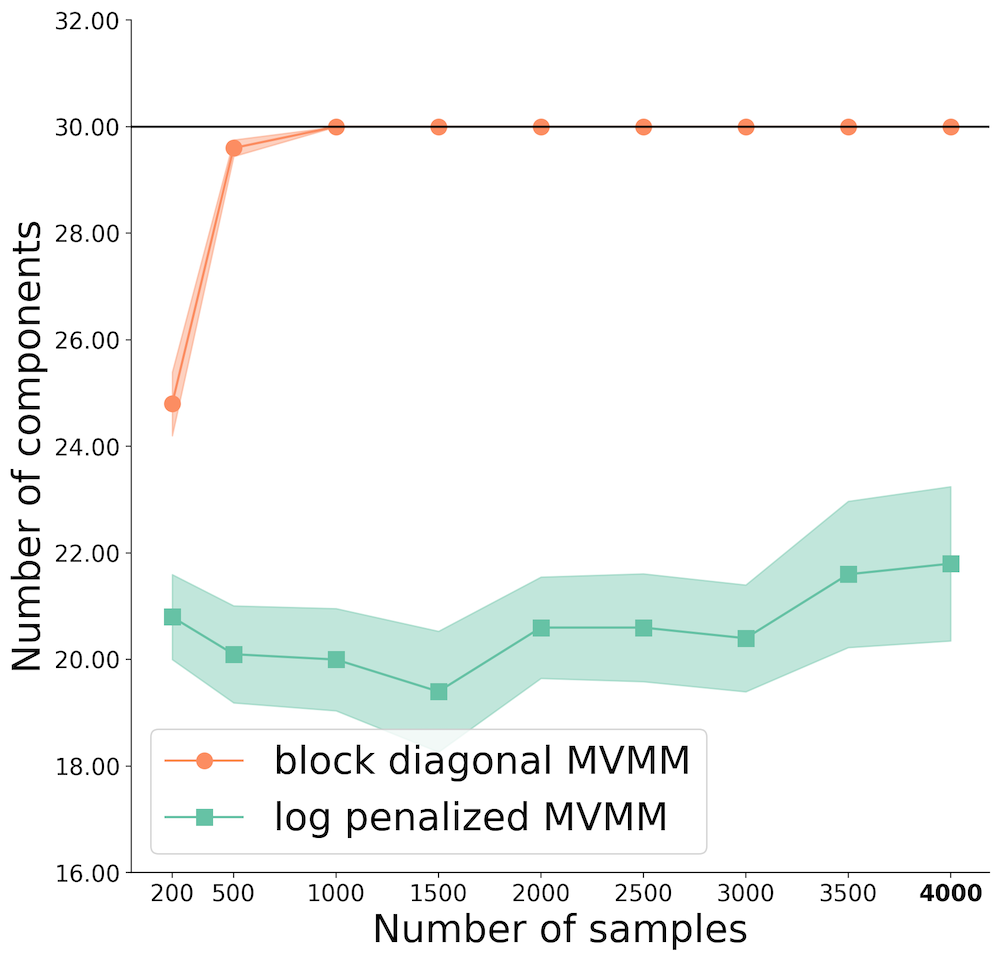}
\label{fig:lollipop_5_5__1_1__n_samples_vs_n_comp_est_at_bic_selected}
\end{subfigure}
\caption{
Results for the 6 block $\Pi$ matrix shown in Figure \ref{fig:lollipop_5_5__1_.5_pi_true}.
The top row shows the results for the uneven signal to noise ratio and the bottom row shows the results for the even signal to noise ratio.
}
\label{fig:lollipop_results}
\end{figure}

\begin{figure}[H]
 \centering
 \begin{subfigure}[t]{0.25\textwidth}
\includegraphics[width=\linewidth,  height=\linewidth]{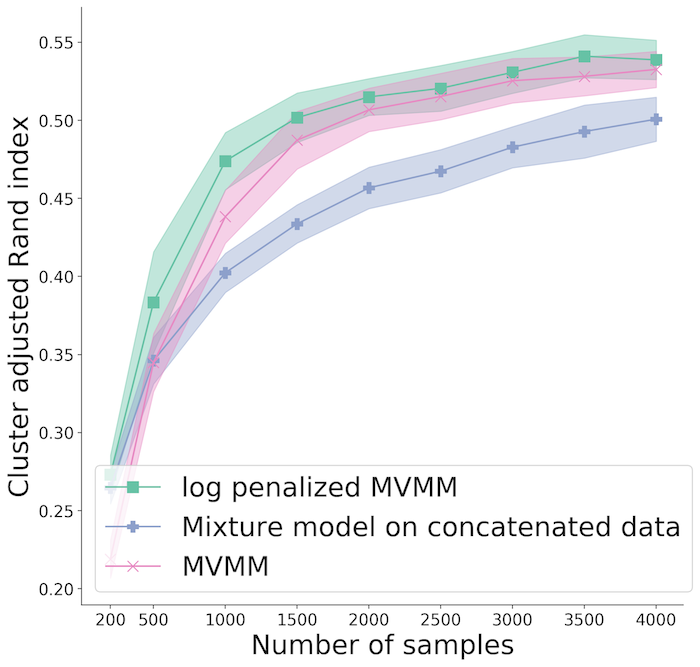}
\label{fig:sparse_pi_3__1_.5__n_samples_vs_test_overall_ars_at_truth}
\end{subfigure}
 \begin{subfigure}[t]{0.25\textwidth}
\includegraphics[width=\linewidth,  height=\linewidth]{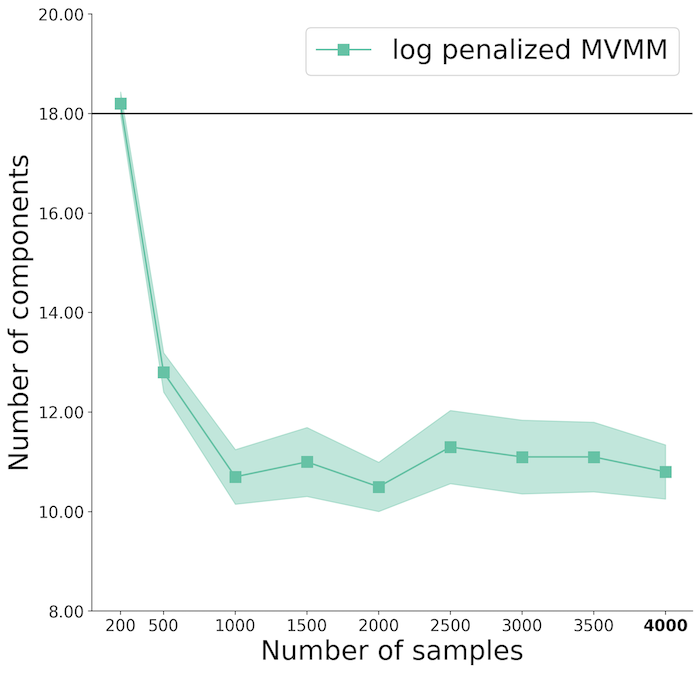}
\label{fig:sparse_pi_3__1_.5__n_samples_vs_n_comp_est_at_bic_selected}
\end{subfigure}
\vskip\baselineskip
\begin{subfigure}[t]{0.25\textwidth}
\includegraphics[width=\linewidth,  height=\linewidth]{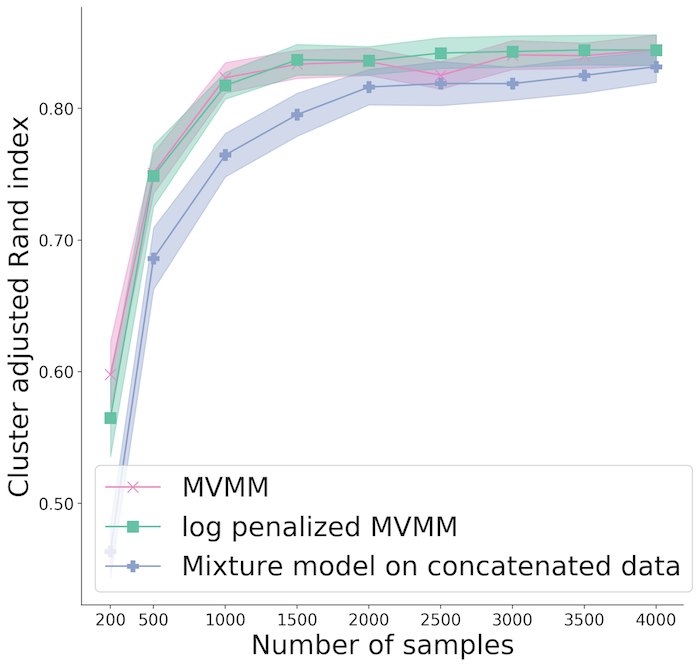}
\label{fig:sparse_pi_3__1_1__n_samples_vs_test_overall_ars_at_truth}
\end{subfigure}
 \begin{subfigure}[t]{0.25\textwidth}
\includegraphics[width=\linewidth,  height=\linewidth]{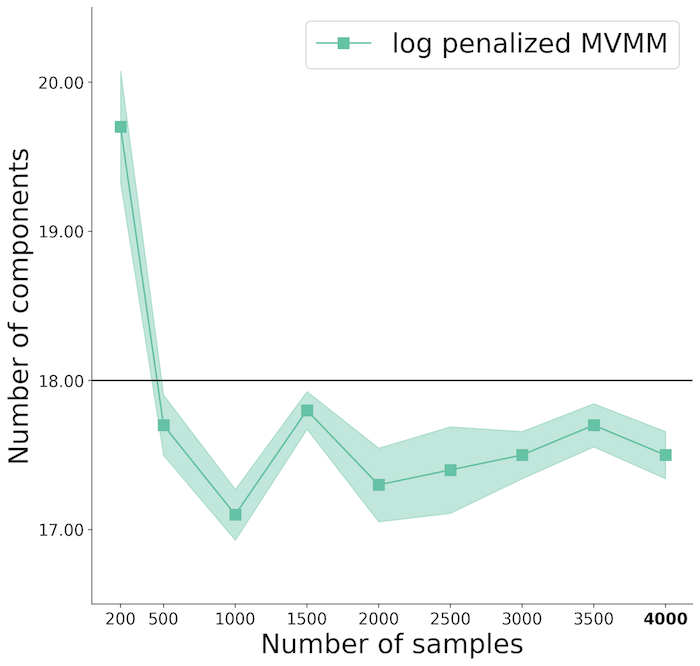}
\label{fig:sparse_pi_3__1_1__n_samples_vs_n_comp_est_at_bic_selected}
\end{subfigure}
\caption{
Results for the sparse $\Pi$ shown in Figure \ref{fig:sparse_pi_3__1_.5__pi_true}.
The top row shows the results for the uneven signal to noise ratio and the bottom row shows the results for the even signal to noise ratio.
Only the results for log-MVMM are shown.
}
\label{fig:sparse_pi_results}
\end{figure}


\section{Proofs} \label{s:proofs}

\subsection{Soft-thresholding with log penalty} \label{ss:proof_log_pen}

Let $f(\pi) = a\log(\pi) - \lambda \log(\delta + \pi)$ (see Figure \ref{fig:pen_fun}).
The intuition that Problem \eqref{eq:log_sparsity_problem_with_epsilon} sets terms to (approximately) 0 comes from the following.
\begin{itemize}
\item The solution to the unpenalized Problem \eqref{eq:log_sparsity_problem_with_epsilon} when $\lambda  = 0$ is $\pi^* = a$.

\item
If $a < \lambda$, $f(\pi)$ has a global maximum as $\pi^*= \frac{a \delta}{\lambda - a} \propto \delta$.
\end{itemize}
As $\delta \to 0$, the terms where $a_k < \lambda$ go to zero.
These terms become negligible in the probability constraint so the terms $\{k | a_k > \lambda \}$ solve the unpenalized problem (i.e. the problem if $\lambda=0$) with coefficients $a_k - \lambda$.

\begin{figure}[h]
\centering
\begin{subfigure}{0.3\textwidth}
\includegraphics[width=\linewidth]{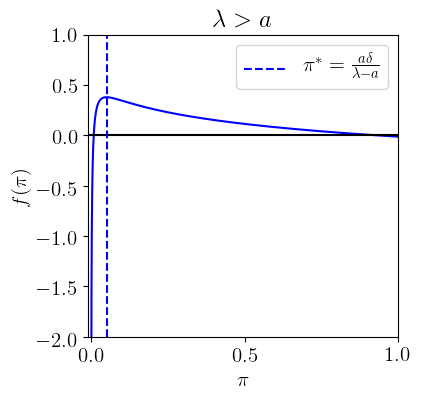}
\caption{
If $\lambda > a$ there is a global maximizer at $\frac{\delta a}{\lambda - a}$.
}
\label{fig:pen_fun_lambda_greater_than_a}
\end{subfigure}
\begin{subfigure}{0.3\textwidth}
\includegraphics[width=\linewidth]{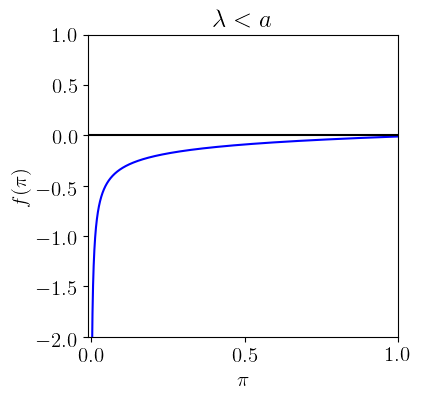}
\caption{
If $\lambda < a$ the function is strictly increasing and concave.
}
\label{fig:pen_fun_lambda_less_than_a}
\end{subfigure}

\caption{
Graph of $f(\pi)$ for two cases if $\lambda < a$ or $\lambda > a$. 
When $\lambda > a$, there is a global maximizer, which is proportional to $\delta$.
}
\label{fig:pen_fun}
\end{figure}

\begin{proof} of Theorem \ref{thm:approx_soft}

First we check there exists at least one global minimizer.
There exists a $\xi > 0$ such that $\pi \in [\xi, 1 - K \xi]^{K^C} \implies f(\pi) < f(\frac{1}{K} \mathbf{1}_K)$. 
Therefore, an optimal solution of \eqref{eq:log_sparsity_problem_with_epsilon} is the same as an optimal solution for the restricted problem where $\pi_k \in [\xi, 1 - K \xi]$ for $k=1, \dots, K$..
This restricted problem is a continuous function over a compact set and must attain a minimum thus  \eqref{eq:log_sparsity_problem_with_epsilon}  has at least one global minimizer.

Linear constraint qualification holds so the KKT conditions are first order necessary.
The Lagrangian of \eqref{eq:log_sparsity_problem_with_epsilon} is given by
$$
L(\pi, \eta) =   \sum_{k=1}^Ka_k  \log(\pi_k)  - \lambda \sum_{k=1}^K  \log(\delta + \pi_k) - \eta \pi^T \mathbf{1}_K
$$
for $\eta \in \mathbb{R}$.
We ignore the positivity constraint because the $-\log(z_k)$ terms ensure any stationary point is strictly positive.
The gradient of the Lagrangian is given by
\begin{equation*}
\frac{d \mathcal{L}}{d\pi_k} = \frac{a_k}{\pi_k} - \frac{\lambda}{\pi_k + \delta} - \eta \; \text{ for } k=1, \dots, K.
\end{equation*}

Suppose $(\pi, \eta)$ is a stationary point of the Lagrangian.
Setting $\frac{d \mathcal{L}}{d\pi_k} = 0 $ leaves us with
\begin{equation}\label{eq:pi_quad}
\eta \pi_k^2 + (\lambda + \eta \delta - a_k)\pi_k - a_k \delta = 0
\end{equation}

Because $\sum_k a_k=1$ and  $\lambda < \frac{1}{K}$,  without loss of generality $a_1 \ge \frac{1}{k} > \lambda$ so $a_1 > \lambda$.
If $\eta = 0$ then by \eqref{eq:pi_quad}, $\pi_k = \frac{a_k \delta}{\lambda - a_k}$ for each $k$. In this case, $\pi_1 < 0$ violates the positivity constraint so we conclude $\eta \neq 0$.
Thus
\begin{equation}\label{eq:pi_soln_set}
\pi_k \in  \left\{ \frac{(a_k - \lambda - \eta \delta) \pm \sqrt{(a_k - \lambda - \eta \delta)^2 + 4 a_k \eta \delta}}{2 \eta} \right\}
\end{equation}

Next we check that $\eta > 0$ at any stationary point of the Lagrangian.
Assume for the sake of contradiction that $\eta < 0$.
Recall $\lambda < a_1$ so $a_1 - \lambda - \eta \delta > 0$. Thus
$$
\frac{(a_k - \lambda - \eta \delta) + \sqrt{(a_k - \lambda - \eta \delta)^2 + 4 a_k \eta \delta}}{2 \eta} < 0
$$
which violates the constraint $\pi_1 \ge 0$.
Furthermore, $(a_k - \lambda - \eta \delta) - \sqrt{(a_k - \lambda - \eta \delta)^2 + 4 a_k \eta \delta} > 0$ so
$$
\frac{(a_k - \lambda - \eta \delta) - \sqrt{(a_k - \lambda - \eta \delta)^2 + 4 a_k \eta \delta}}{2 \eta} < 0
$$
which again violates the constraint $\pi_1 \ge 0$.
Therefore we conclude $\eta > 0$ for $\pi$ to be a stationary point.

It can be checked that if a $(-)$ is chosen in \eqref{eq:pi_soln_set} then $\pi_1 < 0$.
Thus at a stationary point
\begin{equation} \label{eq:pi_quad_eq}
\pi_k =  \frac{(a_k - \lambda - \eta \delta) + \sqrt{(a_k - \lambda - \eta \delta)^2 + 4 a_k \eta \delta}}{2 \eta}
\end{equation}

Next we show $\eta \delta \to 0$ when $\delta \to 0$.
Using the constraint $1 = \sum_{k=1}^K \pi_k$ we get
\begin{equation} \label{eq:dan}
2 \eta = \sum_{k=1}^K \left( a_k - \lambda - \eta \delta + \sqrt{(a_k - \lambda - \eta \delta)^2 + 4 a_k \eta \delta} \right)
\end{equation}

It can be checked that
$$
(a_k - \lambda - \eta \delta)^2 + 4 a_k \eta \delta \le (a_k + \lambda + \eta \delta)^2,
$$
therefore
\begin{align*}
2 \eta &  \le \sum_{k=1}^K |a_k - \lambda - \eta \delta|  + \sqrt{(a_k - \lambda - \eta \delta)^2 + 4 a_k \eta \delta}  \le \sum_{k=1}^K |a_k - \lambda - \eta \delta|  + |a_k + \lambda + \eta \delta| \\
& \le 2 K \eta \delta +   \sum_{k=1}^K |a_k - \lambda|  + a_k + \lambda.
\end{align*}
Thus
\begin{equation}\label{eq:eta_upper_bound}
\eta \le \frac{1}{2} \frac{\sum_{k=1}^K |a_k - \lambda|  + |a_k + \lambda | }{1 - K \delta} 
\end{equation}
which is upper bounded by a positive constant independent of $\delta$ because $\delta < \frac{1}{K}$.
We now conclude
\begin{equation} \label{eq:eta_eps_to_zero}
\lim_{\delta \to 0} \eta \delta = 0
\end{equation}

Finally, using \eqref{eq:dan} and \eqref{eq:eta_eps_to_zero} we see
\begin{align*}
\lim_{\delta \to 0} \eta & = \frac{1}{2} \lim_{\delta \to 0}   \sum_{k=1}^K  \left( a_k - \lambda - \eta \delta + \sqrt{(a_k - \lambda - \eta \delta)^2 + 4 a_k \eta \delta}  \right)\\
& =  \frac{1}{2}   \sum_{k=1}^K  a_k - \lambda + |a_k - \lambda|  = \sum_{k=1}^K (a_k - \lambda)_+, 
\end{align*}
and the result follows from \eqref{eq:pi_quad_eq}.

\end{proof}

\subsection{Extremal characterization of generalized eigenvalues } \label{ss:proof_eig}

\begin{proof} of Proposition \ref{prop:gevals_well_def_sing_B}

Let $B = V D V^T$ be the eigen-decomposition of $B$ where $V \in \mathbb{R}^{m \times m}$ and $D \in \mathbb{R}^{m \times m}$ is the diagonal matrix of non-zero eigenvalues of $B$.
Let $Z \in \mathbb{R}^{(n - m) \times n}$ be such that $Q = [V; Z]$ is an orthonormal matrix (i.e. $Z$ is a basis for the kernel).

Then $B^{-1/2} := QD^{-1/2} Q^T$ is the Moore-Penrose inverse of the square root of $B$ and 
$$
B^{-1/2} A B^{-1/2} = \begin{bmatrix} \widetilde{A}  & 0 \\ 0 & 0 \end{bmatrix},
$$
where $ \widetilde{A}  \in \mathbb{R}^{m \times m}$ since $\text{ker}(B) \subseteq \text{ker}(A)$.
Let $\lambda_1, \dots, \lambda_m$ be the eigenvalues of $\widetilde{A}$ with corresponding eigenvectors $\widetilde{v}_1, \dots \widetilde{v}_m \in \mathbb{R}^{m \times m}.$
Let $v_j \in \mathbb{R}^{n}$ be the concatenation of $\widetilde{v}_j$ along with $n - m$ zeros.
Then the $v_j$ are eigenvectors of $B^{-1/2} A B^{-1/2} $ with eigenvalues $\lambda_j$. 
Setting $w_j = B^{1/2} v_j$ we see $w_j$ are generalized eigenvectors of $(A, B)$ with generalized eigenvalues $\lambda_j$.

\end{proof}

\begin{proof} of Proposition \ref{prop:weighted_geval_extremal_representation}

First assume that $B$ is positive definite. 
Proposition 20.A.2.a from \citet{marshall1979inequalities} states
\begin{equation}\label{eq:prop_20a2a}
\begin{aligned}
& \sum_{j=1}^K \lambda_i(A)  \lambda_i(H) = &    & \underset{U \in \mathbb{R}^{n \times K}}{\text{maximum}}  & &  \text{tr}\left( U^T A U H \right)\\
& & & \text{subject to} & &  U^T U = I_K.
\end{aligned}
\end{equation}
It is straightforward to check that this maximum value is attained by $U = U_A(:, \; 1:K) U_H$ where $U_A(:, \; 1:K) \in \mathbb{R}^{n \times K}$ is an orthonormal matrix of eigenvectors corresponding to the largest $K$ eigenvalues of $A$ and $U_H \in \mathbb{R}^{K \times K}$ is an orthonormal matrix of eigenvectors of $H$.

If $B=I_K$ then \eqref{eq:weighted_sum_largest_eval_extremal} is a special case of \eqref{eq:prop_20a2a} with $H = \text{diag}(w)$.
Recall that $(\lambda, u)$ is a generalized eigenvector of $(A, B)$ if and only if $(\lambda, B^{1/2} u)$ is an eigenvector of $B^{-1/2}AB^{-1/2}$ and \eqref{eq:weighted_sum_largest_eval_extremal} follows.
Using this fact it is straightforward to extend the results for general, positive definite $B$.
Repeating this argument with $H = -\text{diag}(w)$ we obtain \eqref{eq:weighted_sum_smallest_eval_extremal}.

Next we relax the positive definite assumptions; assume $\text{ker}(B) \subseteq \text{ker}(A)$.
Let $Q \in \mathbb{R}^{n \times n}$ be an orthonormal matrix whose columns are $q_1, \dots,  q_n \in \mathbb{R}^n$.
Suppose $q_1, \dots, q_m \in \mathbb{R}^n$ is an orthonormal basis of $\text{ker}(B)^{\perp}$ and  $q_{m+1}, \dots, q_n \in \mathbb{R}^n$ is an orthonormal basis of $\text{ker}(B) \subseteq \text{ker}(A) $ where $m =  n - \text{dim}(\text{ker}(B)) \le K$  by assumption.
Then
\begin{equation} \label{eq:henry}
Q^T A Q = \begin{bmatrix} \widetilde{A} & 0 \\ 0 &0  \end{bmatrix} \text{ and } Q^T B Q = \begin{bmatrix} \widetilde{B} & 0 \\ 0 &0  \end{bmatrix}
\end{equation}
where $ \widetilde{A},  \widetilde{B} \in \mathbb{R}^{m \times m}$ and $\widetilde{B}$ is  positive definite.
Let $\widetilde{U} \in \mathbb{R}^{m \times K}$ be generalized eigenvectors corresponding to the largest $K$ generalized eigenvalues of $(\widetilde{A},  \widetilde{B})$ and let 
$$U = Q^T \begin{bmatrix} \widetilde{U}  \\ 0 \end{bmatrix} \in \mathbb{R}^{n \times K}$$
It is straightforward to see that the columns of $U$ are generalized eigenvectors for $(A, B)$.
We next check that $U$ is a global maximizer of \eqref{eq:weighted_sum_largest_eval_extremal}.
First note
$$ U^T B U= \begin{bmatrix} \widetilde{U}  \\ 0  \end{bmatrix}Q B Q^T \begin{bmatrix} \widetilde{U}  \\ 0  \end{bmatrix} =  \begin{bmatrix} \widetilde{U} \\ 0 \end{bmatrix}^T  \begin{bmatrix} \widetilde{B} & 0 \\ 0 &0  \end{bmatrix} \begin{bmatrix} \widetilde{U} \\ 0 \end{bmatrix} = \widetilde{U}^T \widetilde{B}\widetilde{U} = I_K$$
so $U$ is feasible and its objective value is given by
\begin{align*}
\text{Tr}\left(U^T A U \text{diag}(w)\right)  & =   \text{Tr}\left(\begin{bmatrix} \widetilde{U} \\ 0 \end{bmatrix}^T Q^T A Q \begin{bmatrix} \widetilde{U} \\ 0 \end{bmatrix} \text{diag}(w)\right) 
 = \text{Tr}\left(\widetilde{U}^T \widetilde{A} \widetilde{U} \text{diag}(w)\right)
\end{align*}

Suppose that $W \in \mathbb{R}^{n \times K}$ is a global maximizer of \eqref{eq:weighted_sum_largest_eval_extremal}.
Let $\widetilde{W} \in \mathbb{R}^{m \times K}, \overline{W}  \in \mathbb{R}^{n- m \times K}$ such that 
$$Q^T W =  \begin{bmatrix} \widetilde{W}  \\ \overline{W}  \end{bmatrix} \in \mathbb{R}^{n \times K}$$
Noting that $Q$ is orthonormal and using \eqref{eq:henry} we see
$$
I_K = W^T B W = W^T Q Q^T B Q^T Q W =  \widetilde{W}^T \widetilde{B} \widetilde{W},
$$
and 
\begin{align*}
\text{Tr}\left(W^T A W \text{diag}(w)\right) & = \text{Tr} \left(W^T Q Q^T A Q^T Q W \text{diag}(w)\right)   =  \text{Tr}\left(W^T Q  \begin{bmatrix} \widetilde{A} & 0 \\ 0 &0  \end{bmatrix} Q W \text{diag}(w) \right)  \\
& = \text{Tr}\left(\widetilde{W}^T  \widetilde{A}  \widetilde{W} \text{diag}(w)\right).
\end{align*}

Assume for the sake of contradiction that $ \text{Tr}\left(W^T A W \text{diag}(w)\right)  > \text{Tr}\left(U^T A U \text{diag}(w)\right)$.
Consider Problem \eqref{eq:weighted_sum_largest_eval_extremal} with $(\widetilde{A},  \widetilde{B})$. 
By the first part of the proof, $\widetilde{U}$ is a global maximizer (since $ \widetilde{B}$ is strictly positive definite).
From the above discussion we have that $\widetilde{W}$ is feasible for this problem with objective value $\text{Tr}\left(W^T A W \text{diag}(w)\right)$, however, this contradicts the fact that $\widetilde{U}$ is a global maximizer.

\end{proof}

Corollary \ref{cor:sym_evals_and_gevals} is proved below in Section \ref{prop:sym_lap_block_diag}.

\subsection{Spectrum of the symmetric Laplacian and block diagonal structure} \label{ss:proof_sym_lap_spect_bd}

We first give two propositions detailing the spectral properties of the symmetric, normalized Laplacian.
Recall the convention for isolated vertices discussed in Section \ref{ss:sym_lap} that ensures the diagonal of $L_{\text{sym}}(\cdot)$ is always equal to 1.

\begin{proposition} \label{prop:sym_lap_spectrum_ccs}
Let $A \in \mathbb{R}^{n \times n}_+$ be the adjacency matrix of a graph with undirected, positively weighted edges and no self loops (i.e. $A$ is symmetric and has 0s on the diagonal).

There is a one-to-one correspondence between the 0 eigenvalues of $L_{\text{sym}}(A)$ and the connected components of $G$ with at least two vertices.
Let $A_1, A_2, \dots, [n]$ correspond to the connected components with at least two vertices and let $v_i = \text{diag}(\text{deg}(A)) \mathbf{1}_{A_i}$ where $ \mathbf{1}_{A_i}$ is the vector with 1s in the entries corresponding to indices in $A_i$ and 0s elsewhere.
The eigenspace of 0 is spanned by $v_1, v_2, \dots$.

The number of eigenvalues of $L_{\text{sym}}(A)$ equal to 1 is at least the number of isolated vertices.
The basis vector with a 1 in the entry corresponding to an isolated vertex is an eigenvector with eigenvalue 1.

\end{proposition}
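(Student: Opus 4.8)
The plan is to reduce the normalized problem to the well-understood unnormalized Laplacian via the congruence $L_{\text{sym}}(A) = D^{-1/2} L_{\text{un}}(A) D^{-1/2}$, where $D := \text{diag}(\text{deg}(A))$, while handling degree-zero vertices carefully under the pseudo-inverse convention fixed in Section \ref{ss:sym_lap}. The classical spectral facts for $L_{\text{un}}$ (its kernel is spanned by connected-component indicators) can then be transported to $L_{\text{sym}}$.

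First I would separate the isolated vertices. Let $\mathcal{I}$ be the set of isolated vertices and $\mathcal{N}$ its complement. An isolated vertex has no incident edges, so the corresponding rows and columns of $A$ vanish, and by the pseudo-inverse convention $D^{-1/2}$ is zero in those coordinates; hence $D^{-1/2} A D^{-1/2}$ is supported on $\mathcal{N} \times \mathcal{N}$, and after reordering $L_{\text{sym}}(A)$ block-decouples as $\tilde{L} \oplus I_{|\mathcal{I}|}$, where $\tilde{L}$ is the symmetric Laplacian of the subgraph induced on $\mathcal{N}$. This observation immediately delivers the last claim: for each isolated vertex $i$ the standard basis vector $e_i$ satisfies $L_{\text{sym}}(A) e_i = e_i$, so the number of eigenvalues equal to $1$ is at least $|\mathcal{I}|$. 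It also shows that every zero eigenvalue must come from $\tilde{L}$.

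On $\mathcal{N}$ the degree matrix $D_{\mathcal{N}}$ is invertible, so $D_{\mathcal{N}}^{-1/2}$ is a genuine symmetric invertible matrix and $\tilde{L} = D_{\mathcal{N}}^{-1/2} L_{\text{un}}^{\mathcal{N}} D_{\mathcal{N}}^{-1/2}$. I would then compute $\tilde{L} x = 0 \iff L_{\text{un}}^{\mathcal{N}}(D_{\mathcal{N}}^{-1/2} x) = 0 \iff D_{\mathcal{N}}^{-1/2} x \in \ker L_{\text{un}}^{\mathcal{N}}$, so that $\ker \tilde{L} = D_{\mathcal{N}}^{1/2} \ker L_{\text{un}}^{\mathcal{N}}$ and in particular the two kernels have equal dimension. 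Invoking the classical fact (e.g. \citet{von2007tutorial}) that $\ker L_{\text{un}}^{\mathcal{N}}$ is spanned by the indicator vectors $\mathbf{1}_{A_1}, \mathbf{1}_{A_2}, \dots$ of the connected components of the subgraph on $\mathcal{N}$ — which are exactly the components $A_1, A_2, \dots$ of $G$ with at least two vertices — and applying $D_{\mathcal{N}}^{1/2}$ yields that $\ker L_{\text{sym}}(A)$ is spanned by $v_i := \text{diag}(\text{deg}(A))^{1/2}\mathbf{1}_{A_i}$. This establishes both the one-to-one correspondence between zero eigenvalues and the $\ge 2$-vertex connected components and the stated eigenvector formula.

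The main obstacle is not the linear algebra but the bookkeeping around degree-zero vertices: one must verify that the pseudo-inverse convention really produces the direct sum $\tilde{L} \oplus I_{|\mathcal{I}|}$, so that isolated vertices neither create spurious zero eigenvalues nor inflate the component count, and that the connected components of the subgraph on $\mathcal{N}$ coincide with the $\ge 2$-vertex components of $G$. Once this reduction is justified, the congruence and the cited $L_{\text{un}}$ result finish the argument. I would also record, for use in Proposition \ref{prop:sym_lap_block_diag}, that since $D_{\mathcal{N}}^{-1/2}$ is symmetric and invertible the map $L_{\text{un}}^{\mathcal{N}} \mapsto \tilde{L}$ is a congruence and hence preserves inertia, so $\tilde{L} \succeq 0$ and its zero eigenvalues are genuinely the smallest.
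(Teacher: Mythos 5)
Your proposal is correct and takes essentially the same route as the paper's proof: both split off the isolated vertices via the block decomposition $L_{\text{sym}}(A) = \tilde{L} \oplus I_{|\mathcal{I}|}$ (which yields the eigenvalue-$1$ claim), and then invoke the classical connected-component characterization on the induced subgraph with no isolated vertices --- the paper cites Proposition 4 of \citet{von2007tutorial} for $L_{\text{sym}}$ directly, while you rederive that fact from the unnormalized-Laplacian kernel via the congruence $\tilde{L} = D_{\mathcal{N}}^{-1/2} L_{\text{un}}^{\mathcal{N}} D_{\mathcal{N}}^{-1/2}$, which is the same argument carried one level deeper. As a side benefit, your derivation makes explicit that the zero eigenspace is spanned by $\text{diag}(\text{deg}(A))^{1/2}\mathbf{1}_{A_i}$, which is the correct formula (and the one von Luxburg states), indicating that the missing exponent $1/2$ in the statement's $v_i = \text{diag}(\text{deg}(A))\mathbf{1}_{A_i}$ is a typo.
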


In general there is \textit{not} a one-to-one correspondence between isolated vertices and 1 eigenvalues e.g. consider
$$
L_{\text{sym}}(A_{\text{bp}}(\mathbf{1}_m \mathbf{1}_m^T))
$$
which has no isolated vertices but $2m$ eigenvalues equal to 1.

Note the difference between Proposition \ref{prop:sym_lap_spectrum_ccs} and Proposition 4 of \citet{von2007tutorial}.
Some papers choose the convention that the symmetric Laplacian has a 0 on the diagonal for isolated vertices.
When this alternative convention is selected, the symmetric, normalized Laplacian and the normalized Laplacian would treat isolated vertices the same. 

 \begin{proposition} \label{prop:sym_lap_tsym}
Let $X \in \mathbb{R}^{R \times C}_+$, then the eigenvalues of $L_{\text{sym}}(A_{\text{bp}}(X))$ are 
\begin{itemize}
\item located in $[0, 2]$,
\item symmetric around 1 meaning for every eigenvalue $\lambda  = 1 - \eta$, $\eta \ge 0$ there is a corresponding eigenvalue at $1 + \eta$,
\item given by $\{1 \pm \sigma_i(T_{\text{sym}}(X)) \}_{i=1}^{\min(R, C)}$ and $R + C - 2 \min(R, C)$ 1s.
\end{itemize}
The singular values of $T_{\text{sym}}(X)$ are located in  $[0, 1]$. 

Let $U \in \mathbb{R}^{R \times K}$, $V \in \mathbb{R}^{C \times K}$ be matrices of the largest $K$ left and right singular vectors of $T_{\text{sym}}(X))$ respectively.
Then $\begin{bmatrix} U \\ V\end{bmatrix}$ is the matrix of the smallest $K$ eigenvectors of $L_{\text{sym}}(A_{\text{bp}}(X))$ and  $\begin{bmatrix} U \\ -V\end{bmatrix}$ is the matrix of the matrix of the largest $K$ eigenvectors respectively.
\end{proposition}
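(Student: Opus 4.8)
The plan is to reduce everything to a single structural identity and then invoke the relationship between the singular value decomposition and the eigenvalues of a symmetric ``dilation'' matrix. First I would compute the degree vector of the bipartite adjacency matrix, $\text{deg}(A_{\text{bp}}(X)) = [X\mathbf{1}_C; X^T\mathbf{1}_R]$, so that $\text{diag}(\text{deg}(A_{\text{bp}}(X)))^{-1/2}$ is block diagonal with blocks $\text{diag}(X\mathbf{1}_C)^{-1/2}$ and $\text{diag}(X^T\mathbf{1}_R)^{-1/2}$. Carrying out the two-sided product block by block, the off-diagonal block of $\text{diag}(\text{deg})^{-1/2} A_{\text{bp}}(X) \text{diag}(\text{deg})^{-1/2}$ is exactly $T_{\text{sym}}(X)$ from \eqref{eq:tsym}, and the diagonal blocks vanish. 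Hence
\[
L_{\text{sym}}(A_{\text{bp}}(X)) = I_{R+C} - A_{\text{bp}}(T_{\text{sym}}(X)).
\]
This identity is the crux: it converts the normalized Laplacian into an affine function of the dilation matrix built from $T_{\text{sym}}(X)$.

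Second, I would invoke the standard fact that for any $M \in \mathbb{R}^{R\times C}$ with SVD $M = \sum_i \sigma_i u_i v_i^T$, the symmetric dilation $A_{\text{bp}}(M)$ has eigenpairs $(\pm\sigma_i,\, [u_i; \pm v_i])$ together with $R+C-2\min(R,C)$ zero eigenvalues spanning the orthogonal complement of the singular subspaces; this is immediate from $M v_i = \sigma_i u_i$ and $M^T u_i = \sigma_i v_i$. Applying the affine map $\mu \mapsto 1-\mu$ to the identity above, the spectrum of $L_{\text{sym}}(A_{\text{bp}}(X))$ is $\{1 \mp \sigma_i(T_{\text{sym}}(X))\}_{i=1}^{\min(R,C)}$ together with $R+C-2\min(R,C)$ ones, which yields all three spectral bullet points at once (the form $1\pm\sigma_i$, the symmetry about $1$, and the multiplicity of the eigenvalue $1$). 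The eigenvector claim is then read off directly: the smallest $K$ eigenvalues $1-\sigma_i$ of $L_{\text{sym}}$ correspond to the largest $K$ singular values and carry eigenvectors $[u_i; v_i]$, while the largest $K$ eigenvalues $1+\sigma_i$ carry eigenvectors $[u_i; -v_i]$, i.e. $[U; V]$ and $[U; -V]$ respectively (up to the overall $1/\sqrt{2}$ normalization).

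Third, to confine the spectrum to $[0,2]$ (equivalently $\sigma_i(T_{\text{sym}}(X)) \in [0,1]$) I would use the quadratic-form characterization rather than the spectral identity. Writing $B := \text{diag}(\text{deg})^{-1/2} A_{\text{bp}}(X)\text{diag}(\text{deg})^{-1/2}$ and $y_i := x_i/\sqrt{d_i}$ on the non-isolated vertices, a short expansion gives $x^T L_{\text{sym}} x = \sum_{i \text{ isolated}} x_i^2 + \tfrac{1}{2}\sum_{ij}[A_{\text{bp}}(X)]_{ij}(y_i - y_j)^2 \ge 0$ and symmetrically $x^T(2I - L_{\text{sym}})x = \sum_{i \text{ isolated}} x_i^2 + \tfrac{1}{2}\sum_{ij}[A_{\text{bp}}(X)]_{ij}(y_i + y_j)^2 \ge 0$. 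Both inequalities force the eigenvalues into $[0,2]$, which by the second step forces $\sigma_i \le 1$.

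The main obstacle is the bookkeeping around the Moore--Penrose convention when $X$ has zero rows or columns (isolated vertices). I would verify that the block identity of the first step survives verbatim with pseudo-inverses: a zero row of $X$ makes both the corresponding entry of $\text{diag}(X\mathbf{1}_C)^{-1/2}$ and that row of $T_{\text{sym}}(X)$ vanish, so the diagonal of $L_{\text{sym}}$ stays equal to $1$ exactly as the convention demands, and such a vertex simply contributes an extra zero singular value of $T_{\text{sym}}(X)$ whose $\pm$ pair collapses to the eigenvalue $1$ — already absorbed into the $\{1\pm\sigma_i\}$ count rather than the separate block of $1$s. The quadratic-form sums in the third step contain no ill-defined terms because $[A_{\text{bp}}(X)]_{ij}=0$ whenever $i$ or $j$ is isolated, so the division by $\sqrt{d_i}$ never actually occurs. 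Confirming this consistency across all three statements is the only delicate point; the remainder is mechanical.
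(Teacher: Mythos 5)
Your proof is correct, and its spine is the same as the paper's: both arguments pivot on the identity $L_{\text{sym}}(A_{\text{bp}}(X)) = I_{R+C} - A_{\text{bp}}(T_{\text{sym}}(X))$ and then read the spectrum off the symmetric dilation of $T_{\text{sym}}(X)$. The genuine divergence is in how the spectrum is confined to $[0,2]$. The paper asserts positive semi-definiteness ``by checking $L_{\text{sym}}(A_{\text{bp}}(X))$ is diagonally dominant,'' but that claim is false in general: normalized Laplacians need not be diagonally dominant. For instance, take $X = \mathbf{1}_n \in \mathbb{R}^{n \times 1}$ with $n \ge 2$ (a star graph); the row of $L_{\text{sym}}(A_{\text{bp}}(X))$ corresponding to the column vertex has diagonal entry $1$ but off-diagonal absolute sum $n \cdot \tfrac{1}{\sqrt{n}} = \sqrt{n} > 1$. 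Your sum-of-squares expansion, $x^T L_{\text{sym}} x = \sum_{i \text{ isolated}} x_i^2 + \tfrac{1}{2}\sum_{ij}[A_{\text{bp}}(X)]_{ij}(y_i - y_j)^2$ together with its $(y_i + y_j)^2$ counterpart for $2I - L_{\text{sym}}$, is the correct route and repairs this defect; it also handles the pseudo-inverse convention cleanly since no term with an isolated vertex ever appears in the sums. Beyond that, you make explicit two things the paper dismisses as ``straightforward to check'': the eigenpair correspondence $(\pm\sigma_i, [u_i; \pm v_i])$ for the dilation (including the sign convention that sends the \emph{largest} singular values to the \emph{smallest} Laplacian eigenvalues with eigenvectors $[u_i; v_i]$), and the bookkeeping showing that zero rows/columns of $X$ contribute collapsed pairs $1 \pm 0$ rather than disturbing the count of $R + C - 2\min(R,C)$ extra ones. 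In short: same skeleton, but your version is the one that actually compiles into a complete proof, and the quadratic-form step should be regarded as a correction to, not a restatement of, the paper's argument.
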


\begin{proof} of Proposition \ref{prop:sym_lap_spectrum_ccs}

Consider the subgraph corresponding to the $m$ vertices of this graph which are contained in connected components with at least two vertices.
Without loss of generality assume that these are the first $m$ nodes of the graph and let $\widetilde{A} \in \mathbb{R}^{m \times m}$ be the corresponding adjacency matrix.
Then
\begin{equation*}
\text{L}_{\text{sym}}(A) = 
\begin{bmatrix} \text{L}_{\text{sym}}(\widetilde{A}) & 0_{m \times (n-m)} \\ 0_{(n-m) \times m} & I_{n - m} \end{bmatrix}
\end{equation*}
From this we see the unit vectors $e_i \in \mathbb{R}^m$ for $i=n+1, \dots, m$ are eigenvectors of $\text{L}_{\text{sym}}(A)$ with eigenvalue 1 and the claim about isolated vertices follows.

If $(\lambda, \widetilde{v})$ with  $\widetilde{v} \in \mathbb{R}^{m}$ is an eigenvalue/eigenvector pair for $\text{L}_{\text{sym}}(\widetilde{A})$ then $(\lambda, v)$ is an eigenvalue/eigenvector pair for $\text{L}_{\text{sym}}(A)$ with $v := (\widetilde{v}, 0, \dots, 0) \in \mathbb{R}^n$.
Proposition 4 of \citet{von2007tutorial} holds for $\text{L}_{\text{sym}}(\widetilde{A})$ which has no isolated vertices.
Therefore, an orthonormal set of $m$ eigenvectors of $\text{L}_{\text{sym}}(\widetilde{A})$ give $m$ orthonormal eigenvectors of $\text{L}_{\text{sym}}(A)$ with the same eigenvalues.
From this we see the claim about 0 eigenvalues of $\text{L}_{\text{sym}}(A)$ and the corresponding eigenspace follows.

\end{proof}

\begin{proof}  of Proposition \ref{prop:sym_lap_tsym}

By checking $\text{L}_{\text{sym}}(A_{\text{bp}}(X))$ is diagonally dominant we see it is positive-semi definite.
The diagonal elements of $\text{L}_{\text{sym}}(A_{\text{bp}}(X))$ are equal to 1. 
Consider the first row; if $\text{deg}(A_{\text{bp}}(X))_1=0$ then the first row of $\text{L}_{\text{sym}}(A)$ is equal to the first standard basis vector. 
If $\text{deg}(A_{\text{bp}}(X))_1 > 0$ then $\text{L}_{\text{sym}}(A_{\text{bp}}(X))_{1j}  = \frac{A_{\text{bp}}(X)_{1j}}{\sum_{i=1}^n A_{\text{bp}}(X)_{ij} } \le 1$.

We see that
\begin{equation*}
\text{L}_{\text{sym}}(A_{\text{bp}}(X)) = I - \begin{bmatrix} 0 & T_{\text{sym}}(X) \\ T_{\text{sym}}(X)^T & 0  \end{bmatrix}.
\end{equation*}
Note that the spectrum of the second matrix on the right hand side is symmetric around 0.
It is straightforward to check the remaining claims of the proposition.

\end{proof}

\begin{proof}  of Proposition \ref{prop:sym_lap_block_diag}

By inspecting the adjacency matrix $A_{\text{bp}}(X)$ it is clear there is a one-to-one correspondence between the zero rows/columns of $X$ and the isolated vertices in $G(X)$.
Without loss of generality, assume $G(X)$ has no isolated vertices and suppose $G(X)$ has $B$ connected components with at least two vertices.

Let $\sigma_r$ be a permutation of the rows of $X$ such that the first rows of $X$ belong to the first connected component, the next rows of $X$ belong to the second connected component, etc.
Let $\sigma_c$ be the analogous permutation of the columns of $X$.
Let $\widetilde{X}$ be the result of applying these two permutations to $X$, then $\widetilde{X}$ is block diagonal with $B$ blocks.
We thus conclude that the number of connected components of $G$ is a lower bound for $NB(X)$.

Now suppose there exists a permutation, $\sigma_r$ of the rows of $X$ and a permutation $\sigma_c$ of the columns of $X$ such that the resulting matrix has $C \ge B + 1$ blocks.
Let $\widetilde{X}$ be the result of applying these two permutations to $X$, then $A_{\text{bp}}(\widetilde{X})$ is the adjacency matrix of a bipartite graph with $C$ connected components.
But this is a contradiction, because shuffling the node labels of $G(X)$ induces a graph isomorphism so the number of connected components must be conserved.
Thus claims 1 and 2 are equivalent.

Claims 2 and 3 are equivalent by Proposition \ref{prop:sym_lap_spectrum_ccs}.
Claims 1 and 4 are equivalent because $G(X)$ has $B + Z_{\text{row}} + Z_{\text{col}}$ connected components and Proposition 2 of \citet{von2007tutorial}.

\end{proof}

\begin{proof} of Corollary \ref{cor:sym_evals_and_gevals}.
Note that $\text{dim}(\text{ker}(\text{diag}( \text{deg}(A_{\text{bp}} (X))))) = R + C - (\widetilde{R} + \widetilde{C})$.
Eigenvectors of $L_{\text{sym}}(A_{\text{bp}} (X))$ that live in the kernel of $\text{diag}( \text{deg}(A_{\text{bp}} (X)))$ correspond to isolated nodes thus give an eigenvalue of $1$.
By Proposition \ref{prop:sym_lap_spectrum_ccs}, $\lambda_{(k)}(L_{\text{sym}}(A_{\text{bp}} (X))) \le 1$ for $k \le \min(R, C)$.
Therefore, for $k \le \widetilde{R} + \widetilde{C}$, none of the $\lambda_{(k)}(L_{\text{sym}}(A_{\text{bp}} (X)))$ correspond to eigenvectors in the kernel of $\text{diag}( \text{deg}(A_{\text{bp}} (X)))$.
The result follows from Proposition \ref{prop:gevals_well_def_sing_B}.
 \end{proof}

\begin{proof}  of Proposition \ref{prop:sym_lap_block_diag_multiarray}
The proof of Proposition \ref{prop:sym_lap_block_diag} can be generalized to multi-arrays.
\end{proof}

\begin{proof}  of Corollary \ref{cor:eigen_subproblem}
The result follows from Propositions \ref{prop:weighted_geval_extremal_representation} and \ref{prop:sym_lap_tsym}.
$K \le \widetilde{R} + \widetilde{C}$ ensures the assumption of Proposition \ref{prop:weighted_geval_extremal_representation} are satisfied for case 3. 
\end{proof}

\subsection{Block diagonal optimization problem solution sets}

\begin{proof} of Proposition \ref{prop:soln_relations_global}

Problems \eqref{prob:bd_constr} and \eqref{prob:bd_eval_constr} are equivalent by Proposition \ref{prop:sym_lap_block_diag}.

Suppose $(X, U)$ are such a global minimizer of \eqref{prob:bd_extremal_rep}. 
By Proposition \ref{prop:sym_lap_block_diag}, $NB(X) \ge B$ so $X$ satisfies the constraints of  \eqref{prob:bd_constr}.
Assume for the sake of contradiction that $Y$ is a better minimizer of \eqref{prob:bd_constr} i.e.  $f(Y) < f(X)$. 
Let $U_Y$ be the smallest $B$ generalized eigenvectors of $(L_{\text{un}}(A_{\text{bp}}(Y), \text{diag} (\text{deg} (A_{\text{bp}}(Y))))$.
Then
\begin{equation}
\begin{aligned}
&f(Y) + \alpha \text{Tr} \left( U_Y^T L_{\text{un}}(A_{\text{bp}}(Y)) U_Y\right)  =  f(Y) + \alpha \sum_{j=1}^B  \lambda_{(j)} \left( L_{\text{sym}}(A_{\text{bp}}(Y))\right) = f(Y) \\
& <  f(X) = f(X) + \alpha \sum_{j=1}^B  \lambda_{(j)} \left( L_{\text{sym}}(A_{\text{bp}}(X))\right) 
 = f(X) + \alpha \text{Tr} \left( U^T L_{\text{un}}(A_{\text{bp}}(X)) U\right).
\end{aligned}
\end{equation}
Thus $(Y, U_Y)$ is a better minimizer of  \eqref{prob:bd_extremal_rep} contradicting the fact that $(X, U)$ is a global minimizer.
\end{proof}

\begin{proof} of Proposition \ref{prop:soln_relations_local} 

Suppose $X$ satisfies the conditions of claim 1.
Then $NB(X) \ge B$ by Proposition \ref{prop:sym_lap_block_diag} so $X$ satisfies the constraints of \eqref{prob:bd_constr}.
Suppose $Y$ is a better solution for \eqref{prob:bd_constr}, then $ \sum_{j=1}^B  \lambda_{(j)} \left( L_{\text{sym}}(A_{\text{bp}}(Y))\right) = 0$ and $f(Y) < f(X)$.
Thus $Y$ is a better solution for \eqref{prob:bd_eval_pen}.

Suppose $X$ conditions of claim 2.
For fixed $X$, $U_X$ is a coordinate-wise minimizer of \eqref{prob:bd_extremal_rep}  if and only if the columns of $U_X$ are the smallest $B$ generalized eigenvectors of $(L_{\text{un}}(A_{\text{bp}}(X), \text{diag} (\text{deg} (A_{\text{bp}}(X))))$ and 
$$
 \text{Tr} \left( U_X^T L_{\text{un}}(A_{\text{bp}}(X)) U_X\right) = \sum_{j=1}^B  \lambda_{(j)} \left( L_{\text{sym}}(A_{\text{bp}}(X))\right).
$$
Note the row sum condition on $X$ ensures the kernel constraints of Proposition \ref{prop:weighted_geval_extremal_representation} hold.
Suppose $(Y, \widetilde{U})$ are a better solution to \eqref{prob:bd_extremal_rep}  than $(X, U_X)$. 
Then $(Y, U_Y)$ are also a better solution to \eqref{prob:bd_extremal_rep} than $(X, U_X)$.
But then $Y$ is a better solution to \eqref{prob:bd_eval_pen}.

\end{proof}

\subsection{Algorithm convergence}

This section applies Zangwill's global convergence theorem \citep{zangwill1969nonlinear} to prove Proposition \ref{prop:conv}.
Following \citet{sriperumbudur2009convergence}, a \textit{point-to-set} mapping $\mathcal{A}: \mathcal{X} \to 2^\mathcal{Y}$ assigns a subset $\mathcal{A}(x) \subseteq \mathcal{Y}$ to a point $x \in \mathcal{X}$.
A point-to-set mapping is \textit{closed} if $x_k \to x^*, y_k \to y^*, y_k \in \mathcal{A}(x_k)$ together imply $y^* \in \mathcal{A}(x^*)$; this is a generalization of continuity for functions.
A \textit{generalized fixed point} of $\mathcal{A}: \mathcal{X}  \to 2^\mathcal{X}$ is a point $x$ such that $x \in \mathcal{A}(x)$.

\begin{lemma} \label{lem:closed_composition}
Let $f: A \to B$ and $g: B \to C$ be closed, non-empty point-to-set maps.
If $B$ is sequentially compact then $g \circ f: A \to C$ is closed.
\end{lemma}
\begin{proof} of Lemma \ref{lem:closed_composition}
Let $a_k \to a$, $c_k \to c$ and $c_k \in g(f(a_k))$.  
Let $b_k \in f(a_k) \cap g^{-1}(c_k)$ where the inverse denotes the set of pre-images.
By assumption on $B$ there exists a convergent subsequence $\{b_{k_i}\}_{i=1}^{\infty}$ such that $b_{k_i} \to b$ for some $b \in B$.
Since $f$ is closed, $b \in f(a)$; since $g$ is closed $c \in g(b)$, therefore $c \in g(f(a))$.
\end{proof}

Let $\mathcal{A}(X, U)$ be the point-to-set map corresponding to Algorithm \ref{algo:wnn_alt_algo}.
$\mathcal{A} := \textsc{update-X} \circ  \textsc{update-U}$ where $\textsc{update}-U(X)$ solves Problem \eqref{prob:sym_lap_pen_extremal_rep} for fixed $X$ and update $\textsc{update}-X$ is either the full update  (Assumption \ref{assumpt:x_update}.1) or surrogate update (Assumption \ref{assumpt:x_update}.2.)

Let $\psi(X)$ be the objective function in \eqref{prob:sym_lap_eval_pen}  and let $\phi(X, U)$ be the objective function in \eqref{prob:sym_lap_pen_extremal_rep}.
Each step of Algorithm  \ref{algo:wnn_alt_algo} decreases these objective functions.
\begin{proposition} \label{prop:alt_min_decrease}
Let $(X^*, U^*) \in \mathcal{A}(X, U)$ then $\phi(X^*, U^*) \le \phi(X, U)$ and $\psi(X^*) \le \psi(X)$.
\end{proposition}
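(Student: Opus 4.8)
The plan is to obtain descent from a single chain of inequalities that exploits the fact that $\psi$ and $\phi$ are linked by a partial minimization over $U$. The first thing I would record is the key identity: applying Proposition \ref{prop:weighted_geval_extremal_representation} in its smallest-eigenvalue form \eqref{eq:weighted_sum_smallest_eval_extremal} together with Corollary \ref{cor:sym_evals_and_gevals}, for any $X \ge 0$ with $\widetilde R + \widetilde C \ge K$ nonzero rows/columns one has
\begin{equation*}
\psi(X) = f(X) + \alpha \sum_{j=1}^K w_j \lambda_{(j)}\!\left(L_{\text{sym}}(A_{\text{bp}}(X))\right) = \min_{U} \phi(X, U),
\end{equation*}
where the minimum runs over $U$ feasible for $X$ (that is, $U^T \text{diag}(\text{deg}(A_{\text{bp}}(X))) U = I_K$) and is attained exactly by the smallest $K$ generalized eigenvectors produced in the $\textsc{update-U}$ step. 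Here I use $\langle X, M(U, w)\rangle = \text{Tr}(U^T L_{\text{un}}(A_{\text{bp}}(X)) U \text{diag}(w))$ from \eqref{eq:M}, so the objective of the $X$-subproblems \eqref{prob:wnn_x_update} and \eqref{prob:wnn_x_update_maj} is precisely $\phi(\cdot, U)$ (with $f$ possibly replaced by a surrogate). Throughout I may invoke $\widetilde R + \widetilde C \ge K$, which, as noted after Corollary \ref{cor:eigen_subproblem}, is preserved by every step once it holds initially.

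Next I would trace the two half-steps composing $\mathcal{A}$ while tracking feasibility. Let $(X, U)$ be a feasible input pair (every iterate of Algorithm \ref{algo:wnn_alt_algo} is of this form), let $U^* \in \textsc{update-U}(X)$ be the new eigenvectors, and let $X^* \in \textsc{update-X}(X, U^*)$. Since $U^*$ minimizes $\phi(X, \cdot)$ over feasible $U$, the identity above gives $\phi(X, U^*) = \psi(X)$, and feasibility of $U$ gives $\phi(X, U^*) \le \phi(X, U)$. For the $X$-step I note that $U^*$ is feasible for $X$ by construction, so $X$ itself is an admissible competitor in the $X$-subproblem; in the full-update case (Assumption \ref{assumpt:x_update}.1), global optimality of $X^*$ then yields $\phi(X^*, U^*) \le \phi(X, U^*)$. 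Chaining produces $\phi(X^*, U^*) \le \phi(X, U^*) \le \phi(X, U)$. Finally, because the $X$-update imposes $U^{*T}\text{diag}(\text{deg}(A_{\text{bp}}(X^*)))U^* = I_K$, the vector $U^*$ is feasible for $X^*$, so $\psi(X^*) \le \phi(X^*, U^*) \le \phi(X, U^*) = \psi(X)$, which establishes both claims.

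For the surrogate case (Assumption \ref{assumpt:x_update}.2) the same chain goes through after inserting the surrogate. Writing $\widetilde\phi(Y) := Q(Y \mid X) + \alpha \langle Y, M(U^*, w)\rangle$, optimality of $X^*$ for \eqref{prob:wnn_x_update_maj} gives $\widetilde\phi(X^*) \le \widetilde\phi(X)$, while the majorization properties of Definition \ref{def:surrogate_fcn} give $\phi(X^*, U^*) \le \widetilde\phi(X^*)$ and $\widetilde\phi(X) = \phi(X, U^*)$ via $Q(X \mid X) = f(X)$. Combining reproduces $\phi(X^*, U^*) \le \phi(X, U^*) \le \phi(X, U)$ and $\psi(X^*) \le \psi(X)$ exactly as before.

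The step I expect to require the most care is the feasibility bookkeeping rather than any single inequality, since the equality constraint couples $X$ and $U$ nonlinearly. The two facts that make the argument close are (i) after $\textsc{update-U}$, $U^*$ is automatically feasible for the same $X$ from which it was computed, so $X$ is an admissible point of its own subproblem, and (ii) after $\textsc{update-X}$, the imposed linear constraints make $U^*$ feasible for $X^*$, which is what lets me pass back from $\phi$ to $\psi$. It is worth flagging that the $\psi$ descent needs only fact (i) and holds for any input $U$, whereas the $\phi$ descent additionally needs feasibility of the incoming $U$ for $X$; this is harmless because the claim is used only along the algorithm's (feasible) iterates.
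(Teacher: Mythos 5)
Your proof is correct, but there is no paper proof to compare it against: Proposition \ref{prop:alt_min_decrease} is stated as an unproved assertion (``Each step of Algorithm \ref{algo:wnn_alt_algo} decreases these objective functions'') and is then invoked in the proof of Proposition \ref{prop:conv}. Your argument supplies exactly the justification the authors leave implicit: the partial-minimization identity $\psi(X) = \min_U \phi(X, U)$ over $U$ feasible for $X$ (via Proposition \ref{prop:weighted_geval_extremal_representation} and Corollary \ref{cor:sym_evals_and_gevals}, valid under $K \le \widetilde{R} + \widetilde{C}$, which is preserved along the iterates), the fact that the $X$-subproblem objective in \eqref{prob:wnn_x_update}/\eqref{prob:wnn_x_update_maj} is exactly $\phi(\cdot, U)$ (or its surrogate) by the definition of $M(U, w)$, and the two feasibility observations that let you pass between $\phi$ and $\psi$. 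Two of your side remarks are genuinely substantive rather than bookkeeping. First, the inequality $\phi(X, U^*) \le \phi(X, U)$ really does require the incoming $U$ to satisfy $U^T \text{diag}(\text{deg}(A_{\text{bp}}(X))) U = I_K$; as written, the proposition quantifies over arbitrary pairs $(X, U)$, so your restriction to feasible pairs is a correctly identified missing hypothesis --- harmless for the paper, since Proposition \ref{prop:conv} only uses descent along the algorithm's iterates, which are feasible by construction of the updates. Second, your surrogate-case step $\phi(X^*, U^*) \le \widetilde\phi(X^*)$ needs $Q(X^* | X) \ge f(X^*)$, i.e.\ the majorization property $Q(X | Y) \ge f(X)$; Definition \ref{def:surrogate_fcn} literally states $Q(X | Y) \ge f(Y)$, which does not yield descent, so your reading amounts to correcting what is evidently a typo in the paper. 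With those two points made explicit, your chain $\psi(X^*) \le \phi(X^*, U^*) \le \phi(X, U^*) = \psi(X) \le \phi(X, U)$ is airtight in both the full-update and surrogate-update cases.
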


\begin{lemma} \label{lem:gen_fp_coord_stat}

Under Assumption \ref{assumpt:x_update}.1, if $(X, U)$ is a generalized fixed point of $\mathcal{A}$ then $(X, U) \in \mathcal{LG}$.

Under Assumption \ref{assumpt:x_update}.2, if $(X, U)$ is a generalized fixed point of $\mathcal{A}$ then $(X, U) \in \mathcal{SG}$.
\end{lemma}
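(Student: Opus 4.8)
The plan is to unpack the generalized-fixed-point condition and then analyze the two coordinate blocks separately, exploiting that \textsc{update-U} always returns a \emph{global} minimizer of the $U$-subproblem, while \textsc{update-X} returns a global minimizer of either the exact $X$-subproblem \eqref{prob:wnn_x_update} (Assumption \ref{assumpt:x_update}.1) or the surrogate $X$-subproblem \eqref{prob:wnn_x_update_maj} (Assumption \ref{assumpt:x_update}.2). Since $\mathcal{A} = \textsc{update-X} \circ \textsc{update-U}$ applies \textsc{update-U} first, a generalized fixed point $(X, U) \in \mathcal{A}(X, U)$ forces the existence of an intermediate $U^+$ with $U^+ \in \textsc{update-U}(X)$ and $X \in \textsc{update-X}(X, U^+)$ such that $(X, U^+) = (X, U)$; hence $U \in \textsc{update-U}(X)$ and $X \in \textsc{update-X}(X, U)$. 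For fixed $X$, the $U$-coordinate slice of $\phi$ reduces to the generalized eigenproblem \eqref{prob:geig}, whose global minimizers are characterized in Corollary \ref{cor:eigen_subproblem}; since $\textsc{update-U}(X)$ returns such a minimizer, $U \in \mathcal{G}(X)$ holds under both assumptions, and it only remains to pin down the status of $X$ in its own coordinate.

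Under Assumption \ref{assumpt:x_update}.1, $X \in \textsc{update-X}(X, U)$ means $X$ is a global minimizer of the full $X$-subproblem \eqref{prob:wnn_x_update}, whose objective $f(\cdot) + \alpha \langle \cdot, M(U, w) \rangle$ is exactly the $X$-coordinate slice of $\phi$. A global minimizer is in particular a local minimizer, so $X \in \mathcal{L}(U)$; combined with $U \in \mathcal{G}(X)$ this gives $(X, U) \in \mathcal{LG}$.

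Under Assumption \ref{assumpt:x_update}.2, the fixed point $X$ is a global minimizer of the surrogate subproblem \eqref{prob:wnn_x_update_maj} centered at $X_{\text{current}} = X$. The crux is that this surrogate subproblem and the exact $X$-subproblem \eqref{prob:wnn_x_update} share the same constraints and the same linear term $\alpha \langle X, M(U, w) \rangle$, differing only in replacing $f(\cdot)$ by $Q(\cdot \mid X)$. Because $\operatorname{deg}(A_{\text{bp}}(X))$ is linear in $X$, both the equality constraints and the inequality $X \ge 0$ are affine, so linear constraint qualification holds and the KKT conditions are necessary at the global minimizer $X$ of the surrogate. By Definition \ref{def:surrogate_fcn} the first-order matching $\nabla_X Q(X \mid X) = \nabla f(X)$ holds at the centering point, and since the linear-term gradient $\alpha M(U, w)$ and the constraint Jacobians are identical across the two problems, the KKT system of the surrogate at $X$ coincides termwise with that of the exact $X$-subproblem. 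Hence $X$ is a KKT (stationary) point of the exact $X$-subproblem, i.e. $X \in \mathcal{S}(U)$, yielding $(X, U) \in \mathcal{SG}$.

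The main obstacle is this gradient-transfer step in the surrogate case: one must verify that stationarity of the surrogate problem transfers to stationarity of the original. This rests on three features that should be checked carefully — (i) the two subproblems have identical constraint sets, (ii) the perturbation $\alpha \langle X, M(U, w) \rangle$ is linear and common to both objectives, and (iii) the surrogate supplies the matching gradient $\nabla_X Q(X \mid X) = \nabla f(X)$. The affineness of the coupled constraints is precisely what permits invoking linear constraint qualification without extra hypotheses and what makes the KKT multipliers line up across the two problems, so I would foreground that observation when writing the argument in full.
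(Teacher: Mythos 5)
Your proposal is correct and follows essentially the same route as the paper's proof: the $U$-coordinate is a global minimizer by construction of the update; under Assumption \ref{assumpt:x_update}.1 the fixed point $X$ globally (hence locally) minimizes the exact $X$-subproblem giving $\mathcal{LG}$; and under Assumption \ref{assumpt:x_update}.2 the affine constraints give linear constraint qualification, so the KKT conditions hold at the surrogate minimizer and transfer to the exact subproblem via the first-order matching $\nabla_X Q(X\mid X) = \nabla f(X)$, giving $\mathcal{SG}$. Your write-up is somewhat more explicit than the paper's (in unpacking the fixed-point condition and in noting that the linear term and constraint Jacobians coincide across the two subproblems), but the key ideas are identical.
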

\begin{proof} of Lemma \ref{lem:gen_fp_coord_stat}
By construction $U$ is a global minimizer of  \eqref{prob:sym_lap_pen_extremal_rep} for fixed $X$.

Under Assumption \ref{assumpt:x_update}.1, $X$ is a global minimizer of \eqref{prob:sym_lap_pen_extremal_rep} for fixed $V$ thus the first claim follows.

The constraints of \eqref{prob:wnn_x_update}/\eqref{prob:wnn_x_update_maj} are affine in $X$ so the KKT conditions are first order necessary.
Under Assumption \ref{assumpt:x_update}.2 if $X^*$ is a generalized fixed point of $\mathcal{A}$, $X^*$ is a minimizer of \eqref{prob:wnn_x_update_maj} and thus satisfies the KKT conditions for \eqref{prob:wnn_x_update_maj}.
Because $Q$ and $f$ have the same first order behavior by Assumption  \ref{assumpt:x_update}.2, a KKT point of  \eqref{prob:wnn_x_update_maj} is also a KKT point of \eqref{prob:wnn_x_update} and the second claim follows.
\end{proof}

\begin{proof} of Proposition \ref{prop:conv}
We apply Zangwill's global convergence theorem \citep{zangwill1969nonlinear} by checking the three conditions for Theorem 2 of \cite{sriperumbudur2009convergence}. 
Let $\phi(X, U)$ be the objective function of \eqref{prob:sym_lap_pen_extremal_rep} and let $\Gamma$ be the set of generalized fixed points of $\mathcal{A}$.

Let $S_{X^0}$ denote the compact set in Assumption \ref{assumpt:basic}.
By Propositions \ref{prop:weighted_geval_extremal_representation} and \ref{prop:sym_lap_tsym} the second term in the objective function of \eqref{prob:sym_lap_pen_extremal_rep} is upper bounded by $\alpha w^T \mathbf{1}_K$.
Thus, by Proposition \ref{prop:alt_min_decrease} the iterates $\{X^s\}_{s=0}^{\infty} \subseteq S_{X^0}$.

By assumption \ref{assumpt:degree_lower_bound}, without loss of generality we can add the constraint $\text{deg}(A_{\text{bp}}(X)) \ge  \eta \mathbf{1}_{R + C}$ to \eqref{prob:sym_lap_pen_extremal_rep}.
Since $\eta > 0$, the constraint given by \eqref{eq:c_diag} implies that at any solution $||U_k||_2^2 \le \frac{1}{\eta}$ where the inequality is applied element wise.
Thus the iterates $\{U^s\}_{s=0}^{\infty} \subseteq S_{\eta} := \{U | ||U_k||_2^2 \le \frac{1}{\eta}, k \in [K]\}$ which is a compact set.
We conclude $\{(X^s, U^s)\}_{s=0}^{\infty} \subseteq S := S_{X^0} \times S_{\eta}$ a compact set and condition (1) holds.

By construction of $\mathcal{A}$ and Proposition \ref{prop:alt_min_decrease}, condition (2) holds.

If $(X', U') = \mathcal{A}(X, U)$ then the constraint sets for the $U$ and $X$ update problem starting from $(X', U')$ are non-empty.
Therefore, the $X$ update and $U$ update steps are non-empty that compose to make  $\mathcal{A}$ are non-empty.
By the above discussion, the $U$ update always lives in the compact set $S_{\eta}$. 
Therefore, Lemma \ref{lem:closed_composition} shows $\mathcal{A}$ is closed and condition (3) holds.

Therefore, by Theorem 2 of \cite{sriperumbudur2009convergence},
all the limit points of $\{(X^s, U^s)\}_{s=0}^{\infty}$ are the generalized fixed points of $\mathcal{A}$ and $\lim_{s \to \infty} \phi(X^s, U^s) = \lim_{s \to \infty} \phi(X^*, U^*)$ where $(X^*, U^*)$ is some generalized fixed point of $\mathcal{A}$.

The result now follows from Lemma \ref{lem:gen_fp_coord_stat}.
\end{proof}

\begin{proof} of Proposition \ref{prop:soln_set_containment}
First implication follows from proof by contradiction.
Consider Problem \eqref{prob:sym_lap_pen_extremal_rep} where $U$ is fixed and the constraint set is non-empty.
For all such fixed $U$, the constraints on $X$ are affine thus Linear Constraint Qualification holds thus the KKT conditions are first order necessary.
\end{proof}

\subsection{Block diagonal MVMM}

\begin{proof} of Proposition \ref{prop:log_lasso_heuristic}
Let $f(x) = - a \log(x + \epsilon) + b x$. 
Then $f(x)$ is strictly convex on $(-\epsilon, \infty)$. 
Setting $ 0 = f'(x) = \frac{-1}{x + \epsilon} + b$ leaves us with $x = \frac{a}{b} - \epsilon$ for the unique stationary point of $f(x)$.
If $\frac{a}{b} - \epsilon > 0$, this must be the minimizer of \eqref{prob:meowmoewmoew}.
Otherwise, $x=0$ must be the minimizer.
\end{proof}

\bibliographystyle{apalike}
\bibliography{refs}

\end{document}